\newcommand{\quan}[1]{\refstepcounter{note}$\ll${\sf Quan's
		Comment~\thenote:} {\sf \textcolor{red}{#1}}$\gg$\marginpar{\tiny\bf XQ~\thenote}}
\newcommand{\yuetodo}[1]{{\large\color{green}[Yue todo: #1]}}
\definecolor{yue}{rgb}{0.7, 0, 0}
\renewcommand{\yuetodo}[1]{}
\newcommand{\mcal}[1]{\ensuremath{\mathcal {#1}}}
\definecolor{darkgreen}{rgb}{0,0.5,0}
\definecolor{lightblue}{RGB}{0,176,240}
\definecolor{darkblue}{RGB}{0,112,192}
\definecolor{lightpurple}{RGB}{124, 66, 168}
\definecolor{grey}{RGB}{139, 137, 137}
\definecolor{maroon}{RGB}{178, 34, 34}
\definecolor{green}{RGB}{34, 139, 34}
\definecolor{types}{RGB}{72, 61, 139}
\definecolor{gold}{rgb}{0.8, 0.33, 0.0}
\definecolor{darkgray}{gray}{0.3}
\definecolor{darkred}{rgb}{0.5, 0, 0}
\definecolor{darkgreen}{rgb}{0, 0.5, 0}
\definecolor{darkblue}{rgb}{0,0,0.5}
\newcommand\markx[2]{}
\newcommand{\R}{\mathbb{R}}
\newcommand{\aux}{{\sf aux}}
\newcommand{\negl}{{\sf negl}}
\newcommand{\ignore}[1]{}
\newcounter{task}
\newtheorem{theorem}{Theorem}[section]
\newtheorem{corollary}[theorem]{Corollary}
\newtheorem{fact}[theorem]{Fact}
\newtheorem{lemma}[theorem]{Lemma}
\newtheorem{claim}[theorem]{Claim}
\newtheorem{definition}[theorem]{Definition}
\theoremstyle{remark}
\newtheorem{remark}[theorem]{Remark}
\newcounter{cnt:challenge}
\newcommand*\samethanks[1][\value{footnote}]{\footnotemark[#1]}
\newcommand{\E}{\mathbf{E}}
\newcommand{\all}{\mathcal N}
\renewcommand{\Pr}[1]{\mathbf{Pr} \left[ {#1} \right]}
\newcommand{\phii}{\phi_{i^*}}
\newcommand{\Adv}{\mathsf{Adv}}
\begin{document}

\title{Game-Theoretically Secure Distributed Protocols for Fair Allocation in Coalitional Games}

\author{T-H. Hubert Chan\thanks{Department of Computer Science, the University of Hong Kong.} \and Qipeng Kuang\samethanks \and Quan Xue\samethanks}
\date{}

\begin{titlepage}
    \maketitle

    \begin{abstract}
We consider game-theoretically secure distributed protocols for coalition games that approximate the Shapley value with small multiplicative error. Since all known existing approximation algorithms for the Shapley value are randomized, it is a challenge to design efficient distributed protocols among mutually distrusted players when there is no central authority to generate unbiased randomness. The game-theoretic notion of maximin security has been proposed to offer guarantees to an honest player's reward even if all other players are susceptible to an adversary.

Permutation sampling is often used in approximation algorithms for the Shapley value. A previous work in 1994 by Zlotkin et al. proposed a simple constant-round distributed permutation generation protocol based on commitment scheme, but it is vulnerable to rushing attacks. The protocol, however, can detect such attacks.

In this work, we model the limited resources of an adversary by a violation budget that determines how many times it can perform such detectable attacks. Therefore, by repeating the number of permutation samples, an honest player's reward can be guaranteed to be close to its Shapley value. We explore both high probability and expected maximin security. We obtain an upper bound on the number of permutation samples for high probability maximin security, even with an unknown violation budget. Furthermore, we establish a matching lower bound for the weaker notion of expected maximin security in specific permutation generation protocols. We have also performed experiments on both synthetic and real data to empirically verify our results.

\end{abstract} 

\noindent \textbf{Keywords:} secure distributed protocols, coalitional games, Shapley value

    \thispagestyle{empty}
\end{titlepage}

\section{Introduction}

A coalitional game (\emph{aka} cooperative game) involves $n$ players in the set
$\all$, and the utility function $v: 2^\all \rightarrow \R_+$ determines the reward $v(S)$ obtained by a coalition $S \subseteq \all$ when players cooperate together.
An extensively researched problem in this context is the fair allocation of the reward among the $n$ players.

\ignore{
In this work, we focus on the special case of a monotone game, where the utility function is monotone,
to receive the total reward $v(\all)$.  A well-studied problem
is how to allocate this reward ``fairly'' among the
$n$ players.

In this work, we focus on the special case of a \emph{supermodular} (\emph{aka convex}) game~\cite{sv_core}, where the utility function is \emph{supermodular}, in which the best strategy
is for all players to cooperate together to form the \emph{grand coalition}
to receive the total reward $v(\all)$.  A well-studied problem
is how to allocate this reward ``fairly'' among the
$n$ players.}

The \emph{Shapley value} is an allocation concept in coalitional games,
which has many applications and can provide a fair and consistent framework to assess and distribute rewards among different agents, leading to a more equitable, sustainable, and inclusive system. For instance, it can be used to develop an axiomatic framework for attribution in online advertising~\cite{DBLP:conf/www/BesbesDGIS19}, measure the significance of nodes in the context of network centrality~\cite{DBLP:conf/www/0013T17} and develop a pricing scheme of personal recommendations~\cite{DBLP:conf/www/DuttingHW10}.
Computing the exact Shapley value is \#P-hard~\cite{sv_hardness}, so approximations are used in practice. We focus on the notion of $\epsilon$-\emph{multiplicative error}, where each player receives at least $(1-\epsilon)$ fraction of its Shapley value, with $\epsilon > 0$ as a constant.

All known efficient algorithms for approximating the Shapley value require randomness,
and one approach involves generating uniformly random permutations of all players.
However,  this study focuses on the \emph{distributed setting} where no \emph{trusted authority} provides randomness. Each player has its own local source of randomness. The objective is to design an efficient \emph{protocol} for allocating rewards close to the Shapley value with a small multiplicative error, while accounting for players' incentives to deviate.



For distributed protocols, the game-theoretic concept of \emph{maximin security}~\cite{maximin} is introduced to offer security guarantees from the perspective of an \emph{honest} player,
where all other players may be \emph{susceptible} to an \emph{adversary}.
For coalitional games, the security guarantee can be (i) the expected reward received by a player has $\epsilon$-multiplicative error, or (ii) the reward achieves this multiplicative error with high probability.

One previous attempt has been made to design a secure distributed protocol for returning an allocation in a coalitional game. Zlotkin et al.~\cite{sv_protocol} proposed $\mathsf{NaivePerm}$,
in which players commit to uniformly random permutations of all players in the first round, opening the commitments in the second round. The composition of all the opened permutations determines each player's marginal contribution.  Provided that all
players must send their messages in each round at the same time, the expected reward
of a player is its Shapley value.

However, if the adversary is \emph{rushing} (i.e., in each round, it can
act after seeing an honest player's message),
then in some games, by instructing at most one susceptible player to
abort, it can 
create a permutation such that
the honest player get its smallest possible reward.
Nevertheless, such an attack is limited to detected violations. This motivates the concept of \emph{violation budget} that
quantifies the number of violations that the adversary can make.
%


We also adopt the random permutation approach, and analyze the sampling complexity of random permutations, which we call \emph{P-samples}.
Our idea is that the number of P-samples used in a protocol should depend on two factors: (1) ensuring a small multiplicative error with high probability and (2) mitigating bias from adversaries due to detected violations.


%

\ignore{
certain criteria in the protocol that are essential for ensuring the security. For the approximation protocols in the Monte-Carlo manner, the number of permutations and the fraction of bad permutations with detected violation determine the execution of the protocol, thus it is natural to establish upper and lower bounds on these factors to achieve the maximin security. Note that the method is inherently based on sampling, making it a form of the sampling complexity, which is challenging to estimate in a perilous environment filled with corrupted players by the adversary.
}

In this paper,
we design maximin secure distributed protocols
that return allocations that approximate the Shapley value.
Moreover, we analyze upper and lower bounds on the sampling
complexity under different violation budget models.
Furthermore, our theoretical results are verified empirically
by performing experiments on both real and synthetic data.

\subsection{Our Contributions}

\noindent \textbf{Adversarial Model.}
We assess protocol security from an honest player's perspective, assuming other players may be susceptible and controlled by the adversary. The adversary is computationally bounded and cannot break cryptographic primitives. We assume an ideal commitment scheme to simplify presentation. Note that the commitment scheme is the only place we limit the computation power of the adversary; otherwise, the adversary may even perform computation exponential in the number~$n$ of players. The adversary is \emph{rushing} and can wait for an honest player to reveal their commitment before deciding on deviations.

This simplification allows us to pay attention only to deviations from the protocol that can be \emph{detected}.
For example, in Blum's protocol~\cite{tele_coin_flipping}, two players commit to random bits and reveal them in subsequent rounds to determine a \emph{winner}. Deviations in bit sampling by a susceptible player have no effect on the honest player's winning probability, assuming that the adversary cannot break the commitment scheme. On the other hand, if a player fails to open its committed bit, this can be detected from the transcript and we call this a \emph{violation}.

\noindent \textbf{Revisiting $\mathsf{NaivePerm}$.}
To the best of our knowledge, only one previous attempt has been made to design a secure distributed protocol for returning an allocation in a coalitional game. Zlotkin et al.~\cite{sv_protocol} proposed $\mathsf{NaivePerm}$, a random permutation protocol similar to Blum's protocol~\cite{tele_coin_flipping}. In $\mathsf{NaivePerm}$, players commit to uniformly random permutations of all players in the first round, opening the commitments in the second round. The resulting permutation determines each player's marginal contribution.

A rushing adversary can instruct at most one susceptible player to deviate from $\mathsf{NaivePerm}$, creating the permutation such that the honest player get its smallest possible reward. However, the adversary's ability to harm the honest player is limited to detectable deviations or violations. This motivates quantifying the adversary's power based on the number of allowed protocol violations.

\noindent \textbf{Violation Budget.}
The adversary's ability to modify a susceptible player's action is limited by a \emph{violation budget} $C$. This budget represents the maximum number of allowed protocol violations.


\noindent\textbf{Expected vs. High Probability Maximin Security.}
For $0 \leq \epsilon < 1$,
$\epsilon$-expected maximin security of a protocol means
that the expected reward received by an honest player
from the protocol is at least $(1 - \epsilon)$-fraction
of its Shapley value. For $0 < \epsilon, \delta < 1$,
$(\epsilon, \delta)$-maximin security
means that for an honest player,
with probability at least $1 - \delta$,
its received reward is
at least $(1 - \epsilon)$-fraction
of its Shapley value.
Observe that $(\epsilon, \delta)$-maximin security is stronger, as it
implies $(\epsilon + \delta)$-expected maximin security.

\noindent\textbf{P-Samples vs. Round Complexity.}
We focus on distributed protocols using the permutation sampling paradigm outlined in Algorithm~\ref{MainProtocol}. Each permutation sample (P-sample) produces an allocation vector, and the protocol returns the average allocation vector over all P-samples. Since our upper bounds on P-samples can be achieved by the $\mathsf{NaivePerm}$ protocol in $O(1)$ rounds, the round complexity has the same asymptotic upper bounds.

\noindent \emph{Technical Challenges.} Even though the
distributed protocol paradigm with P-samples is simple
and we limit the power of the adversary with a violation budget,
there are still some technical issues we need to resolve.

\begin{compactitem}
\item \textbf{Dependence between P-samples.}
For the
simplest case of known violation budget $C = 1$,
suppose the goal is to achieve $\epsilon$-expected maximin security
using $\mathsf{NaivePerm}$ to generate random permutations.
Then, how many P-samples are sufficient?  If the protocol uses $\Theta(\frac{1}{\epsilon})$ P-samples,
then the adversary can interfere with at most $\epsilon$ fraction
of the P-samples, because the adversary may cause violation at most once.
At first glance, one may conclude that at least $(1 - \epsilon)$-fraction
of P-samples without interference can achieve $\epsilon$-maximin security.
However, since the adversary is rushing,
it may decide to interfere only when it can do the most damage.
Therefore, the P-samples without interference cannot be regarded
as independently identically distributed in the analysis.  As we shall later see,
the correct answer is actually $\Theta(\frac{n}{\epsilon})$ number of P-samples.

\item \textbf{Unknown Violation Budget.}  As aforementioned, the adversary is rushing
and can decide whether to interfere with some P-sample after
it knows what reward is received by an honest player.
Hence, even when there is a large fraction of P-samples without detected
interference from the adversary,
it might still be the case that the resulting reward
for an honest player is small due to the inherent randomness of the sampling process.
Since the adversary may have an arbitrarily large violation budget that is unknown,
a careful argument is needed to analyze the stopping condition of the protocol.

\end{compactitem}

We introduce the parameter $\Gamma$ which is related to the sampling complexity. The detailed definition of $\Gamma$ is provided in Section~\ref{sec:prelim}.

As a warmup, we first give a protocol that achieves
high probability maximin security
when the violation budget~$C$ is known upfront.
Observe that in this case,
actually it is not important whether a violation can be detected.

\begin{theorem}[Known Budget (Informal)]
\label{UB_known_informal}
For any $0 < \epsilon, \delta < 1$,
when the violation budget~$C$ is known,
there exists a protocol that achieves $(\epsilon, \delta)$-maximin security
and uses at most the following number of P-samples:

$$
O\left(\frac{\Gamma}{\epsilon^2} \ln \frac{1}{\delta} +  \frac{C \Gamma}{\epsilon}\right).
$$
\end{theorem}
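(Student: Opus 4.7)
The plan is to run a fixed total of $T = T_1 + T_2$ independent P-samples via the $\mathsf{NaivePerm}$ subroutine, where $T_1 = \Theta(\frac{\Gamma}{\epsilon^2}\ln\frac{1}{\delta})$ handles the statistical sampling error and $T_2 = \Theta(\frac{C\Gamma}{\epsilon})$ absorbs the at most $C$ violations the adversary can afford. The protocol's output for the honest player~$i$ is the average over the $T$ samples of the marginal contributions assigned to~$i$. Because $C$ is known, $T$ is fixed in advance and no adaptive stopping is needed, which is the main reason this warmup is considerably cleaner than the unknown-budget case.

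For each sample $j\in\{1,\ldots,T\}$, let $X_j$ denote the marginal contribution assigned to~$i$. The structural fact I invoke about $\mathsf{NaivePerm}$ is that, so long as~$i$ commits to a uniformly random permutation and actually opens the commitment, the composition of all opened permutations is itself uniform on the symmetric group, regardless of the adversary's own commitments (which are independent of~$i$'s by the hiding property). Hence on every sample $j$ where no violation is triggered, $X_j$ is distributed exactly as the marginal contribution of~$i$ under a uniformly random permutation, so $\mathbb{E}[X_j] = \phi_i$ and $X_j \in [0, M]$ for the natural upper bound~$M$ induced by the utility function; the parameter $\Gamma$ will be chosen so that both the concentration and bias terms below are controlled by the same $\Gamma$ (essentially $\Gamma \geq M/\phi_i$ up to constants).

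Let $V_j \in \{0,1\}$ indicate a violation on sample~$j$, so that $\sum_j V_j \leq C$; since the adversary is rushing, $V_j$ may depend adaptively on $X_j$. Assuming worst case that a violated sample contributes~$0$ to~$i$, the honest player's average reward is at least
\[
\frac{1}{T}\sum_{j=1}^T (1-V_j)\,X_j \;\geq\; \frac{1}{T}\sum_{j=1}^T X_j \;-\; \frac{CM}{T},
\]
where the second inequality uses only the deterministic worst-case bound $\sum_j V_j X_j \leq C \cdot M$ and hence requires no independence between the $V_j$'s and the $X_j$'s. A multiplicative Chernoff bound applied to the i.i.d.\ sum $\sum_j X_j$ yields $\frac{1}{T}\sum_j X_j \geq (1-\epsilon/2)\phi_i$ with probability at least $1-\delta$ once $T_1 = \Theta(\frac{\Gamma}{\epsilon^2}\ln\frac{1}{\delta})$, and choosing $T_2 = \Theta(\frac{C\Gamma}{\epsilon})$ forces $\frac{CM}{T} \leq \frac{\epsilon}{2}\phi_i$; combining the two gives reward at least $(1-\epsilon)\phi_i$ with probability at least $1-\delta$.

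The subtlest step I expect is justifying the decomposition above as a genuine i.i.d.\ sum: the adversary is rushing and can adaptively select which samples to violate, but because commitments are hiding in round~1, the honest player's uniform permutation is independent of all adversarial commitments, so the underlying values $X_j$ are i.i.d.\ marginal contributions irrespective of the adaptive choice of violations. Once this independence is pinned down, both the Chernoff bound on $\sum_j X_j$ and the worst-case bound $\sum_j V_j X_j \leq CM$ go through without further subtlety, and the remainder is a routine balancing of the two terms that make up~$T$.
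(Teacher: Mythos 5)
Your proposal is correct and follows essentially the same route as the paper's proof of the formal version (Theorem~\ref{UB_of_level2}): decompose each sample's reward into the counterfactual no-violation value (your $X_j$, the paper's $Y_j$) plus a damage term, apply Chernoff to the i.i.d.\ counterfactual sum to handle the $\frac{\Gamma}{\epsilon^2}\ln\frac{1}{\delta}$ term, and bound the total damage deterministically by $C\,U_{\max,i^*} \leq C\Gamma\phi_{i^*}$ to handle the $\frac{C\Gamma}{\epsilon}$ term. The only cosmetic difference is that you take the sum $T_1+T_2$ of the two sample counts where the paper takes their maximum, which is asymptotically identical.
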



For the case of unknown violation budget,
it is crucial that $\mathsf{NaivePerm}$ can detect violations in order
to achieve the following result,
where the bound is slightly worse than Theorem~\ref{UB_known_informal}.

\begin{theorem}[Unknown Budget (Informal)]
\label{UB_unknown_informal}
For any $0 < \epsilon, \delta < 1$,
when the violation budget~$C$ is unknown (but finite),
there exists a protocol that achieves $(\epsilon, \delta)$-maximin security
such that the protocol always uses
at most the following number of P-samples:


$$O\left(\frac{\Gamma}{\epsilon^2} \ln \frac{\Gamma}{\epsilon \delta} +  \frac{C \Gamma}{\epsilon}\right).$$
\end{theorem}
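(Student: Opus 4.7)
The plan is to design an adaptive protocol built on the $\mathsf{NaivePerm}$-based P-sampling paradigm whose stopping rule depends on the running count of detected violations, and to analyze its sample complexity by combining the known-budget concentration argument with a time-uniform union bound. Concretely, I would define a threshold function $N(c) = \Theta\bigl(\frac{\Gamma}{\epsilon^2}\ln\frac{\Gamma}{\epsilon\delta} + \frac{c\Gamma}{\epsilon}\bigr)$, inspired by Theorem~\ref{UB_known_informal}, and run P-samples one at a time, maintaining the running total sample count $m$ and the detected-violation count $c$. The protocol halts as soon as $m \geq N(c)$ and returns the average allocation over the $m - c$ non-violated P-samples. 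Because $N(c)$ grows linearly in $c$ while $c$ is incremented only by detected violations, the halting condition is reached for any finite $C$, and the total sample count at halting is at most $N(C)$, matching the bound in the theorem.

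For correctness, I would reuse the two-step structure from the known-budget setting. Step~(i): the cumulative reward over non-violated P-samples concentrates around $(m - c)\mu$, where $\mu$ is the honest player's Shapley value. Because the rushing adversary decides whether to violate only after observing each sample's outcome, the surviving per-sample rewards are not i.i.d.\ draws, so Hoeffding must be replaced by a martingale concentration inequality (Azuma or Freedman) over the filtration generated by the protocol transcript. Step~(ii): the per-violation bias is bounded by reusing the adversarial-removal estimate from the known-budget analysis, which contributes the additive $\frac{c\Gamma}{\epsilon}$ term inside $N(c)$. The additional $\ln\frac{\Gamma}{\epsilon\delta}$ factor relative to Theorem~\ref{UB_known_informal} comes from upgrading the fixed-horizon concentration into an anytime bound: since the stopping time is adaptive and $C$ is unknown upfront, the security guarantee must hold simultaneously at every pair $(m, c)$ the protocol could reach, which is handled by a union bound with a geometrically decaying failure budget $\delta_c = \Theta(\delta/(c+1)^2)$ summed over all $c \geq 0$.

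The main obstacle will be handling this anytime-valid concentration under adaptive adversarial censoring: the rushing adversary's selective removal of up to $C$ high-reward outcomes is coupled with an adaptive stopping time, so neither the clean samples nor the halting event is independent of the adversary's choices. I expect to resolve this by casting the cumulative clean-sample reward as a bounded-difference martingale, applying a time-uniform concentration inequality, and separating the sampling noise term (first contribution in $N(c)$) from the additive bias contributed by violations (second contribution in $N(c)$). A secondary subtlety is establishing almost-sure termination with a sample count bounded in terms of the realized $C$, which follows because $N(c)$ grows only linearly in $c$ while the number of detected violations is at most $C$, so the halting condition becomes inevitable after $O(N(C))$ samples.
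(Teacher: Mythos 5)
Your overall plan coincides with the paper's: an adaptive stopping rule driven by the running count of detected violations, a decomposition of the analysis into (a) concentration of the ``clean'' rewards and (b) a deterministic bound of $c\,U_{\max,i^*} = c\Gamma\phi_{i^*}$ on the total adversarial damage, and a union bound over the possible stopping points that produces the extra logarithmic factor. The paper phrases the stopping rule as ``the fraction of violated P-samples is at most $\frac{\epsilon}{2\Gamma}$ and the count has reached $R_0$,'' which is your $m \geq N(c)$ up to constants, and it union-bounds over the termination time $R \geq R_0$ rather than over $c$; these are equivalent.

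One step of your write-up would fail as literally stated, and fixing it removes the need for any martingale machinery. In Step~(i) you claim that the cumulative reward over the \emph{non-violated} P-samples concentrates around $(m-c)\mu$ via Azuma/Freedman. It does not: the rushing adversary censors a sample \emph{after} seeing its realized reward, so it can target exactly the samples on which the honest player would have earned close to $U_{\max,i^*}$; the surviving sum is then biased below $(m-c)\mu$ by up to $c\,(U_{\max,i^*}-\mu)$, and no concentration inequality can recover this, because the conditional means of the censored increments are genuinely shifted rather than merely dependent. The paper's fix (its reward-decomposition remark) is to couple each P-sample to the hypothetical no-violation execution: write $X_j = Y_j - Z_j$, where $Y_j$ is the reward the honest player would have received had no violation occurred in sample~$j$. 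The $Y_j$ are genuinely i.i.d.\ with mean $\phi_{i^*}$ (by the security of $\mathsf{NaivePerm}$ and the honest player's fresh randomness each round), so for each \emph{fixed} horizon $R$ the plain Chernoff bound applies to $\sum_{j\le R} Y_j$, and the entire adversarial influence is confined to $\sum_j Z_j \le c\,U_{\max,i^*}$, which the stopping rule bounds deterministically. Your Step~(ii) and your ``main obstacle'' paragraph show you already intend this two-term separation; the concentration statement just has to be made about the uncensored $Y_j$'s (the paper averages over all $m$ samples rather than the $m-c$ survivors, which makes the decomposition exact, though averaging over survivors also goes through once the analysis is routed via the uncensored sum). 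A secondary quantitative point: pre-allocating failure budget $\delta/(c+1)^2$ forces $N(c) \gtrsim \frac{\Gamma}{\epsilon^2}\ln(c+1)$, which is not dominated by $\frac{c\Gamma}{\epsilon}$ when $c\epsilon$ is small; it is cleaner to sum the raw Chernoff tails $\exp(-\Theta(\epsilon^2 N(c)/\Gamma))$ over the stopping points, which form a geometric series and yield exactly the stated $\ln\frac{\Gamma}{\epsilon\delta}$ overhead.
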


\noindent\textbf{Sampling Complexity for Expected Maximin Security.}
In Theorems~\ref{UB_known_informal}
and~\ref{UB_unknown_informal},
the term $O(\frac{\Gamma}{\epsilon^2} \ln \frac{1}{\delta})$
in the number of samples is a result of measure concentration analysis
to achieve high probability statements.

To better understand the other term $O(\frac{C\Gamma}{\epsilon})$,
we investigate the weaker notion of expected maximin security,
and also consider the simpler case where the violation budget~$C$ is known.
The interesting result
is that under this simpler setting,
the number of P-samples is still $\Theta(\frac{C\Gamma}{\epsilon})$
using the random permutation approach.



As a baseline, we analyze the
upper bound of the sampling complexity to achieve expected maximin security using $\mathsf{NaivePerm}$.

\begin{theorem}[Upper Bound (Informal)]
\label{UB_informal_exp}
For any $\epsilon > 0$ with a known violation budget of~$C$,
$R = \frac{\Gamma C}{\epsilon}$ number of P-samples
is sufficient to achieve $\epsilon$-expected maximin security.
\end{theorem}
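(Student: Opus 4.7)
The plan is to lower-bound the honest player $i^*$'s expected reward by decomposing it into the clean per-sample value minus the damage inflicted by the adversary's $C$ violations. For each $r \in [R]$, let $X_r$ denote the marginal contribution that $i^*$ would receive in the $r$-th P-sample if that sample proceeded without violation. Because $\mathsf{NaivePerm}$ composes committed permutations and the honest player's own committed permutation is uniform and hidden by the ideal commitment, the composed permutation in any non-violated P-sample is uniform over all $n!$ orderings, regardless of how the adversary chose to commit in that sample. Hence each $X_r$ satisfies $E[X_r] = \phii$ by the definition of the Shapley value, and $0 \leq X_r \leq M_{i^*}$, where $M_{i^*}$ is $i^*$'s largest possible marginal contribution; the definition of $\Gamma$ in Section~\ref{sec:prelim} will ensure $M_{i^*}/\phii \leq \Gamma$.

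Next, I bound the adversary's damage. A violation at P-sample $r$ can in the worst case cause $i^*$'s received reward $Y_r$ to be $0$ rather than $X_r$, so the total damage across the $R$ samples is $\sum_r (X_r - Y_r) \leq \sum_{r \in V} X_r$, where $V$ is the set of violated samples and $|V| \leq C$. Since the adversary is rushing, it adaptively picks $V$ after observing each $X_r$ as it is produced; however, an online adversary is no stronger than the offline adversary that sees all of $X_1, \ldots, X_R$ in advance and violates the top-$C$ samples, so the expected damage is at most $C M_{i^*}$ via the deterministic bound $X_r \leq M_{i^*}$. Combining with $E\bigl[\sum_r X_r\bigr] = R \phii$ and dividing by $R$ yields an expected average reward of at least $\phii - C M_{i^*}/R = \phii - \epsilon M_{i^*}/\Gamma \geq (1-\epsilon) \phii$ for $R = \Gamma C/\epsilon$, which is exactly $\epsilon$-expected maximin security.

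The main obstacle is the online-to-offline reduction: one must carefully argue that replacing the adaptive rushing adversary with an all-seeing offline adversary can only decrease the honest player's reward, which is what permits us to absorb all of the adversary's adaptivity into the clean deterministic bound $X_r \leq M_{i^*}$ without any concentration argument. A secondary point, handled by the hiding property of the ideal commitment, is that the adversary's committed permutations are independent of the honest player's uniformly random choice, so the composed permutation is genuinely uniform over all $n!$ orderings in every non-violated P-sample; this is what makes $E[X_r] = \phii$ hold regardless of the adversary's committed permutations. Note that we do not need independence across samples, only identical distribution, so linearity of expectation suffices to handle any adaptive correlation the adversary might induce across rounds.
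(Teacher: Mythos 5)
Your proposal is correct and follows essentially the same route as the paper's proof of Lemma~\ref{lemma:ub_naiveperm}: the same decomposition into a clean reward with expectation $\phii$ per sample (Remark~\ref{remark:decompose}) and an adversarial damage term bounded with probability~1 by $C \cdot U_{\max,i^*} \leq C\Gamma\phii$, followed by linearity of expectation. The "online-to-offline" concern you flag is not really an obstacle, since the pointwise bound $\sum_j Z_j \leq C\,U_{\max,i^*}$ already holds for every adversary strategy and every outcome, which is exactly how the paper dispenses with adaptivity.
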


Since $\mathsf{NaivePerm}$ seems to have very weak security properties,
we try to consider more ``secure'' permutation generation protocols in the context of supermodular games. See section~\ref{sec:prelim} for the definition. In supermodular games, the reward of a player is monotone to its rank in a permutation.
Inspired by Blum's protocol for coin flipping,
we consider $\mathsf{SeqPerm}$,
which puts players sequentially in each position of the permutation,
starting from the least preferable position to the most preferable one.
This permutation generation protocol has better guarantees in the sense that
an honest player will be in one of the top $i$ positions with probability at least $\frac{i}{n}$.
However, even though $\mathsf{SeqPerm}$ seems more secure,
we show that it does not bring a significant improvement
in the sampling complexity. 

\begin{theorem}[Lower Bound (Informal)]
\label{LB_informal}
For any sufficiently large enough $n$, there exists a game with $n$ players such that
for any sufficiently small enough constant $\epsilon > 0$,
if the adversary has a violation budget of $1 \leq C \leq \epsilon \Gamma$,
then at least $\Omega(\frac{C\Gamma}{\epsilon})$ number of P-samples generated by $\mathsf{SeqPerm}$
is necessary to achieve $\epsilon$-expected maximin security.
\end{theorem}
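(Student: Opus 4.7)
The plan is to construct an explicit supermodular game where the entire reward concentrates on the last player of the permutation, and then exhibit an adversarial strategy that converts each unit of violation budget into a substantial fraction of an honest player's Shapley value against $\mathsf{SeqPerm}$. Define the ``last-player-takes-all'' game by $v(S)=0$ for $|S|<n$ and $v(\all)=n$. This is supermodular, because the marginal contribution of any player~$i$ to a set~$S$ equals $n$ when $|S|=n-1$ and $i\notin S$ (and is~$0$ otherwise), which is monotone nondecreasing in~$S$. Each player's Shapley value is exactly~$1$, while a single permutation yields reward~$n$ to exactly one player and $0$ to everyone else, so the concentration parameter from Section~\ref{sec:prelim} satisfies $\Gamma = \Theta(n)$.

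Fix the honest player~$i^\ast$; the rushing adversary controls the remaining $n-1$ players and has a total budget of $C$ detectable violations across all P-samples. In each P-sample, $\mathsf{SeqPerm}$ fills positions $1,2,\ldots,n$ sequentially from the least preferable to the most preferable, so by the time position~$n$ is being decided only one candidate remains. The adversary does nothing in samples where~$i^\ast$ is placed in some position $k<n$ (those contribute~$0$ to $i^\ast$ already). If, however, the protocol is about to commit~$i^\ast$ to position~$n$, the adversary instructs a susceptible player to withhold its opening: this spends one unit of budget but, via the standard fallback for detected deviations, prevents~$i^\ast$ from receiving reward~$n$ in that sample. Let $X_j\in\{0,n\}$ be the unperturbed reward of~$i^\ast$ in the $j$-th P-sample, so $\Pr{X_j=n}\ge \tfrac{1}{n}$ by the $\mathsf{SeqPerm}$ guarantee stated in the excerpt, and let $V_j\in\{0,1\}$ indicate a violation with $\sum_j V_j \le C$. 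The optimal online adversary sets $V_j=1$ precisely when $X_j=n$ and budget remains. Provided $R \ge 2Cn$---which holds in the target regime since $C \le \epsilon \Gamma$ and one may take $R = c\,\tfrac{Cn}{\epsilon}$ with $c<1$ a small enough constant---a Chernoff bound shows that with high probability the adversary spends its entire budget on samples with $X_j=n$, yielding $\E[\sum_j V_j X_j] = Cn - o(Cn)$. The expected per-sample reward of~$i^\ast$ is therefore at most $1 - \tfrac{Cn}{R} + o(1)$, and demanding $(1-\epsilon)$-expected maximin security forces $R = \Omega(\tfrac{Cn}{\epsilon}) = \Omega(\tfrac{C\Gamma}{\epsilon})$.

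The main obstacle I anticipate is formalizing that $\mathsf{SeqPerm}$ allows the rushing adversary to recognize, in real time, that~$i^\ast$ is about to be assigned to position~$n$, and that a single withheld opening suffices to block that specific assignment while costing exactly one unit of budget. This entails opening up the commit-reveal subroutine used at each sequential step of $\mathsf{SeqPerm}$ and isolating a deviation that is simultaneously outcome-biasing and budget-cheap; the one-position-at-a-time structure of $\mathsf{SeqPerm}$ makes this possible in principle, but the argument must be executed carefully so that the adversary neither wastes budget nor triggers multiple violations per sample. A secondary (but routine) concern is handling the low-probability event that fewer than~$C$ samples produce $X_j=n$ before the protocol terminates; this is dispatched by the measure-concentration bound alluded to above, using $R = \Omega(\tfrac{Cn}{\epsilon})$ together with the hypothesis $C \le \epsilon \Gamma$.
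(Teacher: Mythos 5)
There is a fatal gap in your construction: the ``last-player-takes-all'' game is precisely the kind of game against which $\mathsf{SeqPerm}$ is \emph{perfectly} secure, so no adversary (with any violation budget) can push the honest player's expected reward below $\phi_{i^*}$ in it. The mechanism you rely on --- ``the adversary instructs a susceptible player to withhold its opening \ldots\ [which] prevents~$i^*$ from receiving reward~$n$'' --- does not exist in the protocol. In $\mathsf{RandElim}$, a detected violator is itself the player that gets eliminated and sent to the least preferable available position; aborting therefore only removes \emph{susceptible} players from the active set and can never displace~$i^*$ from contention for the top slot. This is exactly the content of Claim~\ref{claim:prob_elim} and Claim~\ref{claim:seqperm}: an honest player is eliminated in any round with probability at most $\frac{1}{|S|}$ and hence lands in the single most preferable position with probability at least $\frac{1}{n}$, \emph{even against a Byzantine adversary that controls all other players and violates at will}. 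Since in your game the reward of~$i^*$ is $n \cdot \mathbf{1}[\sigma(i^*)=n]$, its expected per-sample reward is at least $n\cdot\frac1n = 1 = \phi_{i^*}$ no matter what the adversary does, and no lower bound on $R$ follows. The obstacle you flagged as ``the main obstacle I anticipate'' is not a technicality to be executed carefully; it is the reason the approach fails.

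The paper's proof of Theorem~\ref{LB_of_Level1} circumvents this by choosing a game in which $i^*$'s reward depends on the \emph{relative} order of $i^*$ and a designated set $Q$ of other players (reward $\alpha=\Theta(n)$ if $i^*$ is top, or if $i^*$ is second behind some $q\in Q$). The adversary then attacks by \emph{sacrificing} a $Q$-player: when the active set shrinks to $\{i^*,q,y\}$ with $q\in Q$, $y\notin Q$, and $y$ is about to be eliminated, the adversary instructs $q$ to abort, so $q$ (rather than $y$) drops to the low position and $i^*$ must now beat $y$ for the top slot instead of being guaranteed a reward in either outcome against $q$. Each such opportunity occurs with probability $\Omega(\frac1n)$ per P-sample and inflicts expected damage $\frac{\alpha}{2}=\Theta(n)\cdot\phi_{i^*}$ at the cost of one budget unit, which --- amortized over $C$ blocks of $\Theta(\frac{n}{\epsilon})$ samples each --- yields the $\Omega(\frac{C\Gamma}{\epsilon})$ bound. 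If you want to salvage your write-up, you must replace your game with one of this ``position-trading'' type; any game in which $i^*$'s reward is a function of its own rank alone is protected by Claim~\ref{claim:seqperm} and cannot witness the lower bound.
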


Even though the lower bound on the sampling complexity
in Theorem~\ref{LB_informal} is stated for $\mathsf{SeqPerm}$,
in Remark~\ref{remark:LB}, we
will give the general conditions for a permutation generation protocol
under which this lower bound holds.


\subsection{Related Work}
Shapley value~\cite{sv_origin} satisfies symmetry, group rationality and additivity, and has become the \emph{de facto} fairness notion.
Shapley value also
has applications areas such as evaluating the importance of data sources in a learning model~\cite{sv_by_diff_matrix} and assessing the contribution of data to database queries~\cite{query1,query2,query3}.

However, computing the Shapley value is intractable. Deng et al.~\cite{sv_hardness} proved that calculating the Shapley value for weighted voting games is \#P-complete.
Elkind et al.~\cite{wvg_hardness} proved that it is NP-hard to decide whether the Shapley value of a certain player is 0 in the weighted voting game;
hence, it is NP-hard to achieve constant multiplicative error for general games.
Bachrach et al.~\cite{ss_index_hardness} proved the impossibility of achieving super-polynomially small additive error with high probability
if the randomized algorithm only samples the utility function
for a polynomial number of times.

Supermodular coalitional game models the situation in which
it is more beneficial for players to collaborate together,
an example of which is multicast cost sharing~\cite{convex_game_1}. Games not having supermodular utility functions
may still be reduced to supermodular games; see~\cite{convex_game_2, convex_game_3, sv_hardness} for instances.
However, computing the Shapley value exactly is still hard for supermodular games.
In fact, Liben-Nowell et al.~\cite[Theorem 6]{sv_convex}
showed that any randomized approach will still need
at least $\Omega(\frac{1}{n \epsilon})$ oracle accesses to the utility function
to achieve $\epsilon$-multiplicative error.
Similar to the result in~\cite{ss_index_hardness}, this means that a super-polynomially small $\epsilon$ would require
a super-polynomially number of accesses to the utility function.
Moreover, they also showed~\cite[Theorem 5]{sv_convex} that any deterministic
algorithm that achieves a multiplicative error of at most $\epsilon = \frac{1}{2n}$
must make at least a super-polynomial number of samples on the utility function.
In contrast, for this value of $\epsilon$,
a polynomial number of samples is sufficient for randomized algorithms.

Approximation approaches by sampling random permutations
have been well studied,
which is known as \emph{simple random sampling}~\cite{sv_by_srs, sv_protocol}.
More refined analysis with respect to the ranks in random permutations,
which is known as \emph{stratified sampling}~\cite{sv_stratify_range, sv_stratify_var},
has also been studied.  Finally,
by re-interpreting the Shapley value as the expectation
of a process that randomly samples subsets of players,
multilinear sampling methods~\cite{sv_by_multilinear}
have also been investigated. Other heuristic algorithms can be found in~\cite{sv_by_24methods}.

The sampling complexity of algorithms has also been extensively studied.
For general games, a simple application of the Hoeffding inequality can
give a sufficient number of permutation samples.  However, note that each permutation sample involves $\Theta(n)$ oracle accesses to the utility function,
but only two of which are relevant to a single player.
To avoid this extra factor of $\Omega(n)$ in the sampling complexity
on the utility function, Jia et al.~\cite{sv_by_diff_matrix}
has considered a different sampling procedure that first approximates
the differences of Shapley values between all pairs of players.

For the special case of supermodular games,
Liben-Nowell et al.~\cite{sv_convex} used the Chebyshev's inequality
to give an upper bound of the permutation sampling complexity
to achieve multiplicative approximation.  We show that
this can be easily improved by using the stronger Chernoff Bound.

Zlotkin et al.~\cite{sv_protocol} proposed
a distributed protocol (which we call $\mathsf{NaivePerm}$)
that samples one random permutation,
from which an allocation is derived.
As shown in Section~\ref{algorithm},
just using one sample is not sufficient under our adversarial model.

The game-theoretic notion of \emph{maximin security}
has been first proposed by Chung et al.~\cite{maximin}
to analyze how the reward for an honest player in a distributed protocol
can be protected in an environment where other players
may behave adversarially.
Subsequently, this notion has been applied to distributed protocols
for other well-known problems such as
\emph{leader election}~\cite{leader_election} and
\emph{ordinal random assignment problem}~\cite{DBLP:conf/acns/ChanWXX23}.
Both upper and lower bounds on the round complexity
have been analyzed in~\cite{leader_election}.

\ignore{
In contrast, our upper bounds on the round complexity
is derived from the upper bounds on P-samples,
and our lower bound is only for P-samples under
certain conditions on the permutation generation protocols.
}

\ignore{

\subsection{Paper Organization}

In Section~\ref{sec:prelim}, we define the formal
notation and clarify our settings.
In Section~\ref{algorithm}, we
describe the general distributed protocol using the
random permutation paradigm,
and revisit the permutation generation protocol $\mathsf{NaivePerm}$.
In Section~\ref{sec:high_prob},
we give upper bounds for the sampling complexity
of protocols that achieve high probability maximin security,
where the highlight is a protocol that does not need to
know the violation budget upfront.
In Section~\ref{sec:lowerbound},
we analyze both the upper and the lower bounds on the sampling complexity
to achieve expected maximin security.
Finally, in Section~\ref{sec:conclusion},
we conclude by suggesting some future research directions.
} 

\section{Preliminaries}
\label{sec:prelim}

\subsection{Coalitional Game}

A \emph{coalitional game} is characterized by $(\all, v)$,
where $\all$ is a set of $n = |\all|$ players and $v : 2^{\mathcal N} \to \mathbb R_+$ is a utility function indicating that a coalition $S$ of players can obtain a revenue of $v(S)$.

\noindent \textbf{Marginal Contribution.}
The \emph{marginal contribution} of player $i$
joining a subset $S \not\ni i$
is:
$\mu_i(S) := v(S \cup \{i\})-v(S)$. Observe that $\mu_i$ is defined with respect to $v$.
In this work, 
we assume that the utility function~$v$ is monotone,
i.e., the marginal contribution $\mu_i$ is always non-negative.

\ignore{
\begin{assumption}[Non-negative Monotone Game]
	\label{assume}
	We assume the function $v: 2^\all \rightarrow \R$ satisfies:
	\begin{itemize}
		\item \emph{Non-negative.} For $\forall S \subseteq \all$, $v(S) \ge 0$.
		\item \emph{Monotone.} For $\forall S \subseteq T \subseteq \all$, $v(S) \leq v(T)$. \quan{No need ``for"? }
	\end{itemize}
\end{assumption}
}
\ignore{
In this work, we use monotonicity in Assumption~\ref{assume}
via the following fact. \quan{Is this fact useful? Maybe delete this, but mention $\mu_i(S) \leq \mu_i(T)$ if $S \subseteq T$ in convex games.}
\begin{fact}[Non-negativity of Marginal Contribution]
\label{MMC}
If a game $v$ is monotone, then for every player $i$ and any subsets $S  \subseteq \all$ such that $i \notin S$,
\begin{equation}
  \mu_i(S) \geq 0.
\end{equation}
\end{fact}
}

\noindent \textbf{Dividing Up the Reward in the Grand Coalition.}
From monotonicity, we have for all $S \subseteq \all$, $v(S) \leq v(\all)$. An interesting question is how to divide up the received reward $v(\all)$ among
the players.
Given a game~$(\all, v)$, an \emph{allocation} is a (non-negative) vector $x \in \R_+^\all$
such that $\sum_{i \in \all} x_i = v(\all)$,
i.e., $x_i$ represents
the profit allocated to player~$i$.

\noindent \textbf{Shapley Value~\cite{sv_origin}.}
This is an allocation vector whose definition is recalled as follows.
For player~$i$, its Shapley value is:
\begin{equation}\label{sv}
  \begin{aligned}
  \phi_i &= \frac{1}{n} \sum_{S \subseteq \all \setminus \{i\}} \binom{n-1}{|S|}^{-1} \mu_i(S)
  = \frac{1}{n!} \sum_{\sigma} \mu_i(P_i(\sigma)),
  \end{aligned}
\end{equation}
where the last summation is over all $n!$ permutations $\sigma: [n] \to \all$ of players, and $P_i(\sigma)$ is the set of \emph{predecessors} of player~$i$ in the permutation $\sigma$.

\noindent \textbf{Random Permutation Interpretation.}
One way to interpret Equation~(\ref{sv}) is to distribute the total reward $v(\all)$ through a random process. This process involves sampling a uniformly random permutation (\emph{aka} P-sample) that determines the order in which players join the coalition one by one. The reward of a player is its marginal contribution when joining existing players already in the coalition. The Shapley value $\phi_i$ is exactly the expected reward of player~$i$ in this random process.  The following parameter will be used to analyze
the number of P-samples used in our protocols.


\noindent \textbf{Max-to-Mean Ratio~$\Gamma$.}
This parameter is determined by the utility function~$v$.
For player~$i$, let $U_{\max, i} := \max_{S \subseteq \all \backslash \{i\}} \mu_i(S)$ be its maximum marginal contribution;
its \textit{max-to-mean ratio} is
$\Gamma_i = \frac{U_{\max, i}}{\phi_i}$,
where we use the convention $\frac{0}{0} = 1$.
We denote $\Gamma := \max_{i \in \all} \Gamma_i$.

\noindent \textbf{Expectation vs High Probability.}
Typical measure concentration results, such as Chernoff Bound (Fact~\ref{fact:chernoff}), show that the number of samples needed to accurately estimate the mean increases linearly with the max-to-mean ratio~$\Gamma$.

\noindent \textbf{Source of Randomness.}
The above sampling methods assume that there is an unbiased source of randomness.
In this work, we are interested in the scenario where there is no central authority to generate the randomness.
Instead, randomness needs to be jointly generated from the players via distributed protocols.

\subsection{Distributed and Security Model}
\label{SecureSetting}

We clarify the model of distributed protocols that we will employ.

\noindent \textbf{Communication Model.} We consider a \emph{synchronized} communication model, in which each player can post messages to some broadcast channel (such as a ledger \cite{DBLP:conf/crypto/BadertscherMTZ17}) in \emph{rounds}. We assume that the \emph{Byzantine broadcast} problem has already been solved, i.e., a message posted by any player (even adversarial) in a round will be seen by all players at the end of that round.

\noindent \textbf{Distributed Protocol Model.}
A distributed protocol specifies the behavior of each player in $\all$, each of which has its own independent source of randomness.
\ignore{
In each round, each player reads posted messages on the channel from previous rounds, performs some local computation (possibly based on locally generated randomness) and posts new messages to the channel. The protocol \emph{terminates} when all players should stop sending messages. Typically, it is desirable that a protocol terminates in a polynomial number of rounds.

A distributed protocol also implicitly specifies an \emph{evaluator} which is some publicly agreed deterministic function that is applied to the whole transcript of messages after termination, and returns the \emph{output} of the protocol. However, for ease of exposition, the roles of a player and the evaluator are not explicitly distinguished in our protocol description.
In general, the computation that is necessary to generate a message is performed by a player, while other deterministic computation that can be carried out using the transcript is implicitly performed by the evaluator.

Observe that if some players violate the protocol, the protocol might not terminate, or the evaluator might return some default value~$\bot$.

In the description of a protocol, the roles of a player and the evaluator may not be explicitly distinguished.

For ease of exposition, we also describe a protocol as if the evaluator's computation is carried out during the protocol execution, but we note that these computations can actually be deferred when the protocol terminates.
}
In the context of a coalitional game, we assume 
oracle access to the utility function~$v$.

\ignore{
\noindent \emph{Complexity of the evaluator.}
It is desirable that the evaluator only needs
to perform polynomial-time computation.
As a trivial example, an uninteresting protocol may specify that the players
do not need to send any messages, and the evaluator just
uses an exponential number of oracle accesses to $v$ and computes the Shapley values itself.
}

\noindent \textbf{Adversarial Model.}
We analyze the protocol from the perspective of some honest player~$i^*$,
where all other players in $\all \setminus \{i^*\}$ may be \emph{susceptible} to the adversary. Specifically, for a susceptible player, the adversary can control its randomness, observe its internal state, and dictate its actions.
We assume that the adversary is \emph{rushing}, i.e., it can wait and see the messages from player~$i^*$ in a round before it decides the actions of the susceptible players in that round.

\ignore{
We also assume that the adversary is computationally bounded in the sense that it cannot break cryptographic primitives such as commitment schemes. As we shall see, this allows us to design protocols such that the only way an adversary can harm an honest player is to instruct susceptible players not to broadcast certain messages. This means that such a \emph{violation} from the protocol can be \emph{detected}. We use the term "violation" to refer to a deviation that can be detected by the evaluator from the transcript. On the other hand, deviating from the protocol's description in sampling randomness cannot be detected in our model and is therefore not considered a violation.
}

\noindent \textbf{Commitment Scheme.}
Assuming the existence of one-way functions/permutations, there is a constant-round publicly verifiable commitment scheme \cite{DBLP:journals/jacm/LinP15} that is perfectly correct, perfectly binding, and concurrent non-malleable.
The \emph{commit} operation allows a player to construct a \emph{commitment} $c$ of some secret message $m$. The \emph{open} operation allows the player to reveal the secret message~$m$ from previously committed~$c$, where perfectly binding means that it is impossible to open the commitment to any other message different from $m$. For simplicity, we assume that each of the commitment operation and the open operation can be performed in a single round.

\noindent \emph{Ideal Commitment Scheme.}
For ease of presentation,
we assume that the commitment scheme is also
\emph{perfectly hiding}.
This means that the adversary cannot learn anything about the secret message from
the commitment~$c$. Other than this restriction,
we allow the adversary to perform any other computation (even exponential in~$n$).

In a real-world scheme, the commitment is only \emph{computationally hiding}.
Formally, for some security parameter~$\lambda$,
this would introduce an extra multiplicative factor
$(1 - \negl(\lambda))$ in the reward of the honest player for some negligible function $\negl(\cdot)$.
However,  since we will be considering $(1 - \epsilon)$-approximation for constant $\epsilon > 0$,
this extra factor of $(1 - \negl(\lambda))$ may be absorbed into $\epsilon$.

\noindent \textbf{Protocol Violation.}
We shall see that because of the ideal commitment scheme,
the only way an adversary can harm an honest player is
to instruct a susceptible player to refuse opening some
previously committed message, where such a \emph{violation}
can be detected.  On the other hand, deviating from the protocol's description in sampling randomness cannot be detected in our model and is therefore not considered a violation.  Below are violation models that we consider.

\begin{itemize}
\item \emph{Violation Budget.} The violation budget~$C$
is the maximum number of violations that the adversary can make
throughout the protocol, where $C$ can either be known or unknown to the protocol.


\item \emph{Violation Rate.}
The violation rate~$f \in [0,1]$ means that
for every $T > 0$, during the generation of the first $T$ number of P-samples,
there can be at most $f T$ violations.
\end{itemize}

We will investigate the following notions of security for our protocols.

\begin{definition}[Maximin Security for Shapley Value]
\label{defn:maximin}
Suppose $\Pi$ is a (randomized) distributed protocol between players~$\all$
in some coalitional game $(\all, v)$, where there is at least one honest user.
The purpose of $\Pi$ is to return an allocation  that is close to the Shapley vector~$\phi$.
Suppose that when $\Pi$ is run against some adversary~$\Adv$, it always terminates
with some output allocation vector~$x \in \R_+^\all$, i.e., $\sum_{i \in \all} x_i = v(\all)$.

\begin{itemize}

\item \emph{Expectation.} We say that $\Pi$
is $\epsilon$-expected maximin secure against~$\Adv$
if, for any honest player~$i^*$,
its expected allocation satisfies $\E[x_{i^*}] \geq (1 - \epsilon) \cdot \phi_{i^*}$.

\item \emph{High probability.}
We say that $\Pi$ is $(\epsilon, \delta)$-maximin secure
against~$\Adv$,
if, for any honest player~$i^*$,
with probability at least $1 - \delta$,
its allocation is $x_{i^*} \geq (1 - \epsilon) \cdot \phi_{i^*}$.

\emph{Adaptive Security.}  In this case,
the protocol $\Pi$ returns an allocation $x$ and also a parameter~$\epsilon \in [0,1]$.
We say that it is $\delta$-adaptively maximin secure,
if, with probability at least $1 - \delta$,
$x_{i^*} \geq (1 - \epsilon) \cdot \phi_{i^*}$
\end{itemize}
\end{definition}

\subsection{Max-to-Mean Ratio~$\Gamma$ in Special Games}
\label{sec:gamma}


\begin{lemma}[$\Gamma$ Ratio for Monotone Games]
\label{lemma:gamma_monotone}
In a game where the marginal contribution
of every player is non-negative,
$\Gamma \leq n \cdot {n-1 \choose \lfloor \frac{n-1}{2} \rfloor}$.
\end{lemma}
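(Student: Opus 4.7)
The plan is to prove the bound pointwise for each player~$i$ and then take the maximum. Fix $i \in \all$. If $U_{\max,i} = 0$, then by the convention $\frac{0}{0} = 1$ and the fact that $\binom{n-1}{\lfloor (n-1)/2 \rfloor} \geq 1$, the bound $\Gamma_i \leq n \cdot \binom{n-1}{\lfloor (n-1)/2 \rfloor}$ is immediate. So assume $U_{\max,i} > 0$, and let $S^* \subseteq \all \setminus \{i\}$ be a subset achieving $\mu_i(S^*) = U_{\max,i}$.

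The key step is to lower bound $\phi_i$ using only the single term corresponding to $S^*$ in the subset-sum form of Equation~(\ref{sv}). Concretely, since monotonicity gives $\mu_i(S) \geq 0$ for every $S \not\ni i$, every summand in
\[
\phi_i = \frac{1}{n} \sum_{S \subseteq \all \setminus \{i\}} \binom{n-1}{|S|}^{-1} \mu_i(S)
\]
is non-negative, so we may drop all but the $S^*$ term to obtain $\phi_i \geq \frac{1}{n}\binom{n-1}{|S^*|}^{-1} U_{\max,i}$. Then I would bound $\binom{n-1}{|S^*|} \leq \binom{n-1}{\lfloor (n-1)/2 \rfloor}$ using the standard fact that the central binomial coefficient is the largest, which yields
\[
\phi_i \;\geq\; \frac{U_{\max,i}}{n \cdot \binom{n-1}{\lfloor (n-1)/2 \rfloor}}.
\]
Rearranging gives $\Gamma_i = \frac{U_{\max,i}}{\phi_i} \leq n \cdot \binom{n-1}{\lfloor (n-1)/2 \rfloor}$, and taking the maximum over $i \in \all$ completes the proof.

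There is no real obstacle here; the only subtlety is remembering to handle the $U_{\max,i} = 0$ case via the $\frac{0}{0} = 1$ convention, and to use the subset-sum form of the Shapley value (rather than the permutation form) so that monotonicity can be applied term by term to discard all but one summand.
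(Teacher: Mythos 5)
Your proposal is correct and follows essentially the same route as the paper: the paper's proof argues that $\Gamma_i$ is worst when only the single maximizing set $Q$ contributes, which is exactly your step of dropping all non-negative summands except the one for $S^*$ and then bounding $\binom{n-1}{|S^*|}$ by the central binomial coefficient. Your version is simply a more carefully written account of the same argument (including the $U_{\max,i}=0$ edge case, which the paper leaves implicit).
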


\begin{proof}
For any player $i$, $\Gamma_i$ is maximized if there's exactly 1 set $Q \subseteq \all \backslash \{i\}$ such that $\mu_i(Q) > 0$. Let $k = |Q|$. In this case, $\Gamma_i = n \cdot {n-1 \choose k}$, which is maximized when $k = \lfloor \frac{n-1}{2} \rfloor$. Therefore, we can conclude that $\Gamma \leq n \cdot {n-1 \choose \lfloor \frac{n-1}{2} \rfloor}$ for general monotone games.
\end{proof}

\noindent \textbf{Game with Maximum $\Gamma$. }Next, we will define a game with $|\all| = n$ players where $\Gamma$ is maximized.
Let $S_0 \subseteq \all$ be a subset of players such that $|S_0| = \lfloor\frac{n-1}{2} \rfloor $. Consider the following game $v: 2^{\all} \rightarrow \R_+$.

$$
v(S) =
\begin{cases}
	1, & S_0 \subsetneq S, \\
	0, & \text{Otherwise}
\end{cases}
$$

We can categorize the players based on whether they belong to the set $S_0$, where players in the same category are considered symmetric.

In this game, $\Gamma_i$ is maximized when $i \notin S_0$. Because for any player $i \notin S_0$, we have $\mu_i(S) = 1$ if and only if $S_0 = S$. Otherwise, $\mu_i(S) = 0$. Thus, $U_{\max, i} = 1$. The Shapley value $\phi_i$ can be simplified to:

$$
\phi_i = \frac{1}{n}\cdot \sum_{S \subseteq \all \backslash \{i\}} {n-1 \choose |S|}^{-1} \mu_i(S) = \frac{1}{n} {n-1 \choose |S_0|}^{-1}
$$

Since $|S_0| = \lfloor\frac{n-1}{2} \rfloor$,  $\Gamma_i = \frac{1}{{n-1 \choose |S_0|}^{-1}/n} = n \cdot {n-1 \choose \lfloor \frac{n-1}{2} \rfloor}$. This expression represents the maximum value of $\Gamma$ among all monotone games.

\subsection{Properties of Supermodular Games}
\label{sec:supermodular}

In additional to non-negativitiy and monotonicity of~$v$,
a supermodular (or convex) game
is associated with a supermodular utility function
$v: 2^\all \rightarrow \R_+$.
In other words, for all $ S, T \subseteq \all$, $v(S) + v(T) \le v(S \cup T) + v(S \cap T)$.

Supermodularity implies the following fact.
\begin{fact}[Monotonicity of Marginal Contribution]
	\label{MMC}
	If a game $v$ is supermodular, then for every player $i$ and any subsets $S \subseteq T \subseteq \all$ such that $i \notin T$,
	\begin{equation}
		\mu_i(T) \ge \mu_i(S).
	\end{equation}
\end{fact}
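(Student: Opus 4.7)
The plan is to derive the fact directly from the defining inequality of supermodularity by instantiating it at a carefully chosen pair of sets. Recall the goal: given $S \subseteq T \subseteq \all$ with $i \notin T$, show that
\[
v(T \cup \{i\}) - v(T) \;\geq\; v(S \cup \{i\}) - v(S).
\]
This inequality has exactly the shape of a rearranged supermodular inequality, so the natural move is to find two sets $A, B$ whose union is $T \cup \{i\}$ and whose intersection is $S$.

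The obvious choice is $A := T$ and $B := S \cup \{i\}$. First I would verify the two set-theoretic identities: $A \cup B = T \cup S \cup \{i\} = T \cup \{i\}$, using $S \subseteq T$; and $A \cap B = T \cap (S \cup \{i\}) = (T \cap S) \cup (T \cap \{i\}) = S \cup \emptyset = S$, using both $S \subseteq T$ (so $T \cap S = S$) and $i \notin T$ (so $T \cap \{i\} = \emptyset$). These are the only places the hypotheses $S \subseteq T$ and $i \notin T$ enter the argument.

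Next I would invoke the supermodularity hypothesis $v(A) + v(B) \leq v(A \cup B) + v(A \cap B)$ with the above $A, B$, obtaining
\[
v(T) + v(S \cup \{i\}) \;\leq\; v(T \cup \{i\}) + v(S).
\]
Rearranging gives $v(T \cup \{i\}) - v(T) \geq v(S \cup \{i\}) - v(S)$, i.e., $\mu_i(T) \geq \mu_i(S)$, which is the desired conclusion.

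There is no real obstacle here; the only thing to be careful about is the set-theoretic bookkeeping that ensures $A \cap B = S$ (which uses $i \notin T$, not merely $i \notin S$). Since the statement is a one-line consequence of the definition, the write-up will be short and will not require any auxiliary lemmas beyond the definition of supermodularity already given immediately above the fact.
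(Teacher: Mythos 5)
Your proof is correct: instantiating supermodularity at $A = T$ and $B = S \cup \{i\}$ yields exactly $v(T) + v(S\cup\{i\}) \le v(T\cup\{i\}) + v(S)$, and your set-theoretic checks (using $S \subseteq T$ for the union and both $S \subseteq T$ and $i \notin T$ for the intersection) are exactly what is needed. The paper states this fact without proof, treating it as an immediate consequence of the definition, and your argument is the standard one-line derivation it implicitly relies on.
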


It is known that Fact~\ref{MMC}
implies that
the most profitable way for all players is to cooperate all together in the grand coalition~$\all$.
Formally, this means that there does not exist
$k \geq 2$ disjoint subsets $S_1, \ldots, S_k$
such that $\sum_{j=1}^k v(S_j) > v(\all)$.
Therefore, the Shapley value is particularly applicable to supermodular games, as the grand coalition is most preferable to every player and the Shapley value provides a meaningful solution to the grand coalition.

\noindent \textit{Rank Preference.} For $j \in [n]$, define
$U_j := \E[\mu_{i^*}(P_{i^*}(\sigma)) | \sigma(i^*) = j]$,
which is the expected reward of player~$i^*$ conditioning on it
having rank~$j$ in $\sigma$. The following lemma justifies why
rank 1 is the least preferable in supermodular games.

\begin{lemma}[Monotonicity of Ranks]
	\label{lemma:ranks}
	For $1 \leq j < n$, $U_{j} \leq U_{j+1}$.
\end{lemma}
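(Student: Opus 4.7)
The plan is to exhibit a coupling between the conditional distributions of the predecessor set $P_{i^*}(\sigma)$ under $\sigma(i^*) = j$ and under $\sigma(i^*) = j+1$, such that the first is always a subset of the second, and then invoke Fact~\ref{MMC}.

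First, I would observe that conditioned on $\sigma(i^*) = j$, the set $P_{i^*}(\sigma)$ is uniformly distributed over all size-$(j-1)$ subsets of $\all \setminus \{i^*\}$, because a uniformly random permutation, restricted to the event that $i^*$ has a fixed rank, still induces a uniform distribution on which players land before $i^*$. Similarly, conditioned on $\sigma(i^*) = j+1$, the predecessor set is uniform over size-$j$ subsets of $\all \setminus \{i^*\}$.

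Next, I would build the coupling. Draw a uniformly random subset $S \subseteq \all \setminus \{i^*\}$ of size $j-1$, then draw a uniformly random element $a$ from $\all \setminus (\{i^*\} \cup S)$ and set $T = S \cup \{a\}$. By the observation above, the marginal of $S$ is the conditional distribution of $P_{i^*}$ given $\sigma(i^*) = j$. A symmetry argument shows that the marginal of $T$ is uniform over size-$j$ subsets of $\all \setminus \{i^*\}$: every size-$j$ subset $T'$ arises from exactly $j$ pairs $(S, a)$ with $S = T' \setminus \{a\}$ and $a \in T'$, and each such pair has the same probability. Hence the marginal of $T$ equals the conditional distribution of $P_{i^*}$ given $\sigma(i^*) = j+1$.

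Finally, since $i^* \notin T$ and $S \subseteq T$, Fact~\ref{MMC} applied pointwise under this coupling yields $\mu_{i^*}(S) \leq \mu_{i^*}(T)$ almost surely. Taking expectations on both sides gives $U_j \leq U_{j+1}$. The only subtle point I expect to check carefully is the verification that the coupled marginal of $T$ is indeed uniform over size-$j$ subsets; once this is done, the rest is immediate from supermodularity.
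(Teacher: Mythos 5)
Your coupling argument is correct and is essentially the probabilistic restatement of the paper's own proof: the paper lower-bounds $\mu_{i^*}(S)$ for a size-$j$ set $S$ by the average of $\mu_{i^*}(S\setminus\{a\})$ over $a\in S$ (via Fact~\ref{MMC}) and then double-counts pairs $(S,a)$ to convert the sum over size-$j$ sets into one over size-$(j-1)$ sets, which is exactly your verification that the coupled marginal of $T$ is uniform. Both proofs rest on the same two ingredients --- uniformity of the conditional predecessor set and monotonicity of $\mu_{i^*}$ under inclusion --- so your proposal matches the paper's argument.
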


\begin{proof}
	We use Fact~\ref{MMC} in the first inequality:
\begin{align*}
	U_{j+1} & =  \frac{1}{{n-1 \choose j}} \sum_{S \in {\all\setminus\{i^*\} \choose j}} \mu_{i^*}(S)
	\\ & \geq \frac{1}{{n-1 \choose j}} \sum_{S \in {\all\setminus\{i^*\} \choose j}} \frac{1}{j} \sum_{a \in S} \mu_{i^*}(S \setminus \{a\}) &&
	\textrm{\small ($j$ ways to remove $a \in S$)} \\
	& =  \frac{1}{{n-1 \choose j}} \cdot \frac{n-j}{j} \sum_{T \in {\all\setminus\{i^*\} \choose j-1}} \mu_{i^*}(T)  &&
	\textrm{\small ($n-j$ possible $S$'s to form $T$)}\\
	& = \frac{1}{{n-1 \choose j-1}} \sum_{T \in {\all\setminus\{i^*\} \choose j-1}} \mu_{i^*}(T) = U_j &&
\end{align*}
\end{proof}

\begin{fact}[$\Gamma$ Ratio for Supermodular Games]
	\label{fact:max_ratio}
	For supermodular games, $\Gamma \leq n$.
\end{fact}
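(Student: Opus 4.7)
\textbf{Proof Plan for Fact~\ref{fact:max_ratio}.} The plan is to show $\phi_i \geq \frac{U_{\max,i}}{n}$ for every player $i$, which immediately yields $\Gamma_i \leq n$ and hence $\Gamma \leq n$. The argument proceeds in two short steps, each of which leverages supermodularity in a different way.

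First, I would observe that in a supermodular game, the maximum marginal contribution of player~$i$ is attained at the full set $\all \setminus \{i\}$. Concretely, for any $S \subseteq \all \setminus \{i\}$, Fact~\ref{MMC} applied with $T = \all \setminus \{i\}$ gives $\mu_i(S) \leq \mu_i(\all \setminus \{i\})$, so $U_{\max, i} = \mu_i(\all \setminus \{i\})$. Note that this value coincides with $U_n$ from the rank decomposition, since conditioning on $\sigma(i) = n$ forces $P_i(\sigma) = \all \setminus \{i\}$ deterministically.

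Second, I would use the rank decomposition $\phi_i = \frac{1}{n}\sum_{j=1}^n U_j$ (which is just the random-permutation interpretation of the Shapley value grouped by the rank of player~$i$) together with Lemma~\ref{lemma:ranks}. Since $U_j \geq 0$ for every $j$ by the monotonicity assumption on $v$, we get $\phi_i \geq \frac{1}{n} U_n = \frac{1}{n} U_{\max, i}$. Rearranging, $\Gamma_i = \frac{U_{\max, i}}{\phi_i} \leq n$, and taking the maximum over~$i$ yields $\Gamma \leq n$.

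There is essentially no technical obstacle here: the entire content is packaged in the previously established Fact~\ref{MMC} (which pins down where the maximum marginal contribution sits) and the rank decomposition of the Shapley value. The only thing to be careful about is the edge case $\phi_i = 0$, which by the convention $\frac{0}{0} = 1$ gives $\Gamma_i = 1 \leq n$; alternatively, $\phi_i = 0$ forces every $U_j = 0$ and thus $U_{\max,i} = U_n = 0$, so the bound holds trivially.
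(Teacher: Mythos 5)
Your proposal is correct and follows essentially the same route as the paper: the paper's one-line proof likewise observes that player $i$ occupies rank $n$ with probability $\frac{1}{n}$, that the reward there equals $U_{\max,i}$ (implicitly via Fact~\ref{MMC}), and that non-negativity of the reward gives $\phi_i = \E[U] \geq \frac{1}{n}U_{\max,i}$. Your write-up merely makes explicit the identification $U_{\max,i} = U_n$ and the $\frac{0}{0}=1$ edge case; the citation of Lemma~\ref{lemma:ranks} is not actually needed since only $U_j \geq 0$ is used.
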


\begin{proof}
	Since player~$i^*$ is in the most preferable rank~$n$ with
	probability $\frac{1}{n}$,
	we have $\phi_{i^*} = \E[U] \geq \frac{1}{n} \cdot U_{\max,i^*}$,
	where the inequality holds because $U \geq 0$.
\end{proof}

\subsection{Properties of Edge Synergy Game}
\label{sec:synergy}

The edge synergy game \cite{sv_hardness} was defined on a simple undirected graph, but we extend it to a weighted hypergraph $G = (\all, E, w)$, where each vertex in $\all$ is a player, $E \subseteq 2^\all$ and $w: E \rightarrow \R_+$ gives the hyperedge weights.
The utility function $v$ is defined as $v(S) := \sum_{e \in E: e \subseteq S} w(e)$, which is the sum of edge weights in the induced subgraph $G[S]$.  See more details in \Cref{sec:synergy}.

\begin{fact}
The edge synergy game is supermodular and has max-to-mean ratio $\Gamma \leq \max_{e \in E} |e|$.
\end{fact}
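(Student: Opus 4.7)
The fact has two parts, and both follow by decomposing $v$ into contributions from individual hyperedges. The plan is to handle supermodularity edge-by-edge, and then to use the random permutation interpretation of the Shapley value to lower bound $\phi_i$ in terms of hyperedge sizes.

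\textbf{Step 1 (Supermodularity).} I would write $v(S)=\sum_{e\in E} w(e)\cdot\mathbf{1}[e\subseteq S]$ and prove the inequality $v(S)+v(T)\le v(S\cup T)+v(S\cap T)$ by showing that for every hyperedge $e\in E$ the indicator contribution on the right-hand side dominates that on the left-hand side. Doing a case analysis on whether $e\subseteq S\cap T$, $e\subseteq S$ only, $e\subseteq T$ only, or neither, the only potentially interesting case is ``$e\subseteq S\cup T$ but $e\not\subseteq S$ and $e\not\subseteq T$'', where the LHS contributes $0$ but the RHS contributes $w(e)\ge 0$. All other cases give equality, so summing over $e$ with non-negative $w(e)$ establishes supermodularity.

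\textbf{Step 2 (Computing $U_{\max,i}$ and $\phi_i$).} For any player $i$ and any $S\subseteq\all\setminus\{i\}$, expanding gives
\[
  \mu_i(S) \;=\; \sum_{\substack{e\in E\\ i\in e,\ e\setminus\{i\}\subseteq S}} w(e),
\]
so $U_{\max,i}$ is maximized at $S=\all\setminus\{i\}$ and equals $\sum_{e\ni i} w(e)$. For the Shapley value, I would use the random permutation formula $\phi_i=\E_\sigma[\mu_i(P_i(\sigma))]$. A hyperedge $e\ni i$ contributes $w(e)$ to $\mu_i(P_i(\sigma))$ exactly when every other member of $e$ precedes $i$ in $\sigma$, which happens with probability $1/|e|$. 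Hence $\phi_i=\sum_{e\ni i} w(e)/|e|$.

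\textbf{Step 3 (Bounding $\Gamma$).} Letting $k:=\max_{e\in E}|e|$, the expression for $\phi_i$ immediately gives $\phi_i\ge \tfrac{1}{k}\sum_{e\ni i} w(e)=\tfrac{1}{k}U_{\max,i}$, so $\Gamma_i=U_{\max,i}/\phi_i\le k$, and taking the max over $i$ yields the claim. A minor edge case to handle is when $\sum_{e\ni i}w(e)=0$, in which case $U_{\max,i}=0=\phi_i$ and the convention $0/0=1\le k$ applies (assuming $k\ge 1$, which holds whenever $E$ is non-empty; if $E=\emptyset$ the game is trivial).

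The argument has no real obstacle: both parts are essentially linear expansions in the hyperedge weights. The only subtle point is making sure the indicator decomposition for supermodularity is stated cleanly, and confirming that the ``last among $|e|$ uniformly random positions'' probability is $1/|e|$, which is a standard fact about uniformly random permutations.
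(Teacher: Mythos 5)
Your proposal is correct and follows essentially the same route as the paper: supermodularity via the per-hyperedge indicator decomposition (the paper phrases this as the edge-set inclusion $E(G[S])\cup E(G[T])\setminus E(G[S\cap T])\subseteq E(G[S\cup T])$, which is your case analysis in compressed form), and the bound on $\Gamma$ via $U_{\max,i}=\sum_{e\ni i}w(e)$ together with $\phi_i=\sum_{e\ni i}w(e)/|e|$ from the probability $1/|e|$ that $i$ comes last among the members of $e$ in a uniform permutation. Your explicit handling of the $0/0$ edge case is a small addition the paper omits.
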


\begin{proof}
\noindent \textbf{Supermodularity of Synergy Game.}
This game is supermodular because for any subsets $S, T \subseteq V(G)$, $E(G[S]) \cup E(G[T]) \backslash E(G[S \cap T]) \subseteq E(G[S \cup T])$.

\noindent \textbf{$\Gamma$ ratio of Synergy Game.}
Let $i$ be a vertex, and $d(i)$ denote its weighted degree.
By Fact~\ref{MMC}, $U_{\max, i} = \mu_i(\all \backslash \{i\}) = \sum_{e \in E: i \in e} w_e = d(i)$.
The Shapley value of $i$ can be computed as follows.

\begin{enumerate}
\item For normal graphs, we have:
$$
	\begin{aligned}
		\phi_i = \frac{1}{n!} \sum_{\sigma} \mu_i(P_i(\sigma)) = \sum_{\sigma} \sum_{\substack{ (i,j) \in E(G),\\ j \in P_i(\sigma)}} \frac{w_{i,j}}{n!} \\= \sum_{(i, j) \in E(G)}\sum_{\sigma : j \in P_i(\sigma)} \frac{w_{i,j}}{n!} = \sum_{(i, j) \in E(G)}  \frac{w_{i,j}}{2} = \frac{d(i)}{2}
	\end{aligned}
$$

Therefore, for every vertex $i$, we have $\Gamma_i = 2$. Thus, we can conclude that $\Gamma = 2$ for the edge synergy game on simple graphs.

\item For hypergraphs, we have:
$$
\begin{aligned}
	\phi_i = \frac{1}{n!} \sum_{\sigma} \mu_i(P_i(\sigma)) = \sum_{\sigma} \sum_{\substack{e \in E(H), \\i\in e, \\e\backslash \{i\} \subseteq P_i(\sigma)}} \frac{w_e}{n!} \\= \sum_{\substack{e \in E(H),\\ i \in e }}\sum_{\sigma: e \backslash \{i\} \subseteq P_i(\sigma)} \frac{w_e}{n!} = \sum_{\substack{ e \in E(H), i\in e}}  \frac{w_e}{|e|}
\end{aligned}
$$

Therefore, for every vertex $i$, $\Gamma_i = \frac{d(i)}{\sum_{\substack{ e \in E(H), i\in e}} \frac{w_e}{|e|}} \leq \max_{e \in E} |e|$.
\end{enumerate}
\end{proof}

\section{Distributed Protocols Based on Permutation Sampling}
\label{algorithm}

\noindent \textbf{High Level Approach.}
Based on the random permutation
approach in equation~(\ref{sv}),
we give an abstract protocol in Algorithm~\ref{MainProtocol}.
It uses a sub-protocol $\mathsf{GenPerm}$
that
returns a random permutation $\sigma$ of $\all$,
which we shall call a \textbf{P-sample}.
\ignore{
The condition $N \neq \all$ arises exclusively during the analysis of the lower bound in supermodular games in section~\ref{sec:lowerbound}. In such cases, the expected reward of a player is monotonic with respect to its rank in $\sigma$. Players not in $N$ are assumed to occupy the lowest ranks (representing the least favorable positions) in $\sigma$.
}
Given a P-sample $\sigma$,
we indicate the player at rank~$j$ by $\sigma_j$,
and $\sigma_{<j}$ represents the set of players ranked lower than $\sigma_j$. In other words, $\sigma_{<j}$ can be defined as $\{ \sigma_i | i < j\}$. 

The sub-protocol $\mathsf{GenPerm}$ also returns
other auxiliary information $\mathsf{aux}$ such as
whether any player has been detected for violation.

\noindent \emph{Stopping Condition.}
To decrease the variance of the output and
mitigate the effect of the adversary,
we may use multiple numbers of P-samples
and take the average allocation over all samples.
The stopping condition $\mathsf{stopcond}$ specifies
when the whole protocol should terminate.
\ignore{It may be a rule
that states upfront the required number of P-samples,
or it may depend on the behavior of violating players
during the execution of the protocol.
}

\begin{algorithm}
\caption{Abstract Protocol to Return an Allocation}\label{MainProtocol}
\SetKwFunction{GenPerm}{GenPerm}
\KwIn{A game $(\all, v)$, a stopping condition $\mathsf{stopcond}$.}
\KwOut{An allocation $x \in \R_+^\all$.}
$z \gets \vec{0} \in \R_+^\all$,  $R \gets 0$ \\
\While{$\mathsf{stopcond} = \mathsf{false}$}{
    $(\sigma, \aux) \gets$ \GenPerm{$\all$} \\


		\Comment{Marginal contribution of rank~$j$ player in $\sigma$}
		\textbf{for every} $j \in [n]$,
    $z_{\sigma_j} \gets z_{\sigma_j} + \mu_{\sigma_j}(\sigma_{<j})$ \\
		$R \gets R+1$ \\
		\Comment{May update other variables according to $\aux$}
		
}
\Return $x := \frac{1}{R} \cdot z \in \R_+^\all$
\end{algorithm}

\begin{remark}[Reward Decomposition]
\label{remark:decompose}
In our analysis of Algorithm~\ref{MainProtocol}, it will be convenient
to decompose the reward received
by the honest player~$i^*$ in each P-sample as follows.

Specifically, in the $j$-th P-sample,
the reward received
can be expressed as $X_j := Y_j - Z_j$,
where $Y_j$ is the reward received by player~$i^*$
had the adversary not caused any violation (if any),
and $Z_j$ can be interpreted as the amount of damage
due to the adversary's violation in the $j$-th P-sample.

This allows us to analyze $Y_j$ and $Z_j$ separately.
For instance, because there is at least one
honest player, we can assume that the $Y_j$'s are independent among themselves,
and $\E[Y_j] = \phi_{i^*}$.

On the other hand, the random variables $Z_j$'s can interact
with the $Y_j$'s and behave in a very complicated way.
However, if we know that the adversary has a violation budget of $C$,
then we can conclude that with probability 1,
$\sum_j Z_j \leq C U_{\max, i^*} = C\Gamma \cdot \phi_{i^*}$.
\end{remark}

\ignore{
\begin{remark}
Since in each P-sample the sum of the marginal contribution of each player is $v(\all)$, the vector $x$ we get at the end of the protocol satisfies $\sum_{i \in \all} x_i = v(\all)$ as well, which means $x$ is an allocation.
\end{remark}

\begin{remark}
Observe that each player just needs to participate in $\mathsf{GenPerm}$
and also performs the computation
to decide if the stopping condition is satisfied.
The deterministic computation of averaging the allocation vectors
is performed by the evaluator,
which has oracle access to the utility function~$v$ (and hence, can evaluate
the marginal contribution $\mu$).
\end{remark}

Different variants of the abstract protocol
depend on how we implement $\mathsf{GenPerm}$
and what stopping condition $\mathsf{stopcond}$
is used.
}

\subsection{``Secure'' Permutation Generation}

We revisit a simple permutation generation protocol
that was proposed by~\cite{sv_protocol}.
We paraphrase it in Algorithm~\ref{NaivePerm},
and call it $\mathsf{NaivePerm}$, because
it can easily be attacked by a rushing adversary in supermodular games(see Lemma~\ref{lemma:naive_attack}).
However, as we shall later see in \Cref{sec:high_prob,sec:lowerbound},
the security property in Lemma~\ref{lemma:naive_secure}
turns out to be sufficient for us to design good protocols.

%

\begin{algorithm}
\caption{$\mathsf{NaivePerm}$}\label{NaivePerm}
\SetKw{KwDownTo}{downto}
\SetKwFunction{RandElim}{RandElim}
\KwIn{Player set $N$.}
\KwOut{A random permutation $\sigma$
of $N$, together with a collection $\mathsf{Dev}$
of detected violating players.}



\Comment{Commit phase:}
\ForEach{{\rm \bf player} $i \in N$} {
    Uniformly sample a permutation $\sigma^{(i)}$ of $N$. \\
    Commit $\sigma^{(i)}$ and broadcast the commitment. \\
}

\Comment{Open phase:}
\ForEach{{\rm \bf player} $i \in N$} {
    Broadcast the opening for its previously committed $\sigma^{(i)}$. \\
}

Denote $S$ as the collection of
players~$i$ that have followed the commit and the open phases,
i.e., its commitment and opened $\sigma^{(i)}$
have been verified.  \\

Using the permutations generated by
players in $S$, compute the composed permutation $\sigma \gets \circ_{i \in S} \, \sigma^{(i)}$,
using the predetermined composition order (such as
the lexicographical order of players).  \\

$\mathsf{Dev} \gets N \setminus S$ \\

\Return $(\sigma, \mathsf{Dev})$
\end{algorithm}

\begin{lemma}[Security of $\mathsf{NaivePerm}$]
\label{lemma:naive_secure}
Suppose in Algorithm~\ref{NaivePerm}, the strategy of an adversary $\Adv$ never causes
violation.
Then, assuming an ideal commitment scheme,
Algorithm~\ref{NaivePerm} returns a uniformly random permutation of players
if there exists at least one honest player.
\end{lemma}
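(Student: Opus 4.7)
The plan is to exploit the group-theoretic fact that left- or right-multiplication by a fixed element of the symmetric group $\mathfrak S_N$ is a bijection on $\mathfrak S_N$, together with the perfect hiding and perfect binding properties of the idealized commitment scheme. Let $i^*$ be an honest player. Since the adversary is assumed not to cause any violation, every committed permutation is opened, so $S = N$ and the returned permutation is $\sigma = \circ_{i \in N}\,\sigma^{(i)}$, taken in the fixed predetermined composition order.

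First I would argue independence: the only information the rushing adversary sees before any susceptible player~$j$ must fix its committed permutation $\sigma^{(j)}$ is the commitment of $\sigma^{(i^*)}$ (and possibly commitments of other honest players). By \emph{perfect hiding}, the distribution of $\sigma^{(i^*)}$ conditioned on the transcript of commitments is still uniform; equivalently, the adversary's joint choice of $\{\sigma^{(j)}\}_{j \neq i^*, \, j \text{ susceptible}}$ is independent of $\sigma^{(i^*)}$. In the open phase, the rushing adversary does learn $\sigma^{(i^*)}$ before opening, but \emph{perfect binding} forces each susceptible $j$ to open to exactly the value it already committed. Hence the final vector of permutations $(\sigma^{(j)})_{j \neq i^*}$ is independent of $\sigma^{(i^*)}$, which remains uniform over $\mathfrak S_N$.

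Next I would finish with a conditioning argument. In the fixed composition order, write $\sigma = A \circ \sigma^{(i^*)} \circ B$, where $A$ and $B$ are the compositions of the other players' permutations on the left and right of $i^*$'s slot. Conditioning on any realization of $(\sigma^{(j)})_{j \neq i^*}$, the permutations $A$ and $B$ become fixed; then the map $\pi \mapsto A \circ \pi \circ B$ is a bijection on $\mathfrak S_N$, so $\sigma \mid (A, B)$ is uniform on $\mathfrak S_N$ because $\sigma^{(i^*)}$ is. Averaging over the (independent) distribution of $(A, B)$ via the law of total probability gives that $\sigma$ is uniform on $\mathfrak S_N$ unconditionally.

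The only substantive obstacle is formalizing the independence claim under a rushing adversary, which is exactly where the ideal commitment abstraction is doing the work; once that is stated cleanly, the remainder is essentially the one-line observation that uniformity of a single factor is preserved by composition with any independent factors in a group.
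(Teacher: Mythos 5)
Your proposal is correct and follows essentially the same route as the paper's proof: perfect hiding gives independence of the susceptible players' permutations from the honest player's uniform $\sigma^{(i^*)}$, perfect binding pins down the opened values, and uniformity of the composition follows since composing with independent fixed factors is a bijection on the symmetric group. You simply spell out the conditioning and bijection steps that the paper leaves implicit.
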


\begin{proof}
Because of the ideal commitment scheme assumption,
all players must follow the commit and the open phases to avoid violation being detected.
Since there exists an honest player~$i$, its sampled permutation $\sigma^{(i)}$ is uniformly random,
whose commitment is assumed to leak no information.
Therefore, other players' sampled permutations are independent of~$\sigma^{(i)}$,
and so, the resulting composed permutation is uniformly random.
\end{proof}

\subsection{Attack $\mathsf{NaivePerm}$ in Supermodular Games}

Recall that by Lemma~\ref{lemma:ranks}, in supermodular games, the expected reward of a player is monotone to its rank in $\sigma$. The following Lemma shows $\mathsf{NaivePerm}$ can be attacked by a rushing adversary so that the honest player is always at the least favorable position.

\begin{lemma}[Attacking $\mathsf{NaivePerm}$ Using at Most One Abort in Supermodular Games]
	\label{lemma:naive_attack}
	Suppose the game is supermodular and a rushing adversary controls $n-1$ players.
	Then, by instructing at most 1 player to abort,
	it can always cause $\mathsf{NaivePerm}$ to return
	a permutation such that the remaining honest player~$i^*$
	is at the least preferable position with probability 1.
\end{lemma}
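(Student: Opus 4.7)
Plan: I would prove this lemma by exhibiting an explicit rushing attack that uses at most one abort and, with probability~$1$, places the honest player~$i^*$ at rank~$1$ in the output permutation of $\mathsf{NaivePerm}$. By Lemma~\ref{lemma:ranks}, rank~$1$ is the least preferable position in a supermodular game, so this suffices.

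The attack has two phases. In the commit phase, the adversary chooses and commits a permutation for each of the $n-1$ susceptible players; because the commitment scheme is perfectly hiding, these choices must be made without any information about $\sigma^{(i^*)}$. In the open phase, after observing $\sigma^{(i^*)}$ (possible because the adversary is rushing), the adversary enumerates the $n$ candidate outputs of $\mathsf{NaivePerm}$: the one in which every player opens, and the $n-1$ obtained by aborting exactly one susceptible player. The adversary then executes the unique abort strategy that realizes rank~$1$ for~$i^*$. The core combinatorial claim is that the susceptible permutations can be chosen so that, for every possible $\sigma^{(i^*)}$, at least one of these $n$ options places $i^*$ at rank~$1$.

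I would prove this claim first in the clean sub-case where $i^*$ is last in the composition order. Write $\tau := \sigma^{(p_1)} \circ \cdots \circ \sigma^{(p_{n-1})}$, and let $\tau_\ell$ denote the same composition with $\sigma^{(p_\ell)}$ removed. Then the no-abort output is $\tau \circ \sigma^{(i^*)}$ while aborting $p_\ell$ gives $\tau_\ell \circ \sigma^{(i^*)}$. The rank of~$i^*$ in each candidate depends on $\sigma^{(i^*)}$ only through the uniformly distributed value $m_0 := \sigma^{(i^*)}(i^*)$, so the covering requirement reduces to showing $\{\tau^{-1}(1)\} \cup \{\tau_\ell^{-1}(1) : \ell = 1, \ldots, n-1\} = [n]$. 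Setting $\sigma^{(p_\ell)} = (\ell, \ell+1)$ as adjacent transpositions, a direct calculation shows $\tau$ is the cyclic shift $m \mapsto m+1 \pmod n$, so $\tau^{-1}(1) = n$, while each $\tau_\ell$ factors into two cycles $(1,\ldots,\ell)(\ell+1,\ldots,n)$ giving $\tau_\ell^{-1}(1) = \ell$; hence the required set $\{n, 1, 2, \ldots, n-1\} = [n]$ is covered, and for every value of $m_0$ exactly one option places $i^*$ at rank~$1$.

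The main obstacle is extending the argument to the case where $i^*$ sits at an intermediate composition-order position~$k$, so that $\sigma^{(i^*)}$ is sandwiched between a prefix composition $A$ and a suffix composition $B$ of susceptible permutations. The subtlety is that prefix aborts hold the probe coordinate $B(i^*)$ fixed while varying the outer composition, whereas suffix aborts vary the probe coordinate $B^{(\ell)}(i^*)$ while keeping the outer composition fixed. I would design $A$ via the same adjacent-transposition template so that prefix aborts cover a batch of values of $\sigma^{(i^*)}(B(i^*))$, and design $B$ so that its suffix-abort variants probe $\sigma^{(i^*)}$ at distinct alternative coordinates yielding the target value $A^{-1}(i^*)$; the final step is a case analysis verifying that the joint coverage of prefix-aborts and suffix-aborts together exhausts all of~$S_n$.
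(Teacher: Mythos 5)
Your ``clean sub-case'' is essentially the paper's proof. The paper also reduces to the situation where the honest player is last in the composition order (``for simplicity, $i^*=n$'' under lexicographic composition), has the susceptible players commit permutations whose full composition is a cyclic shift, and observes that omitting at most one of them realizes every shift amount modulo $n$, so after the rushing adversary sees $\sigma^{(i^*)}$ it can always steer $i^*$ to rank~$1$. Your adjacent-transposition instantiation $(\ell,\ell+1)$ is a correct variant of the same covering argument (the paper uses powers $\tau^j$ of a single cycle instead), and your verification that $\{\tau^{-1}(1)\}\cup\{\tau_\ell^{-1}(1)\}=[n]$ is exactly the disjoint-cover property the paper relies on.

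The genuine problem is the extension you defer to the end: the intermediate case $A\circ\sigma^{(i^*)}\circ B$ with both $A$ and $B$ nontrivial \emph{cannot} be completed, no matter how cleverly you design the committed permutations. Count probabilities: each of the $n$ options (no abort, or abort one of the $n-1$ susceptible players) fixes an outer pair $(A',B')$, and for a uniformly random $\sigma^{(i^*)}$ each such option places $i^*$ at rank~$1$ with probability exactly $\tfrac1n$. Success with probability~$1$ therefore forces the $n$ events to be pairwise disjoint. But a non-redundant prefix-abort event has the form $\sigma^{(i^*)}(w)=u_\ell$ with $u_\ell\neq u$ (same probe coordinate $w=B(1)$, different target), while a non-redundant suffix-abort event has the form $\sigma^{(i^*)}(w_m)=u$ with $w_m\neq w$ (different probe coordinate, same target); two constraints $\sigma(x_1)=y_1$, $\sigma(x_2)=y_2$ with $x_1\neq x_2$ and $y_1\neq y_2$ are jointly satisfiable, so these two events intersect with positive probability and the union falls strictly below~$1$. (If instead $u_\ell=u$ or $w_m=w$, that option duplicates the no-abort event and you have at most $n-1$ useful options, again capping the union at $\tfrac{n-1}{n}$.) So the probability-$1$ claim is only attainable when all susceptible permutations sit on one side of $\sigma^{(i^*)}$ in the composition; the paper's ``for simplicity'' reduction is silently doing this work, and your proof should either adopt the same reduction explicitly or weaken the target for intermediate composition positions rather than promise a case analysis that cannot close.
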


\begin{proof}
	For simplicity, consider the case that the players are indexed by $[n]$,
	where $i^*=n$ is the only honest player.
	The adversary adopts the following strategy.  Pick any cycle permutation~$\tau$, i.e.,
	the permutation corresponds to a cyclic shift of the players~$\all$.  Note that by
	applying the shift $n$ times, we have $\tau^n$ equal to the identity permutation.
	
	Then, the $j$-th susceptible player will commit to $\tau^j$.
	The key observation is that
	for $0 \leq i \leq n-1$,
	the $n$ numbers $\sum_{j=1}^{n-1} j - i$
	are distinct modulo $n$.
	This means that by omitting the permutation from at most 1 susceptible player,
	the adversary can simulate a cyclic shift for
	$i$ positions, for any $0 \leq i \leq n-1$.
	
	
	Therefore, for a rushing adversary,
	after the honest player~$i^*$ opens its committed permutation $\sigma^{(i^*)}$,
	the adversary can observe the rank of $i^*$ in its permutation.
	Hence, the adversary can always instruct at most 1 susceptible player to abort opening its
	committed permutation
	such that the honest player is shifted to the least preferable position.
\end{proof}

\begin{remark}
	In Lemma~\ref{lemma:naive_attack}, even if the single violating player
	is punished by sending it to the least preferable position instead of $i^*$,
	player~$i^*$ is still at the second least preferable position.
\end{remark}

\noindent \textbf{More ``Secure'' Permutation Generation.}
Although $\mathsf{NaivePerm}$ has very weak security properties especially in supermodular games, as discussed in \Cref{sec:high_prob}, it is still useful because it can detect violations. This feature allows us to achieve high probability maximin security even when the violation budget is unknown. In \Cref{sec:lowerbound}, we will investigate a more secure P-sample protocol and analyze its sampling complexity to achieve expected maximin secure for supermodular games. However, we will find that, surprisingly, the stronger security properties do not lead to a significant advantage in terms of the sampling complexity. 

\section{Sampling Complexity to Achieve High Probability Maximin Security}
\label{sec:high_prob}

In this section, we will use $\mathsf{NaivePerm}$
to generate P-samples in Algorithm~\ref{MainProtocol}.
Even though we see in Lemma~\ref{lemma:naive_attack}
that one P-sample from $\mathsf{NaivePerm}$ can easily be attacked,
the fact that violation can be detected means
that repeating enough number of P-samples can counter the effect
of the adversary's limited violation budget.
Another reason to have a larger number of P-samples is to
achieve high probability maximin security based on
measure concentration argument, such as the following variant of Chernoff Bound.

\begin{fact}[Chernoff Bound]
\label{fact:chernoff}
Suppose for each $j \in [R]$,
$Y_j$ is sampled independently from the same distribution with support $[0, U_{\max}]$
such that $\Gamma = \frac{U_{\max}}{\E[Y_j]}$.
Then, for any $0 < \epsilon < 1$,

$\Pr{\sum_{j \in [R]} Y_j \leq (1 - \epsilon) \cdot \E[\sum_{j \in [R]} Y_j]}
\leq \exp(-\frac{\epsilon^2 R}{2 \Gamma}).$
\end{fact}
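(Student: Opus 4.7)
The plan is to reduce this statement to the standard lower-tail multiplicative Chernoff bound for sums of independent $[0,1]$-valued random variables by a simple rescaling, since the only feature of the distribution that matters beyond independence is the range and the mean. First I would normalize each variable by defining $W_j := Y_j / U_{\max}$, so that $W_j \in [0,1]$ and $\E[W_j] = \E[Y_j]/U_{\max} = 1/\Gamma$. Because the scaling factor $U_{\max}$ is a positive constant applied uniformly to both sides, the event
\[
\Bigl\{\ts\sum_{j \in [R]} Y_j \leq (1-\epsilon) \cdot \E[\ts\sum_{j \in [R]} Y_j]\Bigr\}
\]
is identical to the event $\{\sum_{j \in [R]} W_j \leq (1-\epsilon) \mu\}$, where $\mu := \E[\sum_{j \in [R]} W_j] = R/\Gamma$.

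Next, I would invoke the standard multiplicative Chernoff bound: for independent random variables $W_j \in [0,1]$ with $\mu = \E[\sum_j W_j]$ and any $0 < \epsilon < 1$,
\[
\Pr{\ts\sum_{j \in [R]} W_j \leq (1-\epsilon) \mu} \leq \exp\!\left(-\frac{\epsilon^2 \mu}{2}\right).
\]
Plugging in $\mu = R/\Gamma$ immediately yields the desired bound $\exp(-\epsilon^2 R /(2 \Gamma))$. For completeness, I would recall that the textbook derivation of this inequality goes through a Chernoff-style moment generating function argument: for $t > 0$, apply Markov's inequality to $e^{-t \sum_j W_j}$, use independence to factor the expectation, bound each factor via $\E[e^{-t W_j}] \leq \exp(\E[W_j](e^{-t}-1))$ (which holds because $e^{-t x}$ is convex on $[0,1]$), and then optimize $t = -\ln(1-\epsilon)$ followed by the elementary inequality $(1-\epsilon) \ln(1-\epsilon) + \epsilon \geq \epsilon^2/2$ valid on $(0,1)$.

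Since this is essentially a textbook fact, there is no real obstacle; the only bookkeeping point worth highlighting is that the parameter $\Gamma$ in the statement is defined exactly as the reciprocal of the normalized mean, so it enters the exponent through $\mu = R/\Gamma$ rather than requiring any separate analysis of the distribution's shape. The assumption that the $Y_j$'s are i.i.d. with the shared support $[0, U_{\max}]$ is precisely what is needed to apply the standard bound after rescaling.
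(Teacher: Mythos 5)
Your proof is correct. The paper states this as a known fact and offers no proof of its own, and your argument --- rescaling by $U_{\max}$ so that the variables lie in $[0,1]$ with total mean $\mu = R/\Gamma$, then invoking the standard lower-tail multiplicative Chernoff bound $\exp(-\epsilon^2\mu/2)$ via the moment-generating-function argument and the elementary inequality $(1-\epsilon)\ln(1-\epsilon)+\epsilon \ge \epsilon^2/2$ --- is exactly the canonical derivation the authors are implicitly relying on.
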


This readily gives a guarantee on the reward of a player when the adversary does not interfere.

\begin{lemma}[Baseline Reward with No Adversary]
\label{lemma:baseline}
Suppose Algorithm~\ref{MainProtocol} is run
with $\mathsf{NaivePerm}$ and $R$ number of P-samples such that there is no interference from the adversary.
Then, for any $0 < \epsilon < 1$, the probability that
player~$i^*$ receives less than $(1 - \epsilon) \cdot \phi_{i^*}$
is at most $\exp(-\frac{\epsilon^2 R}{2 \Gamma})$,
which is at most $\delta$ when $R = \frac{2\Gamma}{\epsilon^2} \log \frac{1}{\delta}$.
\end{lemma}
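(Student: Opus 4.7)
The plan is to apply the Chernoff Bound (Fact~\ref{fact:chernoff}) directly to the i.i.d.\ sequence of per-sample rewards received by player~$i^*$. The key preliminary step is to justify that, in the absence of adversarial interference, the per-sample rewards really are i.i.d.\ with mean equal to $\phi_{i^*}$, which follows from the security guarantee of $\mathsf{NaivePerm}$ (Lemma~\ref{lemma:naive_secure}) combined with the random-permutation characterization of the Shapley value in equation~(\ref{sv}).

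First I would set up notation: for each $j \in [R]$ let $\sigma^{(j)}$ denote the permutation returned by the $j$-th invocation of $\mathsf{GenPerm} = \mathsf{NaivePerm}$, and let $Y_j := \mu_{i^*}(P_{i^*}(\sigma^{(j)}))$ be the marginal contribution credited to player~$i^*$ in that sample. The final allocation is then $x_{i^*} = \frac{1}{R}\sum_{j \in [R]} Y_j$. By Lemma~\ref{lemma:naive_secure} and the assumption of no violations, each $\sigma^{(j)}$ is a uniformly random permutation of $\all$, and since the randomness used in distinct invocations of $\mathsf{NaivePerm}$ is independent (the honest player~$i^*$ supplies a fresh independent uniform permutation in each invocation), the $Y_j$'s are i.i.d. From equation~(\ref{sv}), $\E[Y_j] = \phi_{i^*}$, and by definition of $U_{\max, i^*}$, the support of $Y_j$ lies in $[0, U_{\max, i^*}]$, so the associated max-to-mean ratio is exactly $\Gamma_{i^*} \leq \Gamma$.

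Next I would invoke Fact~\ref{fact:chernoff} with this sequence, which immediately gives
\[
\Pr{x_{i^*} \leq (1-\epsilon)\,\phi_{i^*}} \;=\; \Pr{\ts\sum_{j \in [R]} Y_j \leq (1-\epsilon)\,\E[\ts\sum_{j \in [R]} Y_j]} \;\leq\; \exp\!\left(-\tfrac{\epsilon^2 R}{2\,\Gamma_{i^*}}\right) \;\leq\; \exp\!\left(-\tfrac{\epsilon^2 R}{2\,\Gamma}\right),
\]
where the last inequality uses monotonicity of $\exp(-\cdot)$ together with $\Gamma_{i^*} \leq \Gamma$. Setting the right-hand side to $\delta$ and solving for $R$ yields the stated sufficient sample size $R = \frac{2\Gamma}{\epsilon^2}\ln\frac{1}{\delta}$.

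There is no real obstacle here, as the statement is essentially a packaging of Chernoff's inequality; the only subtlety worth stating explicitly is the independence and uniformity of the $Y_j$'s, which is not quite tautological because distinct $\mathsf{NaivePerm}$ invocations share the same set of (potentially colluding) players. However, the perfectly hiding commitment together with the fresh, independent uniform permutation contributed by the honest player~$i^*$ in each invocation ensures that, conditioned on any transcript of prior samples, $\sigma^{(j)}$ is still uniform and independent of the past; this is precisely the content of Lemma~\ref{lemma:naive_secure} applied per invocation.
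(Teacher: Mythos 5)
Your proposal is correct and follows essentially the same route as the paper: define $Y_j$ as the reward of player~$i^*$ in the $j$-th P-sample, note $\E[Y_j] = \phi_{i^*}$ and $U_{\max,i^*} = \Gamma_{i^*}\cdot\phi_{i^*} \leq \Gamma\cdot\phi_{i^*}$, and apply the Chernoff Bound of Fact~\ref{fact:chernoff}. Your additional justification of independence and uniformity of the $Y_j$'s via Lemma~\ref{lemma:naive_secure} is a welcome (if routine) elaboration of a step the paper leaves implicit.
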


\ignore{
\begin{proof}
For $j \in [R]$,
let $Y_j$ be the reward received by player~$i^*$ in the $j$-th P-sample, assuming that
the adversary does not interfere.
The result immediately follows from the Chernoff Bound in Fact~\ref{fact:chernoff},
with $\E[Y_j] = \phi_{i^*}$ and $U_{\max} = \Gamma \cdot \phi_{i^*}$ from the definition of \textit{max-to-mean} parameter $\Gamma$.
\end{proof}
}

Recall that in a supermodular game, $\Gamma \leq n$. By using this fact we can obtain a sampling complexity $O(n)$. We note that~\cite{sv_convex} has also analyzed the sampling complexity
with no adversary in supermodular games.  However, instead of using Chernoff Bound,
they used the weaker Chebyshev's Inequality.  As a result,
their bound has a factor of $O(n^2)$, as opposed to $O(n)$ in our bound.

\ignore{
By using this fact we can obtain a sampling complexity related to $n$ (Corollary~\ref{cor:sup-baseline}).
 \quan{Delete 4.3. Apply lemma 4.2 with Gamma = n}

\begin{corollary}[Baseline Reward in Supermodular Games]
	\label{cor:sup-baseline}
	Suppose Algorithm~\ref{MainProtocol} is run
	with $\mathsf{NaivePerm}$ and $R$ number of P-samples such that there is no interference from the adversary.
	Then, for any $0 < \epsilon < 1$, the probability that
	player~$i^*$ receives less than $(1 - \epsilon) \cdot \phi_{i^*}$
	is at most $\exp(-\frac{\epsilon^2 R}{2 n})$,
	which is at most $\delta$ when $R = \frac{2n}{\epsilon^2} \log \frac{1}{\delta}$.
\end{corollary}
}

\subsection{Warmup: Known Violation Budget}

\ignore{
While $\epsilon$-expected maximin security ensures expectation not to deviate too much, it is not guaranteed that the final allocation of players is close to the expectation sequence. In this section we deal with the advanced task, $(\epsilon, \delta)$-maximin security. Specifically, our task is to give new parameters to the stopping condition $\mathsf{stopcond}$ to make sure that the revenue of $i^*$ is at least $(1-\epsilon)\phii$ with probability at least $1-\delta$. This is similar to the analysis of sampling complexity of approximating the Shapley value by random permutations and we follow the similar idea.
}

The simple case is that the violation budget~$C$ is known in advance.
Because each violation can cause a damage of at most $U_{\max, i^*}  \leq \Gamma \cdot \phi_{i^*}$,
it is straightforward to give a high probability statement.
Recall that under the violation model,
each unit of the violation budget can cause a susceptible player
to violate the protocol once in a round.
(In this case,
actually it is not important whether a violation can be detected.)

\begin{theorem}[Known Violation Budget]
\label{UB_of_level2}
For any $0 < \epsilon, \delta < 1$,
when the violation budget~$C$ is known,
$(\epsilon, \delta)$-maximin security can be achieved
by setting the stopping condition $\mathsf{stopcond}$
as having the number of P-samples reaching:
$R = \max\{\frac{8\Gamma}{\epsilon^2} \ln \frac{1}{\delta}, \frac{2 C \Gamma}{\epsilon}\}.$
\end{theorem}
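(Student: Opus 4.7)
The plan is to use the decomposition from Remark~\ref{remark:decompose}: for the $j$-th P-sample generated by $\mathsf{NaivePerm}$, write the reward received by the honest player $i^*$ as $X_j = Y_j - Z_j$, where $Y_j$ is the counterfactual reward of $i^*$ in the uniformly random permutation obtained had the adversary not violated, and $Z_j \in [0, U_{\max,i^*}]$ absorbs the damage caused by any violation in round $j$. The final allocation entry of $i^*$ is $\tfrac{1}{R}\sum_{j=1}^{R} X_j = \tfrac{1}{R}\sum_j Y_j - \tfrac{1}{R}\sum_j Z_j$, so it suffices to split the slack $\epsilon \phi_{i^*}$ evenly between the two terms, bounding $\sum Y_j$ from below with high probability and $\sum Z_j$ from above deterministically.

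\textbf{Controlling $\sum Y_j$.} By Lemma~\ref{lemma:naive_secure} and the assumption that $i^*$ is honest, every invocation of $\mathsf{NaivePerm}$ that is not violated produces a uniformly random permutation; since each invocation uses fresh independent randomness from $i^*$, the random variables $Y_1,\dots,Y_R$ can be taken to be i.i.d.\ with $\E[Y_j] = \phi_{i^*}$ and support contained in $[0, U_{\max,i^*}]$, where $U_{\max,i^*}/\phi_{i^*} = \Gamma_{i^*} \leq \Gamma$. Applying Fact~\ref{fact:chernoff} with deviation parameter $\epsilon/2$ gives
$$\Pr\!\left[\tfrac{1}{R}\textstyle\sum_{j} Y_j \leq (1-\tfrac{\epsilon}{2})\,\phi_{i^*}\right] \leq \exp\!\left(-\tfrac{\epsilon^{2} R}{8\Gamma}\right),$$
which is at most $\delta$ once $R \geq \tfrac{8\Gamma}{\epsilon^{2}} \ln\tfrac{1}{\delta}$.

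\textbf{Controlling $\sum Z_j$.} Every unit of the violation budget can reduce the reward of $i^*$ in some P-sample by at most $U_{\max,i^*} = \Gamma_{i^*}\phi_{i^*} \leq \Gamma\phi_{i^*}$, and the adversary can expend at most $C$ such units throughout the whole protocol. Hence with probability $1$,
$$\textstyle\sum_{j=1}^{R} Z_j \leq C \cdot U_{\max,i^*} \leq C\Gamma\,\phi_{i^*}.$$
Requiring the right-hand side to be at most $\tfrac{\epsilon R}{2}\phi_{i^*}$ gives the second bound $R \geq \tfrac{2C\Gamma}{\epsilon}$. Combining the two bounds, whenever $R \geq \max\{\tfrac{8\Gamma}{\epsilon^{2}}\ln\tfrac{1}{\delta},\, \tfrac{2C\Gamma}{\epsilon}\}$, with probability at least $1-\delta$ we have $\tfrac{1}{R}\sum_j X_j \geq (1-\tfrac{\epsilon}{2})\phi_{i^*} - \tfrac{\epsilon}{2}\phi_{i^*} = (1-\epsilon)\phi_{i^*}$, as required.

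\textbf{Main obstacle.} The subtle point is justifying that the $Y_j$'s may be treated as i.i.d.\ even in the presence of a rushing adversary whose decisions depend on the honest player's opened permutation: the adversary's influence on sample $j$ is entirely captured by $Z_j$, while $Y_j$ is a function only of the uniform randomness inherited from $i^*$'s commitment (which, by the ideal commitment assumption, is independent of the adversary's choices up to the opening). Once this decomposition is set up correctly, the rest of the argument is a clean combination of a single Chernoff application with a deterministic budget bound, sidestepping any need to analyze the complicated joint distribution of $(Y_j, Z_j)_j$.
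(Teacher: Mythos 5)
Your proposal is correct and follows essentially the same route as the paper: the same decomposition $X_j = Y_j - Z_j$ from Remark~\ref{remark:decompose}, a Chernoff bound with deviation $\epsilon/2$ on $\sum_j Y_j$ (this is exactly Lemma~\ref{lemma:baseline}), and the deterministic budget bound $\sum_j Z_j \leq C\,U_{\max,i^*} \leq C\Gamma\phi_{i^*}$, yielding the same two thresholds on $R$. Your closing remark about why the $Y_j$'s may be treated as i.i.d.\ despite the rushing adversary is a point the paper leaves implicit, so it is a welcome clarification rather than a deviation.
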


\begin{proof}
For $j \in [R]$, we decompose the reward received by player~$i^*$ in the $j$-th P-sample as $X_j := Y_j - Z_j$,
which is described in Remark~\ref{remark:decompose}.
Recall that $Y_j$ is the reward received by player~$i^*$
had the adversary not caused any violation (if any),
and $Z_j$ can be interpreted as the amount of damage
due to the adversary's violation in the $j$-th P-sample.

By Lemma~\ref{lemma:baseline}, $(1 - \frac{\epsilon}{2})$-approximation can be achieved
by $\sum_{j \in [R]} Y_j$
with probability at least $1 - \delta$, if $R \geq \frac{8\Gamma}{\epsilon^2} \ln \frac{1}{\delta}$.
In other words, under this range of $R$, with probability at least $1 - \delta$,
we have the lower bound for the sum $\sum_{j \in [R]} Y_j \geq (1 - \frac{\epsilon}{2}) \cdot R \phi_{i^*}$.

Now, the violation budget $C$ is known, which means that, with probability 1,
the adversary causes a total reward deduction  $\sum_{j \in [R]} Z_j \\ \leq C U_{\max, i^*} \leq C \Gamma \phi_{i^*}$,
which is at most $\frac{\epsilon}{2} \cdot R \phi_{i^*}$ if we choose $R \geq \frac{2 C \Gamma}{\epsilon}$.
Therefore, if $R = \max\{ \frac{8\Gamma}{\epsilon^2} \ln \frac{1}{\delta}, \frac{2 C \Gamma}{\epsilon} \}$,
$(\epsilon, \delta)$-maximin security is achieved.
\end{proof}

\ignore{
\begin{remark}
In Theorem~\ref{UB_of_level2},
we see that the sampling complexity $R$ depends on
two quantities: $O(\frac{\Gamma}{\epsilon^2} \ln \frac{1}{\delta})$
and $O(\frac{C\Gamma}{\epsilon})$.
While the first quantity comes from the measure concentration
analysis in Lemma~\ref{lemma:baseline},
we shall investigate in Section~\ref{sec:lowerbound}
that the second quantity is necessary in our P-sample model,
even when we try to use a more ``secure'' P-sample protocol
to replace $\mathsf{NaivePerm}$.
\end{remark}
}

\subsection{Unknown Violation Budget}

In the case where violation budget is unknown, the stopping condition
in Algorithm~\ref{MainProtocol} may no longer specify
the pre-determined number of P-samples.
However, one advantage of $\mathsf{NaivePerm}$
is that it can detect whether a violation has occurred.
Therefore, we can use the fraction of P-samples
in which violation has occurred in the stopping condition.

Note that if the adversary has an unlimited violation budget,
then the protocol may never terminate.
Hence, in the most general case,
the following lemma does not directly guarantee
$(\epsilon, \delta)$-maximin security.

\begin{lemma}[General Unknown Budget]
\label{UB_of_unlimited}
Suppose the adversary has some unknown violation budget (that may be unlimited).
Furthermore, for $0 < \epsilon, \delta < 1$, Algorithm~\ref{MainProtocol}
is run with $\mathsf{NaivePerm}$ and the
stopping condition $\mathsf{stopcond}$ is:
\begin{itemize}
\item the fraction of P-samples observed so far
with detected violations is at most $\frac{\epsilon}{2\Gamma}$; AND

\item the number of P-samples has reached at least:

$R_0 := \frac{8\Gamma}{\epsilon^2} \cdot (\ln \frac{16\Gamma}{\epsilon^2} + \ln \frac{1}{\delta} )$.

\end{itemize}


    Then, the probability that the protocol terminates and player~$i^*$ receives an outcome of less than $(1-\epsilon)\phii$ is at most $\delta$.
\end{lemma}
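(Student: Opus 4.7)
The plan is to combine the reward decomposition from Remark~\ref{remark:decompose} with a Chernoff-plus-union-bound argument over the random stopping time. For each P-sample~$j$, write the honest player's reward as $X_j = Y_j - Z_j$, where $Y_j$ is the counterfactual reward had no violation occurred in sample~$j$, and $Z_j \in [0, U_{\max,i^*}]$ is the damage caused by any violation. By Lemma~\ref{lemma:naive_secure}, the $Y_j$'s are i.i.d.\ with mean $\phi_{i^*}$ and support $[0, U_{\max,i^*}] \subseteq [0, \Gamma \phi_{i^*}]$, since the honest player's fresh uniform permutation in each run of $\mathsf{NaivePerm}$ makes the composed permutation uniform whenever the adversary chooses to avoid a detected violation.

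\textbf{Controlling the damage via the stopping condition.} Let $R$ denote the (random) number of P-samples at termination. The first part of $\mathsf{stopcond}$ guarantees that at most $\frac{\epsilon R}{2\Gamma}$ of these samples contain a detected violation, and since $Z_j = 0$ whenever sample~$j$ has no violation and $Z_j \leq U_{\max,i^*} \leq \Gamma \phi_{i^*}$ otherwise, upon termination we deterministically have $\sum_{j=1}^R Z_j \leq \frac{\epsilon R}{2\Gamma}\cdot \Gamma \phi_{i^*} = \frac{\epsilon R \phi_{i^*}}{2}$. Consequently, the ``bad'' event that the protocol terminates with honest reward below $(1-\epsilon)\phi_{i^*}$ is contained in the event that there exists some $R \geq R_0$ with $\sum_{j=1}^R Y_j < (1-\epsilon/2) R \phi_{i^*}$.

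\textbf{Union bound over the stopping time.} The key obstacle, and the main technical issue, is that $R$ is chosen by a rushing adversary and can depend adaptively on the history, so I cannot directly plug in a fixed~$R$. I will bypass this by union-bounding over all possible termination indices $R \geq R_0$ simultaneously. For each fixed $R$, Fact~\ref{fact:chernoff} applied to the i.i.d.\ $Y_j$'s (with parameter $\epsilon/2$) gives
\begin{equation*}
\Pr\!\left[ \sum_{j=1}^R Y_j < \bigl(1-\tfrac{\epsilon}{2}\bigr) R \phi_{i^*} \right] \;\leq\; \exp\!\left(-\tfrac{\epsilon^2 R}{8\Gamma}\right).
\end{equation*}
Summing this geometric series from $R = R_0$ to $\infty$ and using $1 - e^{-x} \geq x/2$ for $x \in (0,1]$ (applied to $x = \epsilon^2/(8\Gamma)$, which we may assume WLOG), the tail is bounded by $\frac{16\Gamma}{\epsilon^2} \exp(-\epsilon^2 R_0 / (8\Gamma))$.

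\textbf{Finishing.} Setting this bound at most~$\delta$ and solving for $R_0$ yields exactly $R_0 = \frac{8\Gamma}{\epsilon^2}\bigl(\ln \frac{16\Gamma}{\epsilon^2} + \ln \frac{1}{\delta}\bigr)$, as stated. The only step requiring care is justifying that the $Y_j$'s really are i.i.d.\ despite adaptive adversarial behavior: the commitment scheme's hiding property prevents the adversary from conditioning its sampled permutations on the honest player's choice within a round, and fresh independent randomness across rounds makes the sequence independent across samples. Everything else reduces to the decomposition bookkeeping and the geometric-series calculation.
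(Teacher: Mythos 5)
Your proposal is correct and follows essentially the same route as the paper's proof: the same decomposition $X_j = Y_j - Z_j$, the same deterministic bound $\sum_j Z_j \leq \frac{\epsilon}{2} R \phi_{i^*}$ from the stopping condition, the same per-$R$ Chernoff bound, and the same union bound over all termination indices $R \geq R_0$ summed as a geometric series via $1 - e^{-t} \geq t/2$. No substantive differences to report.
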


\begin{proof}
For $R \geq R_0$,
let $\delta_R$ be the  probability of the
bad event $\mcal{E}_R$ that the protocol terminates after exactly $R$ number
of P-samples and the honest player~$i^*$
receives an outcome of less than $(1-\epsilon)\phii$.
Our final goal is to show that $\sum_{R \geq R_0} \delta_R \leq \delta$.

\noindent \textbf{Analyzing Fixed $\mcal{E}_R$.}
We fix some $R$ and analyze $\delta_R$.
Recall that for $j \in [R]$,
we consider the reward decomposition $X_j := Y_j - Z_j$
as described in Remark~\ref{remark:decompose}.
Observe that the bad event $\mcal{E}_R$ implies both of the following
have occurred:
\begin{itemize}
\item  The number of P-samples with detected violation is at
most~$\frac{\epsilon R}{2 \Gamma}$.
Therefore, $\sum_{j \in [R]} Z_j \leq
\frac{\epsilon R}{2 \Gamma} \cdot U_{\max, i^*}
\leq \frac{\epsilon}{2} \cdot R \phii$.

\item  Had the adversary not interfered,
the sum is:

 $\sum_{j \in [R]} Y_j < (1 - \frac{\epsilon}{2}) R  \cdot \phii$.
\end{itemize}

The first point is part of the stopping condition.
If the second point does not happen,
this means that $\sum_{j \in [R]} (Y_j - Z_j)\geq
(1 - \frac{\epsilon}{2}) R  \cdot \phii
- \frac{\epsilon}{2} \cdot R \phii  =
(1 - \epsilon) R  \cdot \phii$, thereby violating $\mcal{E}_R$.

Therefore, we have $\delta_R \leq \Pr{\sum_{j \in [R]} Y_j < (1 - \frac{\epsilon}{2}) R  \cdot \phii}  \leq \exp(- \frac{\epsilon^2 R}{8\Gamma})$,
where the last inequality follows
from Lemma~\ref{lemma:baseline}.

\noindent \textbf{Analyzing the Union of All Bad Events.}
Therefore, we have

$$\sum_{R \geq R_0} \delta_R \leq
\frac{\exp\left( - \frac{\epsilon^2 R_0}{8\Gamma} \right)}{1 - \exp\left( - \frac{\epsilon^2}{8\Gamma} \right)} \leq \frac{16 \Gamma}{\epsilon^2 } \exp\left( - \frac{\epsilon^2 R_0}{8\Gamma} \right),$$

where the last inequality follows
because $1 - e^{-t} \geq \frac{t}{2}$ for $t \in (0,1)$.
By choosing $R_0 := \frac{8\Gamma}{\epsilon^2} \cdot (\ln \frac{16\Gamma}{\epsilon^2} + \ln \frac{1}{\delta} )$, the last expression equals $\delta$.  This completes the proof.
\end{proof}

\noindent \textbf{Finite Violation Budget.}
The adversary has some finite violation budget $C$
that is unknown to the protocol.
The stopping condition ensures that the protocol will terminate with probability 1.
In this case, $(\epsilon, \delta)$-maximin security follows immediately.

\begin{corollary}[Maximin Security for Finite Unknown Budget]
\label{cor:finite_unknown}
Suppose in Lemma~\ref{UB_of_unlimited},
the adversary has some finite (but unknown) violation budget $C$.
Let $0 < \epsilon, \delta < 1$ and $R_0$ be as defined in Lemma~\ref{UB_of_unlimited}.
Then, with probability 1,
the protocol terminates after at most
$\max\{R_0, \frac{2C\Gamma}{\epsilon}\}$ number of P-samples.
Moreover, $(\epsilon, \delta)$-maximin security is achieved for
an honest player.
\end{corollary}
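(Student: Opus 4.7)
The plan is to establish the termination bound first, then deduce the security guarantee almost immediately from Lemma~\ref{UB_of_unlimited}. The stopping condition in Lemma~\ref{UB_of_unlimited} has two parts: (i) the fraction of P-samples with detected violation is at most $\frac{\epsilon}{2\Gamma}$, and (ii) at least $R_0$ P-samples have been generated. Since the adversary has a total violation budget of $C$, after any $R$ P-samples the total number of detected violations is at most $C$, so the violation fraction is at most $\frac{C}{R}$.

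For the termination claim, I would observe that once $R \geq \frac{2C\Gamma}{\epsilon}$, we have $\frac{C}{R} \leq \frac{\epsilon}{2\Gamma}$, so condition (i) is unconditionally satisfied regardless of the adversary's strategy. Once $R \geq R_0$, condition (ii) is satisfied. Hence after at most $\max\{R_0, \frac{2C\Gamma}{\epsilon}\}$ P-samples, both conditions hold simultaneously and the protocol terminates. This termination happens deterministically (with probability $1$), since the bound depends only on the fixed parameters $\epsilon, \delta, \Gamma, C$ and not on any random event.

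For the security claim, I would invoke Lemma~\ref{UB_of_unlimited} directly. That lemma bounds by $\delta$ the probability of the joint event that (a) the protocol terminates and (b) the honest player~$i^*$ receives less than $(1-\epsilon) \phi_{i^*}$. Since we have just shown that event (a) occurs with probability $1$, the joint event coincides with event (b) alone. Therefore $\Pr{x_{i^*} < (1-\epsilon) \phi_{i^*}} \leq \delta$, which is exactly $(\epsilon,\delta)$-maximin security per Definition~\ref{defn:maximin}.

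There is no real obstacle here; this corollary is essentially a bookkeeping step on top of Lemma~\ref{UB_of_unlimited}. The only point that needs care is the observation that the termination bound $\max\{R_0, \frac{2C\Gamma}{\epsilon}\}$ is worst-case deterministic, which is what upgrades the conditional guarantee of Lemma~\ref{UB_of_unlimited} (bounding the bad event intersected with termination) to an unconditional $(\epsilon,\delta)$-maximin statement in the finite-budget regime.
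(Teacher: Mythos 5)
Your proposal is correct and follows exactly the route the paper intends: the paper's own justification is the one-line remark that the stopping condition forces termination with probability~1 (since a budget of $C$ makes the violation fraction at most $C/R \leq \frac{\epsilon}{2\Gamma}$ once $R \geq \frac{2C\Gamma}{\epsilon}$), after which the bound on the joint event in Lemma~\ref{UB_of_unlimited} collapses to the unconditional $(\epsilon,\delta)$ guarantee. You have simply written out the bookkeeping that the paper leaves implicit; there is no gap.
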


\subsection{Unknown Violation Rate}

The violation rate~$f \in [0,1]$ means that
for every $T > 0$, during the generation of the first $T$ number of P-samples,
there can be at most $f T$ violations. However,
the protocol does not know $f$ in advance.

\noindent \textbf{Adaptive Maximin Security.}
The protocol has some target approximation parameter $\epsilon \in [0,1]$.
However, whether this is achievable depends on the violation rate~$f$
and the max-to-mean ratio~$\Gamma$.
\ignore{
In the scenario where the violation budget is unbounded, if we use the stopping condition described in Lemma~\ref{UB_of_unlimited}, it is possible that the protocol never terminates. This will occur even when the violation budget is limited by a fraction of the number of P-samples, that is, at any moment when the adversary has executed $T$ P-samples, the adversary has a budget $C = fT$, where $f$ represents the fraction. If $\frac{\epsilon}{2\Gamma} < f$ and the adversary violates whenever it can, then the stopping condition in Lemma~\ref{UB_of_unlimited} will never be satisfied, leading to the impossibility of achieving $(\epsilon, \delta)$-maximin security.

Since some $(\epsilon, \delta)$-maximin security is not attainable, it's reasonable to adjust $\epsilon$ and the stopping condition according to the observation so that the protocol always terminates with an attainable $\epsilon$.
}

Algorithm~\ref{adprotocol} outlines an adaptive algorithm that returns both an allocation $x$ and an approximation parameter $\widehat{\epsilon}$. Theorem~\ref{lemma:adaptive_rate} describes the security property of Algorithm~\ref{adprotocol}.

\begin{algorithm}
	\caption{Abstract Adaptive Protocol to Return an Allocation}\label{adprotocol}
	\SetKwFunction{GenPerm}{GenPerm}
	\KwIn{A game $(\all, v)$, $\Gamma$, $\delta$, $\epsilon$.}
	\KwOut{An allocation $x \in \R_+^\all$ and $\epsilon_k$.}
	$z, x \gets \vec{0} \in \R_+^\all$,  $R \gets 0$ \\
	$k \gets 0$,
	$V \gets 0$\\ 
	\Comment{For all $t \geq 0$, denote $\epsilon_t := 2^{-t}$.}
	\While{$\epsilon_k > \epsilon$}{
		\Comment{Here $\aux$ is the number of violations.}
		$(\sigma, \aux) \gets$ \GenPerm{$\all$} \\
		\Comment{Marginal contribution of rank~$j$ player in $\sigma$}
		\textbf{for every} $j \in [n]$,
		$z_{\sigma_j} \gets z_{\sigma_j} + \mu_{\sigma_j}(\sigma_{<j})$ \\
		$R \gets R+1$ \\
		$V \gets V + |\aux|$\\
		\If{$R \geq \frac{8\Gamma}{\epsilon_{k+1}^2}\ln \frac{2^{k+1}}{\delta}$}{
			\If{$\frac{V}{R} > \frac{\epsilon_{k+1}}{2\Gamma}$}{
				\textbf{break}}
			\Else{
				$x \gets \frac{1}{R} \cdot z$\\
				$k \gets k + 1$\\
			}
		}

	}
	\Return $(x, \epsilon_k)$
	\Comment{For any honest player $i^*$, with probability at most $\delta$, $x_{i^*} < (1-\epsilon_k)\phii$. }
\end{algorithm}

\begin{theorem}[Adaptive Maximin Security for Violation Rate]
	\label{lemma:adaptive_rate}
	Suppose the adversary has some (unknown) violation rate $f \in [0,1]$.
	%
	%
		Furthermore, for any $0 < \epsilon, \delta < 1$, we execute Algorithm~\ref{adprotocol} using the $\mathsf{NaivePerm}$ approach.
	
	Then, the protocol
	returns $(x, \widehat{\epsilon})$ such that
	$\widehat{\epsilon} \leq \max\{\epsilon, 4f\Gamma\}$ holds with probability 1.
	Moreover, for any honest player $i^*$,
	with probability at least $1 - \delta$,
	its received reward satisfies $x_{i^*} \geq (1- \widehat{\epsilon})\phii$.
	
\end{theorem}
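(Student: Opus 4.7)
\medskip
\noindent\textbf{Proof proposal.}
The plan is to separate the two assertions. The first, an almost-sure bound on the returned $\widehat{\epsilon}$, follows purely from the violation-rate constraint $V\le fR$ and the structure of the \textbf{break}/update branches. The second, the high-probability allocation guarantee, follows by applying the Chernoff-style bound of Lemma~\ref{lemma:baseline} at every level $k$ at which $x$ is updated, with the failure tolerance calibrated to $\delta/2^{k}$ so that a geometric union bound over all levels yields total failure probability at most $\delta$.

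\medskip
\noindent\textbf{Step 1: bound on $\widehat{\epsilon}$.}
Let $k^\star$ denote the final value of the loop variable $k$ when the protocol returns. Either (a) the \textbf{while} test fails and the loop exits normally, so $\epsilon_{k^\star}\le\epsilon$ directly; or (b) the protocol executes \textbf{break}, which occurs only on a round where $V/R>\epsilon_{k^\star+1}/(2\Gamma)$. The violation-rate assumption gives $V/R\le f$ at every round, so case (b) forces $f>\epsilon_{k^\star+1}/(2\Gamma)$, i.e., $\widehat{\epsilon}=\epsilon_{k^\star}=2\epsilon_{k^\star+1}<4f\Gamma$. Combining both cases gives $\widehat{\epsilon}\le\max\{\epsilon,4f\Gamma\}$ with probability $1$.

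\medskip
\noindent\textbf{Step 2: per-level concentration.}
For each $k\ge 1$, let $R_k:=\lceil 8\Gamma\epsilon_k^{-2}\ln(2^k/\delta)\rceil$ be the (deterministic) number of P-samples at which the update to level $k$ is attempted. Decompose the honest player's reward in the $j$-th P-sample as $X_j=Y_j-Z_j$ using Remark~\ref{remark:decompose}. If the level-$k$ assignment actually happens, then the violation check passed, so $V\le\epsilon_k R_k/(2\Gamma)$ at that moment, hence
\[
\ts\sum_{j=1}^{R_k} Z_j \;\le\; V\cdot U_{\max,i^*} \;\le\; (\epsilon_k/2)\,R_k\,\phii.
\]
The $Y_j$'s are, by Lemma~\ref{lemma:naive_secure}, i.i.d.\ rewards of $i^*$ under a uniformly random permutation of $\all$, with mean $\phii$ and support in $[0,\Gamma\phii]$. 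Applying Lemma~\ref{lemma:baseline} with parameter $\epsilon_k/2$ gives
\[
\Pr{\ts\sum_{j=1}^{R_k} Y_j < (1-\epsilon_k/2)\,R_k\,\phii} \;\le\; \exp\!\bigl(-\epsilon_k^2 R_k/(8\Gamma)\bigr) \;\le\; \delta/2^{k}.
\]
On the complement, the assigned $x_{i^*}=R_k^{-1}\sum_j X_j \ge (1-\epsilon_k)\phii$ immediately after the level-$k$ update.

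\medskip
\noindent\textbf{Step 3: union over levels, and the main obstacle.}
Union-bounding over $k\ge 1$, the probability that some level-$k$ assignment violates its promise is at most $\sum_{k\ge 1}\delta/2^{k}=\delta$. Since the returned $x$ equals its value after the last successful level assignment at level $k^\star$ (and equals $\vec{0}$ with $\widehat{\epsilon}=1$ if no level was ever passed, in which case the bound is trivial), we conclude $x_{i^*}\ge(1-\widehat{\epsilon})\phii$ with probability at least $1-\delta$. The most delicate point, which I expect to be the main obstacle, is justifying independence of the $Y_j$'s against a \emph{rushing, adaptive} adversary whose committed permutations can depend on the transcript so far; the resolution is that each composed $\sigma$ always includes the honest player's freshly-sampled uniform permutation, kept hidden by the ideal commitment until its open phase, so by Lemma~\ref{lemma:naive_secure} each $\sigma$ is uniform on permutations of $\all$ and independent of the history regardless of the adversary's adaptivity, which is precisely what Chernoff requires.
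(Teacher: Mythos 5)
Your proposal is correct and follows essentially the same route as the paper's proof: the same case analysis on the two exit conditions for the almost-sure bound on $\widehat{\epsilon}$, the same $X_j = Y_j - Z_j$ decomposition with the violation check bounding $\sum_j Z_j \le (\epsilon_k/2) R_k \phii$, the same application of Lemma~\ref{lemma:baseline} at level $k$ with failure budget $\delta/2^k$, and the same geometric union bound over levels (including the observation that the level-$0$ case is vacuous since rewards are non-negative). No substantive differences to report.
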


\begin{proof}

	Let $(x, \epsilon_k)$ be the returned value. The algorithm terminates either when $\epsilon_k \leq \epsilon$ or $\frac{V}{R} > \frac{\epsilon_{k+1}}{2\Gamma}$. Since $f \geq \frac{V}{R}$ and $\epsilon_{k+1} = \frac{\epsilon_k}{2}$, the second inequality implies that $\epsilon_k < 4f\Gamma$.
	
	Let $\mcal{B}$ denote the bad event that the protocol terminates and player $i^*$ receives a reward of less than $(1- \epsilon_k)\phii$. Let $\epsilon_t = \frac{1}{2^t}$. Let $\mcal{B}_t$ denote the event that the protocol terminates with the value of variable $k$ equals $t$, and player $i^*$ receives a reward of less than $(1- \epsilon_t)\phii$.
    When $t = 0$, $\epsilon_t = 1$. Since $i^*$ will receive a non-negative reward, $\Pr{\mcal{B}_0} = 0$, so we only focus on the cases when $t \geq 1$.
	It holds that $\mcal{B} \subseteq \cup_{t \ge 1} \mcal{B}_t$, therefore, $\Pr{\mcal{B}} \leq \Pr{\cup_{t \ge 1} \mcal{B}_t}$. Our goal is to show that:
	\begin{equation*}
    \Pr{\cup_{t \ge 1} \mcal{B}_t} \leq \delta.
    \end{equation*}
	
	\noindent \textbf{Analyzing the probability of $\cup_{t \ge 1} \mcal{B}_t$. }
	We can follow a similar argument as in the Lemma~\ref{UB_of_unlimited} about analyzing $\mcal{E}_R$.
	Let $R_t = \frac{8\Gamma}{\epsilon_t^2} \ln \frac{2^{t}}{\delta}$. 
	Let $\mcal{B}_t'$ denote the event that when $\mathsf{NaivePerm}$ has run $R_t$ number of P-samples and the fraction of P-samples observed so far with detected violations is at most $\frac{\epsilon_t}{2\Gamma}$, player $i^*$ receives a reward of less than $(1- \epsilon_t)\phii$. It's clear that $\mcal{B}_t \subseteq \mcal{B}_t'$, so $\Pr{\mcal{B}_t} \leq \Pr{\mcal{B}_t'}$.
	
	Recall that for $j \in [R]$,
	we consider the reward decomposition $X_j := Y_j - Z_j$
	as described in Remark~\ref{remark:decompose}.
	Observe that the event $\mcal{B}_t'$ implies both of the following
	have occurred:
	\begin{itemize}
		\item  The number of P-samples with detected violation is at
		most~$\frac{\epsilon_t R_t}{2 \Gamma}$.
				Therefore, $\sum_{j \in [R_t]} Z_j \leq
		\frac{\epsilon_t R_t}{2 \Gamma} \cdot U_{\max, i^*}
		\leq \frac{\epsilon_t}{2} \cdot R_t \phii$.
		
		\item  Had the adversary not interfered,
		the sum  is:
		
		$\sum_{j \in [R_t]} Y_j < (1 - \frac{\epsilon_t}{2}) R_t  \cdot \phii$.
	\end{itemize}
	
	The first point is from the definition of $\mcal{B}_t'$.
	If the second point does not happen,
	this means that $\sum_{j \in [R_t]} (Y_j - Z_j)\geq
	(1 - \frac{\epsilon_t }{2}) R_t  \cdot \phii
	- \frac{\epsilon_t}{2} \cdot R_t \phii  =
	(1 - \epsilon_t) R_t  \cdot \phii$, thereby violating $\mcal{B}_t'$.

	Therefore, we have $\Pr{\mcal{B}_t'} \leq \Pr{\sum_{j \in [R_t]} Y_j < (1 - \frac{\epsilon_t}{2}) R_t  \cdot \phii}  \leq \exp(- \frac{\epsilon_t^2 R_t}{8\Gamma}) = \frac{\delta}{2^t}$,
	where the last inequality follows
	from Lemma~\ref{lemma:baseline}.
	
	
	Therefore, we have
	
	$$\Pr{\mcal{B}} \leq \Pr{\cup_{t \ge 1} \mcal{B}_t} \leq \Pr{\cup_{t \ge 1} \mcal{B}_t'} \leq \sum_{t \ge 1} \frac{\delta}{2^t} \leq \delta, $$
	where the first inequality follows because $\mcal{B} \subseteq \cup_{t \ge 1} \mcal{B}_t$, the second inequality holds because $\mcal{B}_t \subseteq \mcal{B}_t'$, the third inequality is from the union bound.
\end{proof}


\section{Sampling Complexity to Achieve Expected Maximin Security}
\label{sec:lowerbound}

In \Cref{algorithm},
we give an abstract Algorithm~\ref{MainProtocol}
that is based on permutation sampling.
Using $\mathsf{NaivePerm}$ to generate P-samples,
we see in \Cref{sec:high_prob} that to achieve $(\epsilon, \delta)$-maximin security,
a sampling complexity of
$\max \{O(\frac{\Gamma}{\epsilon^2} \ln \frac{1}{\delta}),
O(\frac{C\Gamma}{\epsilon}) \}$ is sufficient.
While the term $O(\frac{\Gamma}{\epsilon^2} \ln \frac{1}{\delta})$ is due
to measure concentration analysis to achieve a high probability statement,
in this section, we shall
investigate that term $O(\frac{C\Gamma}{\epsilon})$
is actually sufficient and necessary
to achieve $\epsilon$-expected maximin security
by Algorithm~\ref{MainProtocol},
even when we try to replace $\mathsf{NaivePerm}$
with a more ``secure'' P-sample protocol in supermodular games. 

While we give an upper bound of the sampling complexity
to achieve expected maximin security for the baseline protocol using $\mathsf{NaivePerm}$,
the main result of this section
is a lower bound on the sampling complexity
even when we restrict to supermodular games and impose further conditions to help an honest player.


\noindent \textbf{Model Assumptions.}  We
consider an adversary with some known budget~$C$.
For analyzing the upper bound of sampling complexity,
we consider using the same model as in \Cref{sec:high_prob}.

For analyzing the lower bound of sampling complexity,
we consider supermodular games,
and replace $\mathsf{NaivePerm}$ with another P-sample protocol
known as $\mathsf{SeqPerm}$ that has better security properties.
Moreover, we consider the perpetual violation model
in which detected violating players will be moved
to the least preferable positions, thereby giving an advantage
to an honest player.

\subsection{Upper Bound for Number of P-Samples Using $\mathsf{NaivePerm}$}

\noindent \textbf{Baseline Protocol.}  We instantiate
the abstract protocol in Algorithm~\ref{MainProtocol}
by using $\mathsf{NaivePerm}$ to sample permutations.
Moreover, the stopping condition can simply be having
enough number of P-samples.

\begin{lemma}[Upper Bound on Sampling Complexity]
\label{lemma:ub_naiveperm}
By using $\mathsf{NaivePerm}$ in Algorithm~\ref{MainProtocol}
against an adversary with a known violation budget of~$C$,
for any $\epsilon > 0$,
$R = \frac{\Gamma C}{\epsilon}$ number of P-samples
is sufficient to achieve $\epsilon$-expected maximin security.
\end{lemma}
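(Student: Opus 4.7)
The plan is to leverage the reward decomposition from Remark~\ref{remark:decompose} and argue by linearity of expectation, since the only tool we need is an expectation bound (no concentration). Write the reward received by the honest player $i^*$ from the $j$-th P-sample as $X_j = Y_j - Z_j$, where $Y_j$ is the reward $i^*$ would obtain under the $j$-th sample had the adversary caused no violation, and $Z_j \geq 0$ captures the damage from the adversary's violations (if any) in that sample. The final allocation is $x_{i^*} = \frac{1}{R}\sum_{j \in [R]} X_j$, so
\[
\E[x_{i^*}] \;=\; \frac{1}{R}\sum_{j \in [R]} \E[Y_j] \;-\; \frac{1}{R}\,\E\!\left[\sum_{j\in[R]} Z_j\right].
\]

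For the $Y_j$ term I would invoke Lemma~\ref{lemma:naive_secure}: conditioned on there being no violation in the $j$-th invocation of $\mathsf{NaivePerm}$, the returned permutation is uniformly random because $i^*$ is honest. Hence $\E[Y_j] = \phi_{i^*}$ for each $j$, and by linearity $\sum_{j} \E[Y_j] = R \cdot \phi_{i^*}$, independently of any correlations the rushing adversary may create between samples.

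For the $Z_j$ term I would use the definition of the violation budget together with the max-to-mean ratio. A single violation in a P-sample can reduce $i^*$'s reward by at most $U_{\max,i^*} \leq \Gamma \cdot \phi_{i^*}$. Since the adversary is allowed at most $C$ violations in total over the entire execution, with probability $1$ we have $\sum_{j\in[R]} Z_j \leq C \cdot U_{\max,i^*} \leq C\Gamma \cdot \phi_{i^*}$. Combining the two bounds yields
\[
\E[x_{i^*}] \;\geq\; \phi_{i^*} \;-\; \frac{C\Gamma}{R}\,\phi_{i^*},
\]
and choosing $R = \frac{C\Gamma}{\epsilon}$ gives $\E[x_{i^*}] \geq (1-\epsilon)\phi_{i^*}$, which is exactly $\epsilon$-expected maximin security.

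There is no real obstacle here; the only subtlety worth flagging is that although the rushing adversary can induce arbitrary correlations among the $Y_j$'s (and between the $Y_j$'s and $Z_j$'s), we do not need independence. We only use (i) the per-sample marginal $\E[Y_j] = \phi_{i^*}$, which follows from honesty of $i^*$ plus the ideal commitment assumption, and (ii) the almost-sure cap $\sum_j Z_j \leq C\Gamma \cdot \phi_{i^*}$ from the known budget. This is precisely why the expectation proof works with a simple linear dependence on $C\Gamma/\epsilon$, in contrast to the high-probability analysis in Section~\ref{sec:high_prob}, which additionally needs Chernoff-style concentration on $\sum_j Y_j$.
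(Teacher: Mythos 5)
Your proposal is correct and follows essentially the same route as the paper's proof: the same decomposition $X_j = Y_j - Z_j$ from Remark~\ref{remark:decompose}, the same use of $\E[Y_j] = \phi_{i^*}$ via linearity of expectation, and the same almost-sure bound $\sum_j Z_j \leq C\,U_{\max,i^*} \leq C\Gamma\,\phi_{i^*}$ from the known budget. Your added remark that no independence among the $Y_j$'s is needed is a fair clarification but does not change the argument.
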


\begin{proof}
Consider running Algorithm~\ref{MainProtocol} with
$R = \frac{\Gamma C}{\epsilon}$ number of P-samples.
For $j \in [R]$,
we consider the reward decomposition
$X_j := Y_j - Z_j$ as described in Remark~\ref{remark:decompose}.
When the adversary does not interfere,
we have $\sum_{j \in [R]} \E[Y_j] = R \cdot \phi_{i^*}$.


Because of the violation budget~$C$,
the adversary can only instruct susceptible players
to violate in at most $C$ of the P-samples.
Hence, with probability 1, $\sum_{j \in [R]} Z_j \leq
C \cdot U_{\max, i^*} \leq C\Gamma \cdot \phi_{i^*}$.
Therefore, the actual
reward received from the $R$ samples
is at least $R \cdot \phi_{i^*} - \Gamma C \cdot \phi_{i^*}
= R \cdot (1 - \epsilon) \cdot \phi_{i^*}$.
Dividing this by $R$, we see that the expected outcome for
player~$i^*$ is at least
$(1 - \epsilon) \cdot \phi_{i^*}$.
\end{proof}

\ignore{
Now we analyze how many P-samples are enough for the $\epsilon$-maximin security in Level 2.

By Fact \ref{MMC}, the maximum revenue $i^*$ can get in one P-sample is $\mu_{i^*}(\all \setminus \{i^*\})$. We denote it by $v_{max}$. We first prove a pessimistic but useful lower bound for the expected revenue of $i^*$.

\begin{lemma}\label{LB_of_Erc}
    Let $E_{R,C}$ be the expected revenue of $i^*$ in the protocol with $R$ P-samples and a budget of $C$ corruptions for the adversaries, then
    \begin{equation}
        E_{R,C} \ge R \phii - C \cdot v_{max}.
    \end{equation}
\end{lemma}

\begin{proof}
    We prove it by induction. The basic cases are that $\forall C \ge 1, E_{1,C}=0$ and $\forall R \ge 1, E_{R,0} = R\phii$.

    Suppose that $E_{R',C'} \ge R' - C' \cdot v_{max}$ holds for all $R' < R$ and $C' \le C$. Since we have seen that an honest player can lose almost all in one P-sample, here we assume that in each P-sample the adversaries can either follow the protocol or drop the outcome of $i^*$ to 0 by one corruption, and the decision can be made after observing the outcome of $i^*$ in this P-sample. This ability is beyond the real ability of adversaries, therefore we will get a sure lower bound for $E_{R,C}$.

    \begin{equation}
        \begin{aligned}
        E_{R,C} &= \sum_{\sigma} \frac{1}{n!} \min( E_{R-1,C} + \mu_{i^*}(P_{i^*}(\sigma)), E_{R-1,C-1} ) \\
        &\ge \sum_{\sigma} \frac{1}{n!} \min( (R-1)\phii - C \cdot v_{max} + \mu_{i^*}(P_{i^*}(\sigma)), (R-1)\phii - (C-1) \cdot v_{max} ) \\
        &= (R-1)\phii - C \cdot v_{max} + \sum_{\sigma} \frac{1}{n!} \min( \mu_{i^*}(P_{i^*}(\sigma)), v_{max} ) \\
        &= (R-1)\phii - C \cdot v_{max} + \sum_{\sigma} \frac{1}{n!} \mu_{i^*}(P_{i^*}(\sigma)) \\
        &= (R-1)\phii - C \cdot v_{max} + \phii \\
        &= R\phii - C \cdot v_{max}
        \end{aligned}
    \end{equation}
\end{proof}

(Actually under the assumption that in one P-sample the adversaries are able to drop the outcome of $i^*$ to $0$ by one corruption, this lower bound can be achieved. The strategy is that whenever the adversaries see that $i^*$ is to get $v_{max}$ in a P-sample, they abort. Intuitively if $R$ is large enough, the adversaries encounter with this case at least $C$ times with high probability, so that they can drop $C \cdot v_{max}$ from the outcome of player $1$.)

The next lemma shows the relation among $v_{max}$, $n$ and $\phii$.

\begin{lemma}\label{vmax_n_phi}
    \begin{equation}
        \frac{v_{max}}{n} \le \phii.
    \end{equation}
\end{lemma}

\begin{proof}
    From the definition of Shapley value by permutations, $i^*$ falls in the most preferable position with outcome $v_{max}$ and probability $\frac{1}{n}$, thus $\phii$, the expected outcome, should be at least $v_{max} \cdot \frac{1}{n}$.

    Formally, by Equation \ref{sv},
    \begin{equation}
        \phii \ge \frac{1}{n} \binom{n-1}{|\all \setminus \{i^*\}|}^{-1} \mu_{i^*}(\all \setminus \{i^*\}) = \frac{v_{max}}{n}.
    \end{equation}
\end{proof}

\begin{theorem}
    To obtain $\epsilon$-maximin security for Level 2, the stopping condition $STP$ can be "reaching $\frac{nC}{\epsilon}$ P-samples".
\end{theorem}

\begin{proof}
    By Lemma \ref{LB_of_Erc} and \ref{vmax_n_phi}, let $R$ be $\frac{nC}{\epsilon}$, then
    \begin{equation}
        \begin{aligned}
        E_{R,C} &\ge R \phii - C \cdot n \phii \\
        &= R \phii - R \epsilon \phii \\
        &= (1-\epsilon) R \phii.
        \end{aligned}
    \end{equation}

    Since $R\phii$ is exactly the revenue of $i^*$ in the protocol if everyone behaves honestly, the protocol is $\epsilon$-maximin secure.
\end{proof}

\begin{corollary}
    To obtain $\epsilon$-maximin security for Level 1, the stopping condition $STP$ can be "reaching $\frac{nC}{\epsilon}$ P-samples".
\end{corollary}

}

\subsection{Punish Violating Players in Supermodular Games}
In this section, we demonstrate that in Algorithm~\ref{MainProtocol}, if the violating players are given the least preferable positions, then the expected reward of an honest player $i^*$ will not decrease in supermodular games. We call this violation model to be \textit{perpetual}.



\noindent \textbf{Perpetual Violation. }In the random permutation approach, detected violation by a player~$i$
can be handled by always placing that player~$i$ in the least preferable rank
in future permutation samplings.
This can be considered as a modified game $v': 2^{\all \setminus \{i\}} \rightarrow \R$
with one less player, where $v'(S) = v(S \cup \{i\})$.
The following lemma
states that this cannot decrease the expected reward of an honest player~$i^*$ in supermodular games.

\begin{lemma}[Punishing Perpetual Violating Players]
	\label{lemma:perpetual}
	Consider sampling a random permutation
	uniformly at random from some restricted set such that
	some player $i \neq i^*$ is always at the least preferable rank.
	Then, the expected reward of player~$i^*$
	is still at least $\phi_{i^*}$.
\end{lemma}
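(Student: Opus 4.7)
The plan is to bound the conditional expected reward of $i^*$ against $\phi_{i^*}$ via an explicit deterministic coupling between permutations, driven entirely by supermodularity through Fact~\ref{MMC}. By Lemma~\ref{lemma:ranks}, the least preferable rank is rank~$1$, so the restricted set in the statement is exactly the collection of permutations of $\all$ with player $i$ pinned at rank~$1$.

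Concretely, for each permutation $\sigma$ of $\all$ with $\sigma(k)=i$, I would define a ``shift-to-front'' map $\Phi(\sigma)=\sigma'$ by setting $\sigma'(1)=i$, $\sigma'(j+1)=\sigma(j)$ for $1 \le j \le k-1$, and $\sigma'(j)=\sigma(j)$ for $j > k$. A direct counting check shows that $\Phi$ is an $n$-to-$1$ surjection from the $n!$ permutations of $\all$ onto the $(n-1)!$ permutations with $i$ in rank~$1$: the $n$ preimages of a fixed target correspond to the $n$ possible choices of the original rank of~$i$. In particular, if $\sigma$ is drawn uniformly from all permutations of $\all$, then $\Phi(\sigma)$ is uniform on the restricted set appearing in the lemma.

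Next, I would verify pointwise that $\mu_{i^*}(P_{i^*}(\Phi(\sigma))) \ge \mu_{i^*}(P_{i^*}(\sigma))$ for every $\sigma$. If $i^*$ sits at rank $\ell > k$ in $\sigma$, then under $\Phi$ its position and its predecessor set (which already contains $i$) are unchanged, so the two marginal contributions coincide. If $i^*$ sits at rank $\ell < k$, then under $\Phi$ it moves to rank $\ell+1$ and its predecessor set acquires exactly the extra element~$i$; Fact~\ref{MMC} applied to $P_{i^*}(\sigma) \subseteq P_{i^*}(\sigma)\cup\{i\}$ then gives $\mu_{i^*}(P_{i^*}(\sigma) \cup \{i\}) \ge \mu_{i^*}(P_{i^*}(\sigma))$. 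Averaging this pointwise inequality over uniform $\sigma$ and combining with the uniform-measure compatibility of $\Phi$ yields $\E[\mu_{i^*}(P_{i^*}(\tau))] \ge \phi_{i^*}$ for the restricted sample $\tau$, which is exactly the claim.

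I expect the only mildly delicate step to be the surjection and uniform-measure check for $\Phi$, but this is purely combinatorial bookkeeping rather than a conceptual obstacle; supermodularity itself enters as a single clean application of Fact~\ref{MMC}, so no additional machinery is required.
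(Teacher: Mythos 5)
Your proof is correct, but it takes a genuinely different route from the paper's. The paper computes $\phi'_{i^*}-\phi_{i^*}$ directly from the subset-weight form of the Shapley value: it splits the sum over $S \subseteq \all\setminus\{i^*\}$ according to whether $i \in S$, combines the factorial coefficients, and arrives at $\sum_{S \subseteq \all\setminus\{i^*,i\}} \frac{|S|!(n-|S|-1)!}{n!}\bigl(\mu_{i^*}(S\cup\{i\})-\mu_{i^*}(S)\bigr) \ge 0$, with Fact~\ref{MMC} applied at the end. You instead construct an explicit $n$-to-$1$ measure-preserving coupling $\Phi$ from uniform permutations onto the restricted set and prove a \emph{pointwise} domination $\mu_{i^*}(P_{i^*}(\Phi(\sigma))) \ge \mu_{i^*}(P_{i^*}(\sigma))$, again via a single application of Fact~\ref{MMC} to $P_{i^*}(\sigma) \subseteq P_{i^*}(\sigma)\cup\{i\}$. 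Your coupling checks out (each restricted permutation has exactly the $n$ preimages given by the possible original ranks of $i$, and the case analysis $\ell<k$ versus $\ell>k$ is exhaustive since $i\neq i^*$), so the averaging step is valid. What your approach buys is the avoidance of any factorial-coefficient bookkeeping and a stronger, samplewise statement; what the paper's buys is a closed-form expression for the gap $\phi'_{i^*}-\phi_{i^*}$ and a shorter write-up once one trusts the algebra. Both hinge on the same underlying fact that adding $i$ to $i^*$'s predecessor set can only help under supermodularity.
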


\begin{proof}
	Let $\phi'_{i^*}$ be the expected reward
	of player~$i^*$ under the restricted sampling procedure. Then, we have
	%
	\begin{equation*}
		\begin{aligned}
			\phi'_{i^*}-\phi_{i^*} =& \sum_{S \subseteq \all \setminus \{i^*,i\}} \frac{|S|!(n-|S|-2)!}{(n-1)!} \cdot \mu_{i^*}(S \cup \{i\}) - \\ &\sum_{S \subseteq \all \setminus \{i^*\}} \frac{|S|!(n-|S|-1)!}{n!} \cdot \mu_{i^*}(S) \\
			=& \sum_{S \subseteq \all \setminus \{i^*,i\}} \frac{|S|!(n-|S|-2)!}{(n-1)!} \cdot \mu_{i^*}(S \cup \{i\}) \\
			&- \sum_{S \subseteq \all \setminus \{i^*,i\}}  (\frac{(|S|+1)!(n-|S|-2)!}{n!} \cdot \mu_{i^*}(S\cup\{i\}) + \\ & \frac{|S|!(n-|S|-1)!}{n!} \cdot \mu_{i^*}(S) ) \\
			=& \sum_{S \subseteq \all \setminus \{i^*,i\}} \frac{|S|!(n-|S|-1)!}{n!} \cdot (\mu_{i^*}(S\cup\{i\}) - \mu_{i^*}(S)) \geq 0,
		\end{aligned}
	\end{equation*}
	
	where the last inequality follows from Fact \ref{MMC}.
\end{proof}

\begin{remark}
	If more than one player is detected to be violating, Lemma \ref{lemma:perpetual} can be applied inductively to show the same guarantee of the expected reward of player $i^*$.
\end{remark}

\ignore{
\begin{remark}
	It can be inferred from Lemma \ref{lemma:perpetual} that for any adversary strategy under perpetual violation, there exists an adversary strategy under normal violation, where the reward of player $i^*$ is lower. \quan{Maybe need more illustration. }
\end{remark}
}

Corollary~\ref{cor:removeN} follows directly from Lemma~\ref{lemma:perpetual}.
\begin{corollary}
	\label{cor:removeN}
	In Algorithm~\ref{MainProtocol}, if violating players are given least
	preferable positions, then the expected reward of an honest player~$i^*$ will not decrease in supermodular games.
\end{corollary}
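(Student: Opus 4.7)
The plan is to argue that Corollary~\ref{cor:removeN} follows directly from Lemma~\ref{lemma:perpetual} by applying it inductively to accommodate potentially multiple violating players across P-samples, combined with a conditional-expectation argument that handles the adaptive nature of the adversary in Algorithm~\ref{MainProtocol}.

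First I would verify that supermodularity is preserved under the ``pin at the least preferable rank'' operation implicit in Lemma~\ref{lemma:perpetual}: if we fix a player $i \neq i^*$ at the bottom position, the induced game on $\all \setminus \{i\}$ has utility $v'(S) := v(S \cup \{i\})$ for $S \subseteq \all \setminus \{i\}$, and supermodularity of $v'$ follows routinely from supermodularity of~$v$. This lets me iterate the lemma: if $F \subseteq \all \setminus \{i^*\}$ is the (ordered) set of violating players already slated to be pinned at the bottom, then by induction on $|F|$, the expected reward of $i^*$ under the restricted sampling (where all players in $F$ occupy the lowest ranks and the remaining players are uniformly permuted among the top $n - |F|$ positions) is at least the original Shapley value~$\phii$.

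Next I would address the adaptivity: the set $F_t$ of pinned violators at the start of P-sample~$t$ is itself a random variable that depends on the adversary's history and on prior randomness. By the tower property, it suffices to show that, conditioned on any realisation of~$F_t$, the expected marginal contribution collected by~$i^*$ in that P-sample is at least~$\phii$. This is exactly what the inductive application of Lemma~\ref{lemma:perpetual} delivers, uniformly over all possible~$F_t$. Summing over all P-samples and dividing by the sample count~$R$ then yields $\E[x_{i^*}] \geq \phii$, which is the ``does not decrease'' conclusion of the corollary.

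The main (mild) obstacle I anticipate is bookkeeping: formalising that the randomness used inside the current P-sample is independent of $F_t$ in the right sense, so that the ``uniform over the unpinned positions'' hypothesis of Lemma~\ref{lemma:perpetual} genuinely applies. This is handled by the ideal commitment scheme together with the presence of an honest player inside $\mathsf{GenPerm}$ (via Lemma~\ref{lemma:naive_secure}), which ensures that, conditional on~$F_t$, the honest player's rank among the unpinned positions is still uniform---precisely the setup required for the inductive application of Lemma~\ref{lemma:perpetual} to go through.
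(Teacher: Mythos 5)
Your proposal is correct and follows essentially the same route as the paper, which simply states that the corollary follows directly from Lemma~\ref{lemma:perpetual} together with the remark that the lemma can be applied inductively when more than one player is pinned at the bottom. Your additional care about supermodularity being preserved under the pinning operation and the conditioning on the realised set of violators is a sound fleshing-out of details the paper leaves implicit, not a different argument.
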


\subsection{Replacing $\mathsf{NaivePerm}$ with a More ``Secure'' P-Sample Protocol in Supermodular Games}
\label{sec:seqperm}

\ignore{
\noindent \textbf{Perpetual Violation.}
In the random permutation approach, detected violation by a player~$i$
can be handled by always placing that player~$i$ in the least preferable rank
in future permutation samplings.
This can be considered as a modified game $v': 2^{\all \setminus \{i\}} \rightarrow \R$
with one less player, where $v'(S) = v(S \cup \{i\})$.
The following lemma
states that this cannot decrease the expected reward of an honest player~$i^*$.}

Lemma~\ref{lemma:naive_attack} states that if the game is supermodular,
in $\mathsf{NaivePerm}$,
an honest player can be forced to go to the least (or the second least) preferable
position in the permutation with probability 1
if the adversary just uses one unit of the violation budget.
To rectify this situation, a simple idea
is to generate a random permutation by deciding which player
goes to which position sequentially, starting from the least preferable position.
Starting from an \emph{active} set of players, a variant of Blum's protocol~\cite{tele_coin_flipping} can eliminate
a uniformly random player from the active set one at a time, which goes to
least preferable available remaining position.


\ignore{
In this section we describe our protocol satisfying the conditions in section \ref{SecureSetting}, and give analysis on it in the next section. Our protocol is based on the approximation of Shapley value by sampling random permutations. Specifically, players decide the random permutation from the least preferable position to the most preferable position. The player lying in each position is decided by a simple game of producing a random index over the remaining players. Repeating the procedure for a specified number of rounds, the mean of marginal contribution of each player is his allocated revenue.
}

\ignore{
Note that there might be other methods to approximate Shapley value for each player more efficiently, we would not use them since we need to obtain an allocation at the end of the protocol while approximating the Shapley value for each player separately does not guarantee to produce an allocation (the sum of approximated value might not be $v(\all)$).
}

Algorithm~\ref{RandElim} is a straightforward generalization
of the well-known Blum's protocol to more than two players.

\begin{algorithm}
\caption{Random Eliminate $\mathsf{RandElim}$}\label{RandElim}
\SetKw{KwDownTo}{downto}
\KwIn{A player set $S$.}
\KwOut{A player $a \in S$ that is eliminated,
together with a collection $\mathsf{Dev} \subseteq S$ of detected violating players.}

Let $k = |S|$, and consider some linear order of players in $S = \{a_0, \ldots , a_{k-1}\}$, e.g., the lexicographical order. \\

\Comment{Commit phase:}
\ForEach{{\rm \bf player} $p \in S$} {
    Uniformly sample an integer $r_p \in [0..k-1]$ \\
    Commit $r_p$ and broadcast the commitment. \\
}

\Comment{Open phase:}
\ForEach{{\rm \bf player} $p \in S$} {
		Broadcast the opening for its previously committed $r_p$.

}

Let $\mathsf{Dev}$ denote the collection of players that
do not follow the commit or the open phase,
i.e., such a player does not broadcast an opening
that can be verified with its previous commitment.

\If{$\mathsf{Dev} = \emptyset$} {
    $i \gets (\sum_{p \in S} r_p) \bmod k$ \\
} \Else {
    $i \gets$ index of any arbitrary player in $\mathsf{Dev}$ (e.g., smallest index) \\
}
\Return $(a_i, \mathsf{Dev})$
\end{algorithm}

\begin{claim}[Probability of Elimination]
\label{claim:prob_elim}
Assume that the commitment scheme in Algorithm~\ref{RandElim} is ideal.
When $\mathsf{RandElim}$ is run between a collection $S$ of players,
an honest player~$i^* \in S$ is chosen to be eliminated
with probability at most $\frac{1}{|S|}$,
even if a Byzantine adversary controls all other players.
\end{claim}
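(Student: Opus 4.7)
The plan is to use the ideal commitment scheme to reduce the analysis to a clean combinatorial fact about a uniform random variable modulo $k = |S|$. Let $i^* = a_{j^*}$ for the fixed index $j^* \in [0..k-1]$. I will split the adversary's possible behaviors into two cases based on whether any corrupt player violates, and show the elimination probability is at most $1/k$ in each case.

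First I will observe what happens in the commit phase. Because the commitment scheme is perfectly hiding, the honest player's commitment of $r_{i^*}$ leaks no information about $r_{i^*}$, so any commitments the adversary produces for the $k-1$ corrupt players are (jointly) independent of $r_{i^*}$. By perfect binding, whatever the corrupt players eventually open must equal their previously committed values. Hence at the end of the commit phase the sum $s := \sum_{p \in S \setminus \{i^*\}} r_p \bmod k$ is a random variable independent of $r_{i^*}$.

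Next I will analyze the open phase, where the adversary is rushing and so may condition its decisions on the opened value $r_{i^*}$. If the adversary instructs at least one corrupt player not to open, then $\mathsf{Dev} \neq \emptyset$, and the eliminated player is chosen from $\mathsf{Dev}$; since $i^*$ is honest, $i^* \notin \mathsf{Dev}$, so $i^*$ is not eliminated. Otherwise, all corrupt players open, and the eliminated index is $i = (s + r_{i^*}) \bmod k$. Since $r_{i^*}$ is uniform on $[0..k-1]$ and independent of $s$, the value $(s + r_{i^*}) \bmod k$ is uniform on $[0..k-1]$, so $\Pr{i = j^*} = \tfrac{1}{k}$.

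Combining the two cases, regardless of the adversary's rushing strategy, the probability that $i^*$ is eliminated is at most $\tfrac{1}{k} = \tfrac{1}{|S|}$. The main subtlety to be careful about is precisely the independence argument in the commit phase: one must invoke the perfectly hiding property rather than merely computational hiding, since the claim must hold against a Byzantine (and even computationally unbounded) adversary in the ideal model. Everything else is essentially a direct unwinding of the protocol description.
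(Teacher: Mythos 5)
Your proof is correct and follows essentially the same route as the paper's (which only sketches the intuition): violations can only help the honest player since $i^*\notin\mathsf{Dev}$, and perfect hiding makes the corrupt players' committed values independent of $r_{i^*}$, so the sum modulo $|S|$ is uniform. Your version just spells out the case split and the independence argument more carefully.
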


\begin{proof}
Since the argument is essentially the same as the well-known Blum's protocol~\cite{tele_coin_flipping},
we only give a brief intuition.  Because any player not finishing the commit and the open phases
will imply that the honest player~$i^*$ is not eliminated,
an adversary cannot increase the chance of eliminating~$i^*$ by causing a violating player to be detected.

Assuming that the commitment scheme is ideal,
the adversary gets no extra information from the commitment of the honest player's random number
which is sampled uniformly at random.
Therefore, no matter how the susceptible players sample their random numbers,
the resulting sum modulo $|S|$ is still uniformly random.
\end{proof}

\noindent \textbf{Sequential Permutation Generation.}
We can generate a random permutation sequentially in $n$ phases,
where in each phase,
the protocol $\mathsf{RandElim}$ is used to determine which player
goes to the least preferable available position.
Observe that if more than one violating player is detected in $\mathsf{RandElim}$,
they will go straight to the least preferable available positions in a batch.
Claim~\ref{claim:seqperm} is a direct consequence of Claim~\ref{claim:prob_elim}.

\ignore{
The protocol consists of $n$ rounds, each of which decides a player from the least preferable position to the most preferable position. By Fact \ref{MMC}, every current position is the least preferable among all the to-be-decide positions for every players. Therefore, to avoid repeating the game infinitely when abort appears as well as a punishment, if someone aborts in a round (e.g. refuse to open his commitment), he would fall into the current least preferable position. If more than one player abort, they would go to these bad positions in the order of their indices.

Next we show the protocol of generating a random permutation by the simple game, as Algorithm \ref{RandPerm}.
}

\begin{algorithm}
\caption{Sequential Permutation Generation $\mathsf{SeqPerm}$}\label{SeqPerm}
\SetKw{KwDownTo}{downto}
\SetKwFunction{RandElim}{RandElim}
\KwIn{Player set $N$.}
\KwOut{A random permutation $\sigma = (\sigma_1, \ldots, \sigma_n)$ of $N$,
together with a collection $\mathsf{Dev} \subseteq N$ of detected violating players.}

$S \gets \all$, $i \gets 1$ \\
\While {$S \neq \emptyset$} {
    $(a, \mathsf{Dev}) \gets$ \RandElim{$S$} \label{ln:RandElim} \\
    \If {$\mathsf{Dev} \neq \emptyset$} {
        $\sigma_i, \cdots, \sigma_{i+|\mathsf{Dev}|-1} \gets$ players in $\mathsf{Dev}$ in lexicographical order \\
        $i \gets i+|\mathsf{Dev}|$
    } \Else {
        $\sigma_i \gets a$ \\
        $i \gets i+1$
    }
    $S \gets S \setminus (\mathsf{Dev} \cup \{a\})$
}
\Return $(\sigma, \mathsf{Dev})$
\end{algorithm}

\begin{claim}[Guarantee for Preferable Positions]
\label{claim:seqperm}
In the random permutation generated by Algorithm~\ref{SeqPerm},
for any $k \in [n]$,
an honest player will be in one of the top $k$ most preferable positions with
probability at least $\frac{k}{n}$,
even if a Byzantine adversary controls all other players.
\end{claim}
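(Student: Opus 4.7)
\noindent\textbf{Proof plan for Claim~\ref{claim:seqperm}.}
The plan is to proceed by strong induction on $n = |N|$. The starting observation is structural: since the honest player~$i^*$ never belongs to $\mathsf{Dev}$, it can only be placed into the permutation in an iteration where $\mathsf{RandElim}$ actually returns~$i^*$ as the value~$a$. Moreover, if $i^*$ is selected in an iteration whose initial state has $|S| = s$, then the position counter~$i$ in Algorithm~\ref{SeqPerm} equals $n - s + 1$ at that moment (since exactly $n - s$ positions have been filled in earlier iterations). Hence ``$i^*$ ends in one of the top $k$ positions'' is equivalent to ``the iteration in which $i^*$ is selected starts with $|S| \le k$''. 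The goal therefore reduces to showing $\Pr{|S_J| \le k} \ge k/n$, where $J$ is the (random) iteration in which $i^*$ is selected.

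The base case $n=1$ is immediate, and the case $k = n$ is trivial for any $n$, so I will focus on $1 \le k < n$. For the inductive step, I will examine the very first call to $\mathsf{RandElim}$ in $\mathsf{SeqPerm}$, when $|S| = n$. Claim~\ref{claim:prob_elim} guarantees that $i^*$ is selected in this call with probability at most $\tfrac{1}{n}$, even against an arbitrary Byzantine adversary controlling all other players. If $i^*$ is not selected, then the first iteration places some set of $j \ge 1$ players (either a single player when $\mathsf{Dev}=\emptyset$, or the entire $\mathsf{Dev}$ when $\mathsf{Dev}\ne\emptyset$) into the lowest positions, and the protocol recurses as $\mathsf{SeqPerm}$ on a set $S'$ of size $n' := n - j \in \{1,\ldots,n-1\}$ that still contains~$i^*$.

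Here I will apply the inductive hypothesis. Crucially, the hypothesis is quantified over \emph{all} adversary strategies on $n'$ players, which handles the fact that the adversary's conditional strategy in the sub-protocol may depend on the first iteration's transcript. Because the remaining positions correspond to the top $n'$ positions of the original permutation, the inductive hypothesis yields, conditional on not being selected in iteration~$1$ and $|S'| = n'$,
\begin{equation*}
\Pr{i^* \text{ is in the top } k \text{ positions}} \;\ge\; \min\!\left(1,\;\tfrac{k}{n'}\right).
\end{equation*}
A direct check shows that $\min(1, k/n') \ge k/(n-1)$ for every $n' \in \{1,\ldots,n-1\}$ and every $k \in \{1,\ldots,n-1\}$: if $n' \ge k$ then $k/n' \ge k/(n-1)$, and if $n' < k$ then $\min=1 \ge k/(n-1)$.

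Combining these pieces, regardless of how the adversary randomises $j$ (and hence $n'$),
\begin{equation*}
\Pr{i^* \text{ in top } k} \;\ge\; \Pr{i^* \text{ not selected in iter } 1} \cdot \tfrac{k}{n-1} \;\ge\; \left(1 - \tfrac{1}{n}\right) \cdot \tfrac{k}{n-1} \;=\; \tfrac{k}{n},
\end{equation*}
which closes the induction. The main subtlety I will have to be careful about is the adaptive rushing adversary: the random variable $n'$ after iteration~1 depends on adversarial choices informed by $i^*$'s opening. This is precisely why I state the inductive hypothesis as a universal bound over all adversary strategies and use a \emph{deterministic} lower bound $\min(1, k/n') \ge k/(n-1)$ that does not require knowing the distribution of $n'$.
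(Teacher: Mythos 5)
Your proof is correct and follows the route the paper intends: the paper gives no written proof, stating only that Claim~\ref{claim:seqperm} is ``a direct consequence of Claim~\ref{claim:prob_elim},'' and your induction is precisely a rigorous unpacking of that assertion, with the telescoping $(1-\tfrac1n)\cdot\tfrac{k}{n-1}=\tfrac{k}{n}$ doing the work. Your explicit handling of batch removals of $\mathsf{Dev}$ (which only helps, via $\min(1,k/n')\ge k/(n-1)$) and of the adaptive adversary (by quantifying the inductive hypothesis over all strategies) supplies details the paper leaves implicit.
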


\noindent \textbf{$\mathsf{NaivePerm}$ vs $\mathsf{SeqPerm}$.}
It might seem that $\mathsf{SeqPerm}$
generate a more ``secure'' random permutation
than $\mathsf{NaivePerm}$.  Hence, it would be somehow surprising
that, in \Cref{sec:lb_seqperm},
we show that even when we replace
$\mathsf{NaivePerm}$ with $\mathsf{SeqPerm}$,
there is a lower bound on the sampling complexity
that asymptotically matches the baseline in Lemma~\ref{lemma:ub_naiveperm}.


\subsection{Lower Bound for Number of P-Samples Using $\mathsf{SeqPerm}$ in Supermodular Games}
\label{sec:lb_seqperm}

\noindent \textbf{How $\mathsf{SeqPerm}$ can be attacked.}
Even though Claim~\ref{claim:seqperm} guarantees the probability
that an honest player~$i^*$ will be in preferable positions,
susceptible players can still hurt player~$i^*$ by
trading positions when a permutation is sampled.

Consider the following (supermodular) game $v: 2^\all \rightarrow \R_+$ in which
there are two special players $i^*$ and $j^*$,
where $i^*$ is honest and all other players are susceptible.

$$
v(S) = \begin{cases}
2, & \{i^*, j^*\} \subseteq S \\
0, & \text{otherwise,}
\end{cases}
$$

Observe that in a uniformly random permutation,
$i^*$ obtains a non-zero reward (of 2) \emph{iff}
$i^*$ is in a more preferable position than $j^*$,
which happens with probability $\frac{1}{2}$.
Hence, for this game, the expected reward
in one P-sample is $\phi_{i^*} = 1$.

However, when $\mathsf{SeqPerm}$ is used to
generate a P-sample, the adversary uses the following attack strategy.
During each iteration of $\mathsf{RandElim}$ in line~\ref{ln:RandElim}
of Algorithm~\ref{SeqPerm}, after $i^*$ opens its committed random number,
if the adversary sees that $j^*$ is going to be eliminated,
then it instructs another player to sacrifice itself by aborting.
Under this attack strategy, player~$i^*$ can get a non-zero reward
with probability $\frac{1}{n}$.

Therefore, if only 1 P-sample is used and the adversary has $n-1$ violation budget,
then the expected reward of $i^*$ is decreased from $1$ to
$\frac{2}{n}$.

\ignore{
It is easy to see that the protocol in section \ref{algorithm} satisfies all requirements except for the $\epsilon$-maximin security. In this section we analyze the sampling complexity to achieve $\epsilon$-maximin security under different corruption levels. Specifically, we specify the parameters in the stopping conditions of Algorithm \ref{MainProtocol}, i.e. how many P-samples do we need, or reaching what fraction should we stop the protocol. We give analysis for both upper bounds and lower bounds to illustrate that our protocol achieves nearly the optimal number of sampling of permutations. Unless there is a breakthrough in approximation methods that are not based on sampling, our method achieves nearly the optimal sampling complexity to the utility function as well.

Assume that among the $n$ players only player $i^*$ is honest while other $n-1$ players are adversaries with the goal of reducing the revenue of player $i^*$ as much as possible, which is the worst case for honest players.

\subsection{Properties in Level 1}
}

The above example gives an intuition of how the adversary might attack
an honest player.  However, under the perpetual violation model,
any detected violating player will be removed by the protocol, thereby possibly
making the job of the adversary more difficult (see Lemma~\ref{lemma:perpetual}).
Certainly if the protocol uses a large number of P-samples,
the adversary will eventually run out of susceptible players.
The following theorem gives a lower bound on the number
of P-samples needed to achieve expected maximin security
even under the perpetual violation model.



\begin{theorem}
\label{LB_of_Level1}
Suppose $\mathsf{SeqPerm}$ is used to generate
permutations in Algorithm~\ref{MainProtocol}.
Then, for any large enough $n \geq 100$, there exists a supermodular game with $n$ players such that
for any $\frac{1}{n} \leq \epsilon \leq 0.01$,
if the adversary has a perpetual violation budget of $1 \leq C \leq \epsilon n$,
then at least $\frac{n C}{10\epsilon}$ number of P-samples
is necessary to achieve $\epsilon$-expected maximin security. 
\end{theorem}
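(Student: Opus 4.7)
The plan is to exhibit a supermodular game $v$ on $n$ players together with an explicit rushing adversary strategy against $\mathsf{SeqPerm}$ that, given perpetual violation budget $C$, drives the expected reward of $i^*$ below $(1-\epsilon)\phii$ whenever $R < \frac{nC}{10\epsilon}$. A natural candidate is the ``blocker'' game $v(S) = \mathbf{1}[\{i^*\} \cup J \subseteq S]$ for a fixed set $J \subseteq \all \setminus \{i^*\}$ of size $k$ chosen to be $\Theta(n)$. It is monotone and supermodular by inspection, and a direct calculation gives $\phii = 1/(k+1)$, $U_{\max,i^*} = 1$, and $\Gamma_{i^*} = k+1$, so that the matching upper bound from Lemma~\ref{lemma:ub_naiveperm} is already of order $nC/\epsilon$.

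The attack is executed P-sample by P-sample, exploiting the rushing property at each call to $\mathsf{RandElim}$. Whenever $i^*$'s opening reveals that the honest sum would eliminate a still-active blocker $j \in J$, the adversary instructs one susceptible non-blocker player to abort; by the construction of $\mathsf{RandElim}$ the aborter is eliminated in place of $j$ at a cost of exactly one violation, so the blockers are preserved while $i^*$ is still eliminated at each phase with the worst-possible probability $1/|S|$ afforded by Claim~\ref{claim:prob_elim}. A short telescoping product shows that $i^*$ reaches the final stage $|S| = k+1$ (where only $\{i^*\} \cup J$ remains and no sacrificial player is left) with probability $(k+1)/n$, after which its rank among $\{i^*\}\cup J$ is uniform, giving expected reward $1/n$ per attacked P-sample. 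A second telescoping sum $\sum_{j=k+2}^{n}(j/n)(k/j) = k(n-k-1)/n$ controls the expected number of violations consumed per attacked P-sample. Under the perpetual violation rule the sacrificial players are permanently dropped from subsequent P-samples, but Lemma~\ref{lemma:perpetual} guarantees this never reduces $i^*$'s expected reward, so the per-sample analysis transfers.

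The hardest part will be calibrating $k$, and possibly sharpening the strategy, so that the aggregate expected loss over $R$ P-samples is at least $\epsilon R \phii$ for every $R < \frac{nC}{10\epsilon}$ and every $C \in [1, \epsilon n]$. A direct execution of the plan only yields a loss-per-violation of order $\frac{1}{k(k+1)}$ and hence gives $R = \Omega(C/\epsilon)$, short by a factor of $n$. Closing this gap is the crux of the argument: it will likely require either an adaptive version of the attack that concentrates violations in P-samples where $i^*$ is already likely to survive to the final stage (so that each violation contributes damage of order $U_{\max,i^*}$ rather than $\phii$), or a richer game in which a single abort directly swaps the future rank of $i^*$ with that of a blocker. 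Once the refinement is in place, the constraints $n \geq 100$, $\frac{1}{n} \leq \epsilon \leq 0.01$, and $1 \leq C \leq \epsilon n$ from the theorem statement are precisely what makes the various approximations in the telescoping sums tight up to the stated constant $\frac{1}{10}$.
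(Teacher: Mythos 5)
There is a genuine gap, and you have named it yourself: your construction delivers damage per violation of order $\phii$ (or less), whereas the theorem requires each unit of violation budget to inflict damage of order $n\cdot\phii$, and your proposal ends by saying that closing this factor-of-$n$ gap ``is the crux of the argument'' without actually doing so. Worse, I believe your blocker game $v(S)=\mathbf{1}[\{i^*\}\cup J\subseteq S]$ cannot be rescued by a cleverer attack: under $\mathsf{RandElim}$'s semantics an aborting player is itself eliminated, so the only thing an abort can buy is to save one still-active blocker from one elimination round, which merely delays that blocker by one phase. At the final two-player stage $\{i^*,j\}$ an abort by $j$ eliminates $j$ anyway, so the adversary has no last-minute move that converts a near-certain reward of $U_{\max,i^*}=1$ into a coin flip. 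Since $i^*$'s reward in this game is a single $0/1$ threshold event, no single violation can move its conditional expectation by $\Omega(n)\cdot\phii$, which is exactly condition (ii) of Remark~\ref{remark:LB}. Your suggested fix (``a richer game in which a single abort directly swaps the future rank of $i^*$ with that of a blocker'') is the right instinct, but it is the entire content of the missing construction, not a refinement.

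The paper's proof supplies precisely this missing piece with a two-tier utility: $v(\all)=2\alpha$, $v(\all\setminus\{q\})=\alpha$ for $q\in\{i^*\}\cup Q$ with $|Q|=n/2$ and $\alpha=\Theta(n)$, and $0$ otherwise, so that $\phii=1$, $U_{\max,i^*}=\alpha$, and $i^*$ is paid $\alpha$ both at the top rank and at the second rank behind a $Q$-player. The attack waits until the active set is $\{i^*,q,y\}$ with $q\in Q$, $y\notin Q$ and the rushing adversary sees $y$ about to be eliminated (which would lock in a reward of $\alpha$ for $i^*$); it then has $q$ abort, so $q$ is eliminated in $y$'s place and $i^*$ must beat $y$ in a fair final round, losing $\alpha/2=\Theta(n)\cdot\phii$ in expectation for a single violation. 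Such an opportunity occurs with probability $\Omega(1/n)$ per P-sample, so a block of $T=n/(10\epsilon)$ P-samples captures one with probability at least $0.9$; Lemma~\ref{lemma:perpetual} (together with a direct bound on the per-sample reward after $s\leq\epsilon n$ players of $Q$ have been removed) controls the perpetual-removal side effects, and summing over $C$ blocks shows the average falls below $(1-\epsilon)\phii$. You would need to replace your game and your per-violation accounting with a construction of this kind before the telescoping estimates in your second paragraph have anything to bound.
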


\begin{proof}
We consider even $n = |\all|$, and define a game where player~$i^*$	is honest
and all other players are susceptible.
Suppose at most $R = \frac{Cn}{10 \epsilon}$ P-samples are used in
Algorithm~\ref{MainProtocol}.

\noindent \textbf{Game definition.} Let $Q \subseteq \all$ be a subset
of players such that $i^* \notin Q$ and $|Q| = \frac{n}{2}$.
Let $\alpha := \frac{2n(n-1)}{3n-2} \leq \frac{2n}{3}$.  
	The utility function $v: 2^\all \rightarrow \R_+$ is defined as follows:
	
	$$
	v(S) = \begin{cases}
		2\alpha, & S=\all \\
		\alpha, & S = \all \setminus \{q \}, q \in \{i^*\} \cup Q\\
		0, & \text{otherwise.}
	\end{cases}
	$$
	
\noindent \textbf{Shapley Value $\phi_{i^*}$. } There are two cases when player~$i^*$ gets a non-zero reward:
(i) When player~$i^*$ is at the most preferable position, it gets a reward of $\alpha$; (ii) when~player $i^*$ is at the second most preferable position and a player $q \in Q$ is at the most preferable position, player $i^*$ also gets a reward of $\alpha$.
Therefore, the Shapley value of player $i^*$ is
    \begin{equation}
        \phi_{i^*} = \frac{1}{n}\cdot \alpha + \frac{1}{n}  \cdot \frac{n/2}{n-1}\cdot \alpha = 1 .
    \end{equation}

\noindent \textbf{Max-to-Mean Parameter $\Gamma$. }For any player $i \in \{i^*\} \cup Q$, employing the same argument as in the case when $i = i^*$, the Shapley value $\phi_i = 1$.
Since the game is supermodular, $U_{\max, i} = \mu_{i}(\all \backslash \{i\}) = \alpha$. Therefore, in this case, $\Gamma_i = \frac{U_{\max, i}}{\phi_i} = \alpha$. For any player $i \notin \{i^*\} \cup Q$, $i$ gets a non-zero reward if and only if $i$ is at the most favorable position. Therefore, in this case, $\Gamma_i = \frac{U_{\max, i}}{\phi_i} = \frac{U_{\max, i}}{\frac{1}{n}\cdot U_{\max, i}} = n$. Consequently, in this game, $\Gamma = \max(\alpha, n) = n$.
Next, it suffices to demonstrate that a minimum of $\frac{n C}{10\epsilon}$ P-samples is required to achieve $\epsilon$-expected maximin security.
	
\noindent \textbf{Adversarial Strategy.}  The $R$
iterations in Algorithm~\ref{MainProtocol} will be divided into $C$ blocks,
where each block corresponds to generating $T = \frac{R}{C} = \frac{n}{10 \epsilon}$ number of P-samples.
The adversary will instruct at most one susceptible player in $Q$ to
violate the protocol in each block.  Recall that under the perpetual violation model,
any detected violating player will be removed by the protocol.

\noindent \emph{Violation Opportunity.}
During the generation of a
P-sample in $\mathsf{SeqPerm}$, recall that $\mathsf{RandElim}$ is called repeatedly.
We say that there is a \emph{violation opportunity} when the
following event happens.
\begin{itemize}
\item When there are 3 players remaining
in the active set $S$,
it is of the form $S = \{i^*, q, y\}$
for some $q \in Q$ and $y \notin Q$.

\item When $\mathsf{RandElim}$ is run on $S$,
the adversary sees that~$y$ is about to be eliminated.
\end{itemize}

\noindent \emph{Adversarial Action.}
When a violation opportunity first arises within each block (of size $T$) as above,
the adversary instructs the player $q \in Q \cap S$ to abort.
Recall that the adversary causes violation at most once
in each block. This is for the ease of analysis. It is of course to the adversary's advantage to
seize every violation opportunity provided it has enough budget.

Observe that when a violation opportunity arises,
the honest player~$i^*$ will receive a reward of $\alpha$ if
the adversary does not interfere.
However, if the adversary instructs $q$ to abort,
then $i^*$ can only receive the reward if it beats $y$ in the last elimination round.
Therefore, by seizing a violation opportunity within one P-sample,
the expected damage to the honest player is $\frac{\alpha}{2}$.

\noindent \textbf{Probability of Violation in a Block.}
Suppose $s \leq \epsilon n$ violating players from $Q$ have already
been removed at the beginning of a block.
Then, the probability of a violation opportunity
happening in a P-sample is:

 $p_s := \frac{3!(|Q|-s)(n-|Q|-1)}{(n-s)(n-s-1)(n-s-2)} \cdot \frac{1}{3} = \frac{2(\frac{n}{2}-s)(\frac{n}{2}-1)}{(n-s)(n-s-1)(n-s-2)} \geq p_0 := \frac{n(\frac{n}{2}-1)}{n(n-1)(n-2)} \geq \frac{1}{n}(\frac{1}{2} - \frac{1}{n})$,

for $0 \leq s \leq 0.01n$ and $n \geq 100$.

Hence, the probability that
the adversary can seize a violation probability within a block of $T = \frac{n}{10 \epsilon}$ samples is at least:

$1 - (1 - p_0)^T \geq 1 - \exp(- p_0 \cdot T) \geq 0.9$, for $n \geq 100$ and $\frac{1}{n} \leq \epsilon \leq 0.01$.

\noindent \textbf{Bounding the Expectation Due to Removed Violating Players.}
Suppose $s \leq \epsilon n$ violating players have been removed.
Then, in each P-sample, the reward of the honest player~$i^*$ is:

$(\frac{1}{n-s} + \frac{1}{n-s} \cdot \frac{\frac{n}{2}-s}{(n-s-1)}) \cdot \alpha
\leq \frac{1}{1 - \epsilon} \cdot (1 + \frac{\frac{1}{2} - \epsilon}{1 - 2 \epsilon}) \cdot \frac{2}{3} \leq 1 + 2 \epsilon$,

when $\frac{1}{n} \leq \epsilon < 0.01$.

\noindent \textbf{Expected Reward within a Block of $T = \frac{n}{10 \epsilon}$ P-Samples.}
Under this adversarial attack strategy, the expected reward
for player~$i^*$ within a block is at most:

$(1 + 2 \epsilon) \cdot T - 0.9 \cdot \frac{\alpha}{2}
= (1 + 2\epsilon ) \cdot T
- 0.9 \cdot \frac{1}{3} \cdot n \cdot \frac{n - 1}{n - \frac{2}{3}}
< (1 - \epsilon) \cdot T$,
for $n \geq 100$.

Repeating the argument for each of the $C$ blocks
shows that $\epsilon$-expected maximin security is not achieved.
\end{proof}

\begin{remark}[General Conditions for Lower Bound]
\label{remark:LB}
Observe that in Theorem~\ref{LB_of_Level1},
the same asymptotic lower bound on the number of P-samples will hold,
as long as we use a permutation generation protocol with the following properties
in each P-sample:

\begin{itemize}
\item A violation opportunity happens with probability at least $\Omega(\frac{1}{n})$.

\item When a violation opportunity arises,
using at most $O(1)$ violation budget
is sufficient to cause a damage of $\Omega(n)$ times the Shapley value~$\phii$
of an honest player.
\end{itemize}
\end{remark}

\section{Optimal Adversarial Strategy against $\mathsf{SeqPerm}$}
\label{sec:adversary}

In this section, we conduct a comprehensive analysis of the adversary's
optimal strategy against $\mathsf{SeqPerm}$ in supermodular games.
The approach is based on dynamic programming,
which computes the expected revenue of an honest player when faced with the adversary's optimal strategy.
As mentioned in Section~\ref{sec:prelim},
the number of entries in the dynamic program can be exponential in $n$.


\noindent \textbf{Adversary's Optimal Strategy.} Let $E_{worst}[T][S][c]$ represent the expected revenue of the honest player under the optimal strategy employed by the adversary, given that there are currently $S \subseteq N$ players participating in the protocol, $T$ permutations remaining (except the current one), and the adversary's budget is $c$.

	Let $i^*$ denote the honest player. In this round, there is a probability of $\frac{1}{|S|}$ that $i^*$ loses and obtains the $|S|^{th}$ preferable position in the permutation. In this scenario, the revenue of $i^*$ is determined, and there is no benefit for the adversary to abort. $i^*$ receives revenue equal to $\mu_{i^*}(\all \backslash S) + E_{\text{worst}}[T-1][\all][c]$.
However, if some player $i \neq i^*$ loses, then some player may abort to hurt $i^*$. If no one aborts, $i^*$ receives revenue equal to $E_{worst}[T][S\backslash \{i\}][c]$. However, if someone aborts, the adversary, following an optimal strategy, selects the player that can inflict the most harm on $i^*$. In such cases, $i^*$'s revenue is given by $\min_{j \in S\backslash \{i^*, i\}} E_{worst}[T][S\backslash \{j\}][c-1]$. Consequently, $i^*$'s expected revenue in this case is the minimum of these two terms.
Let $\hat{j}$ represent the player whose abort would inflict the greatest harm on $i^*$, i.e., $\hat{j} = \arg\min_{j \in S\backslash \{i^*, i\}} E_{worst}[T][S\backslash \{j\}][c-1]$.
If the revenue corresponding to someone's abort is lower, in the optimal strategy of the adversary, $\hat{j}$ will abort in this case.


Based on the above discussion, $E_{worst}[T][S][c]$ can be represented using dynamic programming as follows:

\textbf{Base case:}
$$E_{worst}[0][\{i^*\}][0] = \mu_{i^*}(\all \backslash \{i^*\})$$

\textbf{Inductive case:}
\begin{equation}
	\label{adv_strategy}
	\begin{aligned}
		E_{\text{worst}}[T][S][c] = \frac{1}{|S|} \left(\mu_{i^*}(\all \backslash S) + E_{\text{worst}}[T-1][\all][c]\right) +  \\
		\frac{1}{|S|}\left(\sum_{i \in S \backslash \{i^*\}} \min\left(E_{\text{worst}}[T][S\backslash \{i\}][c], \min_{j \in S\backslash \{i^*, i\}} \left(E_{\text{worst}}[T][S\backslash \{j\}][c-1]\right)\right)\right)
	\end{aligned}
\end{equation}

\ignore{
	
	\textbf{Inductive case:}
	\begin{equation}
		\label{adv_stragegy}
		\begin{aligned}
			E_{worst}[T][S][c] = \frac{1}{|S|} \cdot (\mu_{i^*}(\all \backslash S) + E_{worst}[T-1][\all][c]) +\\ \frac{1}{|S|}(\sum_{i \in S \backslash \{i^*\}} \min(E_{worst}[T][S\backslash \{i\}][c], \min_{j \in S\backslash \{i^*, i\}} (E_{worst}[T][S\backslash \{j\}][c-1])))
		\end{aligned}
	\end{equation}

	Let $i^*$ denote the honest player. In this round, there is a probability of $\frac{1}{|S|}$ that $i^*$ loses and obtains the $|S|^{th}$ position in the permutation. In this scenario, if no one aborts, $i^*$ receives revenue equal to $\mu_{i^*}(\all \backslash S) + E_{\text{worst}}[T-1][\all][c]$. However, if $i^*$ loses and someone aborts, the adversary, following an optimal strategy, selects the player that can inflict the most harm on $i^*$. In such cases, $i^*$'s revenue is given by $\min_{j \in S \backslash {i^*}} E_{\text{worst}}[T][S\backslash {j}][c-1]$. Consequently, $i^*$'s expected revenue in the event of loss is the minimum of these two terms. Let $\hat{j}$ represent the player whose abort would inflict the greatest harm on $i^*$. If the revenue corresponding to someone's abort is lower, $\hat{j}$ will choose to abort. By employing similar reasoning, we can determine the expected revenue when $i^*$ does not lose at this round.

	\ignore{
		Let $i^*$ be the honest player. With a probability of $\frac{1}{|S|}$, $i^*$ loses in this round and obtains the $|S|^{th}$ position in the permutation. If this occurs, the revenue of $i^*$ is determined, and there is no benefit for the adversary to abort. Therefore, the corresponding revenue of $i^*$ is: $\mu_{i^*}(\all \backslash S) + E_{worst}[T-1][\all][c]$. If $i^*$ does not lose in this round, suppose player $i \neq i^*$ loses. Then, if no one aborts, the expected revenue is: $E_{worst}[T][S\backslash \{i\}][c]$. If someone else, $j$, aborts, the expected revenue is: $E_{worst}[T][S\backslash \{j\}][c-1])$. }
	Let $i^*$ denote the honest player. In this round, there is a probability of $\frac{1}{|S|}$ that $i^*$ loses and obtains the $|S|^{th}$ position in the permutation. In this scenario, the revenue of $i^*$ is determined, and there is no benefit for the adversary to abort. $i^*$ receives revenue equal to $\mu_{i^*}(\all \backslash S) + E_{\text{worst}}[T-1][\all][c]$.
	However, if some player $i \neq i^*$ loses, then some player may abort to hurt $i^*$. If no one aborts, $i^*$ receives revenue equal to $E_{worst}[T][S\backslash \{i\}][c]$. However, if someone aborts, the adversary, following an optimal strategy, selects the player that can inflict the most harm on $i^*$. In such cases, $i^*$'s revenue is given by $\min_{j \in S\backslash \{i^*, i\}} E_{worst}[T][S\backslash \{j\}][c-1]$. Consequently, $i^*$'s expected revenue in this case is the minimum of these two terms. Let $\hat{j}$ represent the player whose abort would inflict the greatest harm on $i^*$, i.e., $E_{worst}[T][S\backslash \{\hat{j}\}][c-1] = \min_{j \in S\backslash \{i^*, i\}} E_{worst}[T][S\backslash \{j\}][c-1]$. If the revenue corresponding to someone's abort is lower, in the optimal strategy of the adversary, $\hat{j}$ will abort in this case.

}

\ignore{
Let $i^*$ denote the honest player. In this round, there is a probability of $\frac{1}{|S|}$ that $i^*$ loses and obtains the $|S|^{th}$ position in the permutation. In this scenario, if no one aborts, $i^*$ receives revenue equal to $\mu_{i^*}(\all \backslash S) + E_{\text{worst}}[T-1][\all][c]$. However, if $i^*$ loses and someone aborts, the adversary, following an optimal strategy, selects the player that can inflict the most harm on $i^*$. In such cases, $i^*$'s revenue is given by $\min_{j \in S \backslash {i^*}} E_{\text{worst}}[T][S\backslash {j}][c-1]$. Consequently, $i^*$'s expected revenue in the event of loss is the minimum of these two terms. Let $\hat{j}$ represent the player whose abort would inflict the greatest harm on $i^*$. If the revenue corresponding to someone's abort is lower, $\hat{j}$ will choose to abort. By employing similar reasoning, we can determine the expected revenue when $i^*$ does not lose at this round.

\textbf{Inductive case:}
\begin{equation}
	\label{adv_strategy}
	\begin{aligned}
		E_{\text{worst}}[T][S][c] = \frac{1}{|S|} \left(\mu_{i^*}(\all \backslash S) + E_{\text{worst}}[T-1][\all][c]\right) +  \\
		\frac{1}{|S|}\left(\sum_{i \in S \backslash \{i^*\}} \min\left(E_{\text{worst}}[T][S\backslash \{i\}][c], \min_{j \in S\backslash \{i^*, i\}} \left(E_{\text{worst}}[T][S\backslash \{j\}][c-1]\right)\right)\right)
	\end{aligned}
\end{equation}

}

\ignore{
\textbf{Inductive case:}
\begin{equation}
	\label{adv_stragegy2}
	\begin{aligned}
E_{\text{worst}}[T][S][c] = \frac{1}{|S|} \cdot \min\left(\mu_{i^*}(\all \backslash S) + E_{\text{worst}}[T-1][\all][c], \min_{j \in S \backslash \{i^*\}} E_{\text{worst}}[T][S\backslash \{j\}][c-1]\right)
+ \\ \frac{1}{|S|}\left(\sum_{i \in S \backslash \{i^*\}} \min(E_{\text{worst}}[T][S\backslash \{i\}][c], \min_{j \in S\backslash \{i^*, i\}} E_{\text{worst}}[T][S\backslash \{j\}][c-1])\right)
	\end{aligned}
\end{equation}
}


\subsection{Implementation of Adversarial Strategy}

Note that when we employ the optimal adversarial strategy
in a simulation.  We need to start from large values of~$T$.
For instance, if the protocol uses $R$ number of P-samples,
then $E_{\text{worst}}[R-1]$ is used in the first P-sample.
Simply storing the entire DP table consumes an unreasonable amount of space.
We use two techniques to improve the efficiency of time and space.

\begin{enumerate}

\item We only store the values of $E_{worst}[T][\all][c]$ for each $0 \le T < R, 0 \le c \le C$.  In other words, we only store the entries
where $S = \all$.

When we use $E_{\text{worst}}$ for a specific $T$ in the simulation, we reconstruct $E_{\text{worst}}[T][S][c]$ for all $S \subseteq \all$ and $0 \le c \le C$ from $E_{\text{worst}}[T-1][\all][C]$ by formula (\ref{adv_strategy}).

\item  When we repeat the simulation $M$ times,
we perform them in parallel. For example, for the first P-sample,
when we reconstruct $E_{\text{worst}}[R-1]$,
we proceed with the first P-sample for $M$ simulations.
This is continued for subsequent P-samples.

In this way we only need to construct the entries of the DP table twice.
The first time is to store the values $E_{worst}[T][\all][c]$.
The second time is during reconstruction for parallel simulation.
\end{enumerate}

\section{Experiments}
\label{sec:experiments}

In previous sections,
we have proved upper and lower bounds
on the number of P-samples to achieve several notions of maximin security.
In this section, we conduct experiments to empirically verify our theoretical results. The experiments are run on the computer with CPU i5-1240P, 1.70 GHz and a 16 GB RAM.
The source code can be found in our public repository\footnote{\url{https://github.com/l2l7l9p/coalitional-game-protocol-expcode}}.
To simulate the adversary's strategy,
we use a dynamic program approach to give
the optimal adversarial strategy to attack $\mathsf{SeqPerm}$,
where the details are given in \Cref{sec:adversary}.

\noindent \textbf{Game Settings.}
We use two supermodular games with different values of
the max-to-mean ratio $\Gamma$.


\noindent (1)  The first game is based on Theorem~\ref{LB_of_Level1},
which is synthetic and characterized by the number~$n$ of players.
Its definition is given in \Cref{sec:lowerbound}
and has $\Gamma$ close to $\frac 23 n$.

\noindent (2) The second is the edge synergy game
described in Section~\ref{sec:gamma}.
We use a real-world hypergraph that is
constructed from the collaboration network in DBLP,
where each person is a vertex and the co-authors in a publication
is a hyperedge.
For an honest player, we pick
an author, who in 2021, had 13 collaborators and 8 publications.
To simulate other susceptible players,
we include more players that represent scholars that
have no collaboration with the honest player,
finally resulting in a total of $n$ players.
In this case, $\Gamma = 3.15789$ and is independent of $n$.


We evaluate the performance of Algorithm~\ref{MainProtocol} using
\texttt{SeqPerm} to generate random permutations.  Even though
the protocols can be run with large number of players,
the optimal adversarial strategy takes time exponential in~$n$.
Hence, we can only run experiments for small values of $n$ up to 200.

\ignore{
 Note that calculating the optimal strategy for the adversary simply by formula \ref{adv_strategy} is inefficient for a large $n$ because the number of states of $E_{worst}$ increases exponentially in $n$. To adapt it for these two games with large $n$, $S$ can be replaced by the number of players in $Q$ and out of $Q$ for the synthetic game, and by considering only members in the core for the hypergraph game.
}

\noindent \textbf{Expected Maximin Security.}
We perform experiments to verify Lemma \ref{lemma:ub_naiveperm} and Theorem \ref{LB_of_Level1}.
We evaluate the number $R$ of P-samples to reach $\epsilon$-expected maximin security under the optimal adversarial strategy.
We pick default values $n=100, C=20, \epsilon=0.01$,
and investigate how~$R$ varies with each of the three parameters.


As predicted, Figure~\ref{exp_knownbudget} shows the results that $R$ is proportional to $C$ and inversely proportional to $\epsilon$.
In Figure~\ref{exp11_1},
we verify indeed that $R$ is proportional
to $\Gamma = \Theta(n)$ in the synthetic game;
in Figure~\ref{exp21_1},
we see that $R$ is independent of $n$,
because $\Gamma = O(1)$ for the edge synergy game.
Moreover, we see that $R \approx \frac 13 \cdot \frac{C\Gamma}{\epsilon}$ falls between its lower bound and upper bound
in the synthetic game.  On the other hand,
$R \approx 0.31 \cdot \frac{C\Gamma}{\epsilon}$ in the edge synergy game.

\begin{figure}
  \centering
  \begin{subfigure}{0.3\textwidth}
    \includegraphics[width=\textwidth]{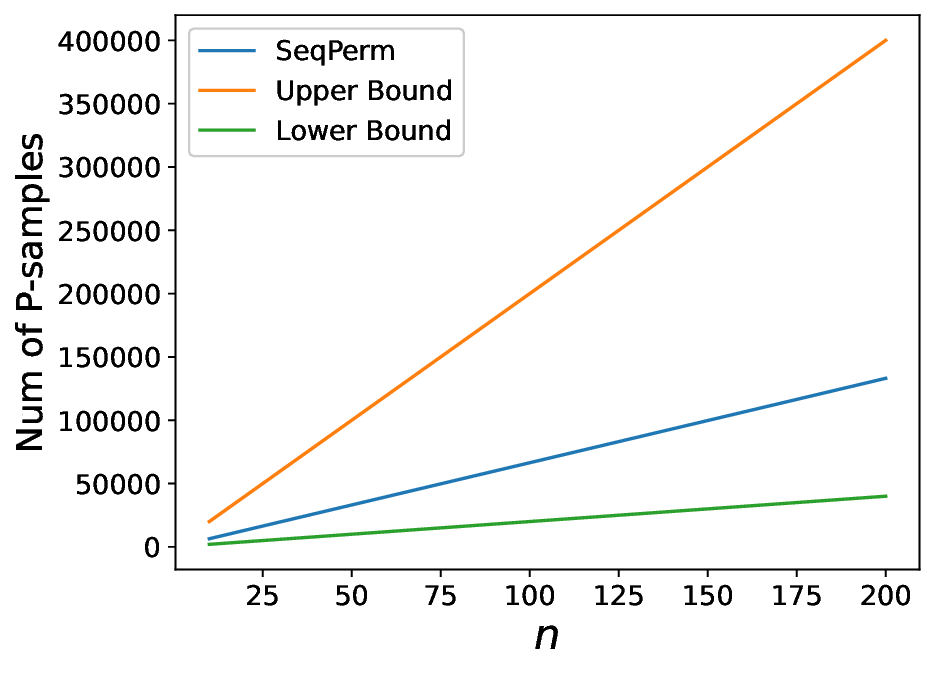}
    \caption{}\label{exp11_1}
	\end{subfigure}
  \begin{subfigure}{0.3\textwidth}
    \includegraphics[width=\textwidth]{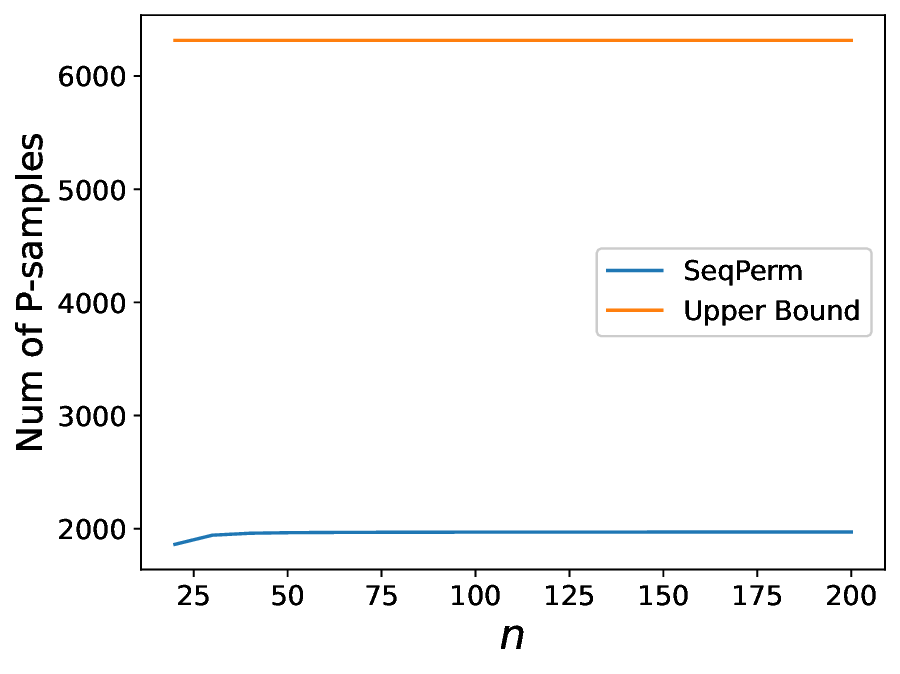}
    \caption{}\label{exp21_1}
  \end{subfigure}
  \begin{subfigure}{0.3\textwidth}
    \includegraphics[width=\textwidth]{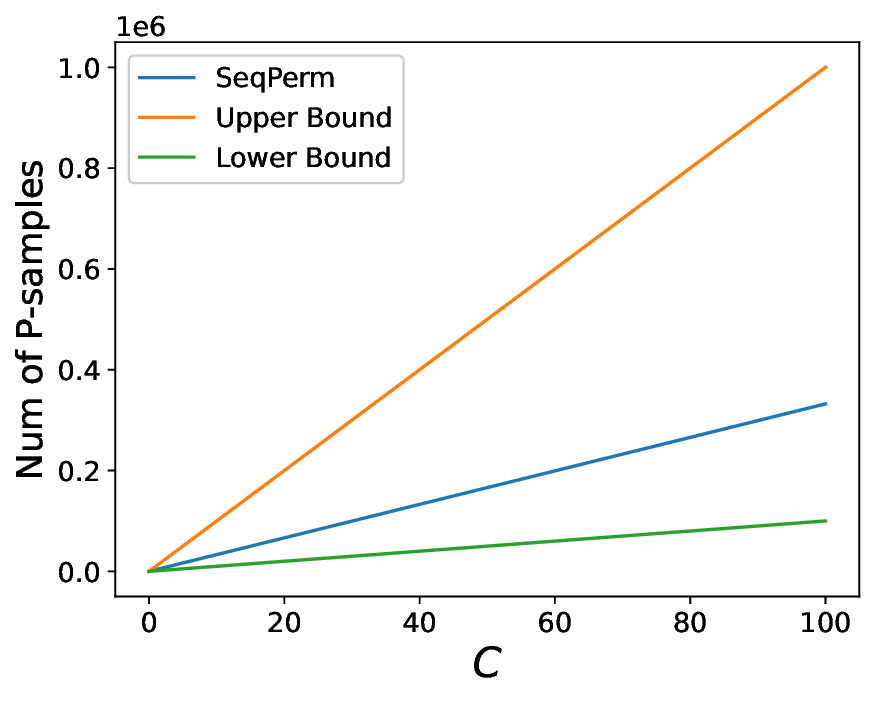}
    \caption{}\label{exp11_2}
  \end{subfigure}
	\begin{subfigure}{0.3\textwidth}
    \includegraphics[width=\textwidth]{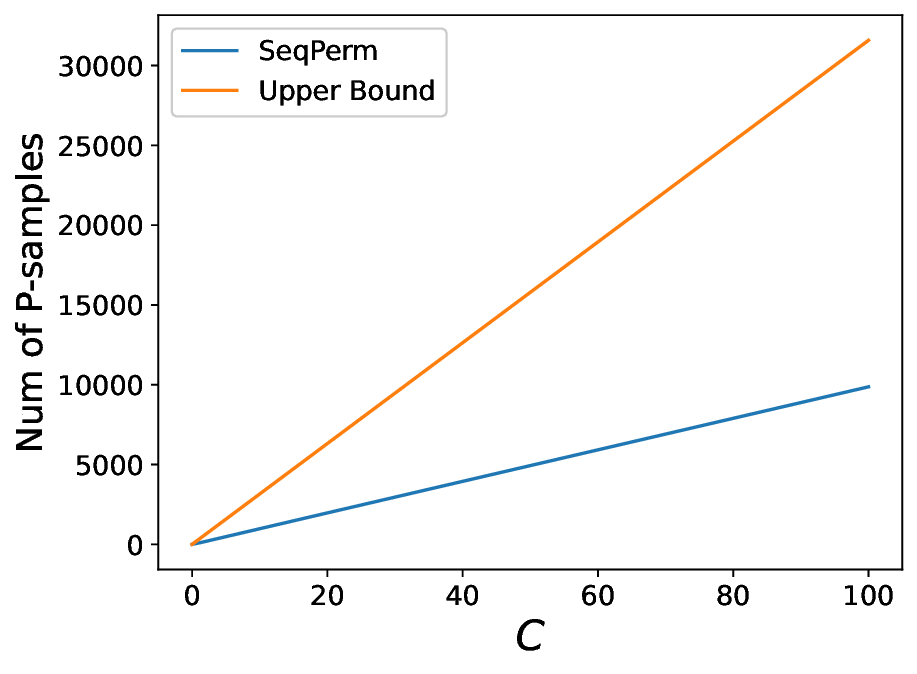}
    \caption{}\label{exp21_2}
  \end{subfigure}
  \begin{subfigure}{0.3\textwidth}
    \includegraphics[width=\textwidth]{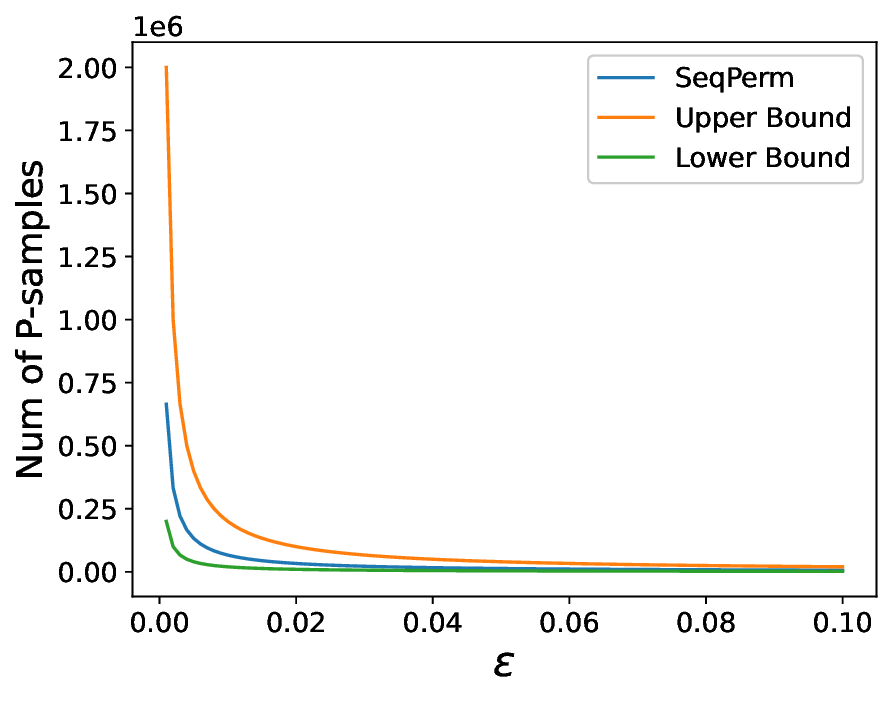}
    \caption{}\label{exp11_3}
  \end{subfigure}
	\begin{subfigure}{0.3\textwidth}
    \includegraphics[width=\textwidth]{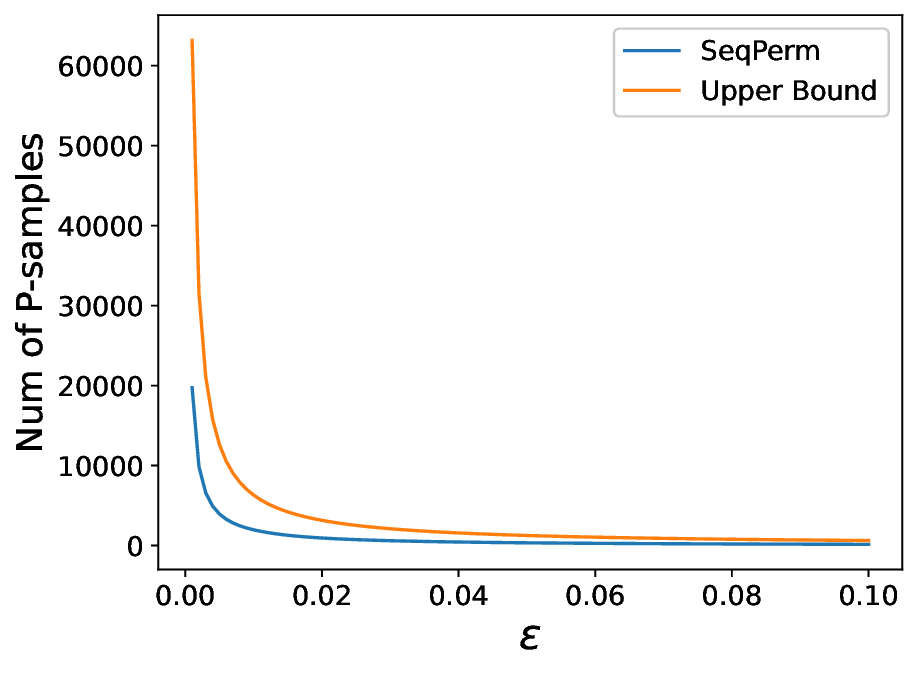}
    \caption{}\label{exp21_3}
  \end{subfigure}
  \caption{{\small The number $R$ of P-samples to reach $\epsilon$-expected maximin security under different $n$, $C$ and $\epsilon$. The default values are $n=100, C=20, \epsilon=0.01$. (\ref{exp11_1}), (\ref{exp11_2}) and (\ref{exp11_3}) are for the synthetic game; (\ref{exp21_1}), (\ref{exp21_2}) and (\ref{exp21_3}) are for edge synergy game on DBLP.}}\label{exp_knownbudget}
\end{figure}

\noindent \textbf{High Probability Maximin Security.}
We perform experiments to verify Theorem~\ref{UB_of_level2},
Corollary~\ref{cor:finite_unknown} and Theorem~\ref{lemma:adaptive_rate}.
Although the protocol descriptions are different,
one can check that, in each case,
the number of P-samples is $R \approx \max\{O(\frac{\Gamma}{\epsilon^2} \ln \frac{1}{\delta}), O(\frac{C \Gamma}{\epsilon})\}$.  Hence, we choose the parameters such that the two terms are the same.

\ignore{
In the case where the violation budget is unknown to public but known to the adversary, if $R_0$ defined in Lemma \ref{UB_of_unlimited} is no less than $\frac{2C\Gamma}{\epsilon}$, there is no need for the adversary to use any strategy because the protocol definitely terminates and achieve $(\epsilon, \delta)$-maximin security after $R_0$ P-samples, hence we only need to consider $R_0 \le \frac{2C\Gamma}{\epsilon}$ and for simplicity we let them equivalent. Then from the perspective of the adversary it is the same as the known-budget case where it can still use the strategy in Section \ref{adversary}. For the case where the violation budget is bounded by a fraction, the adversary can calculate the number of P-samples needed for Algorithm \ref{adprotocol} assuming $\epsilon_k$ halves until reaching the target $\epsilon$ and use the same strategy aiming at that number of P-samples. Even if the violation budget comes at the speed of the fraction, regarding it as having all the budget at the beginning makes little difference since the opportunities to violate appears uniformly. Therefore, the simulation with a fixed number of P-samples under the adversarial strategy in Section \ref{adversary} can be used to test both two cases.
}

For the synthetic game, we
set $n=100, C=200, R=8 \times 10^5, \epsilon=0.05, \delta=0.082$;
for the edge synergy game, we set
 $n=100, C=100, R=6316, \epsilon=0.1, \delta=0.082$.
We ran 1000 simulations for each game to plot
the empirical cumulative distribution of the
multiplicative error~$\widehat{\epsilon}$ for the honest player's Shapley value.

Figure \ref{exp_unknownbudget} shows the empirical cumulative distribution of the multiplicative error~$\widehat{\epsilon}$ in the 1000 simulations. The theoretical point $(\epsilon, 1-\delta)$ falls on the right of the curve, indicating that the simulated results are better than the theoretical bounds.

\begin{figure}
  \centering
  \begin{subfigure}{0.3\textwidth}
    \includegraphics[width=\textwidth]{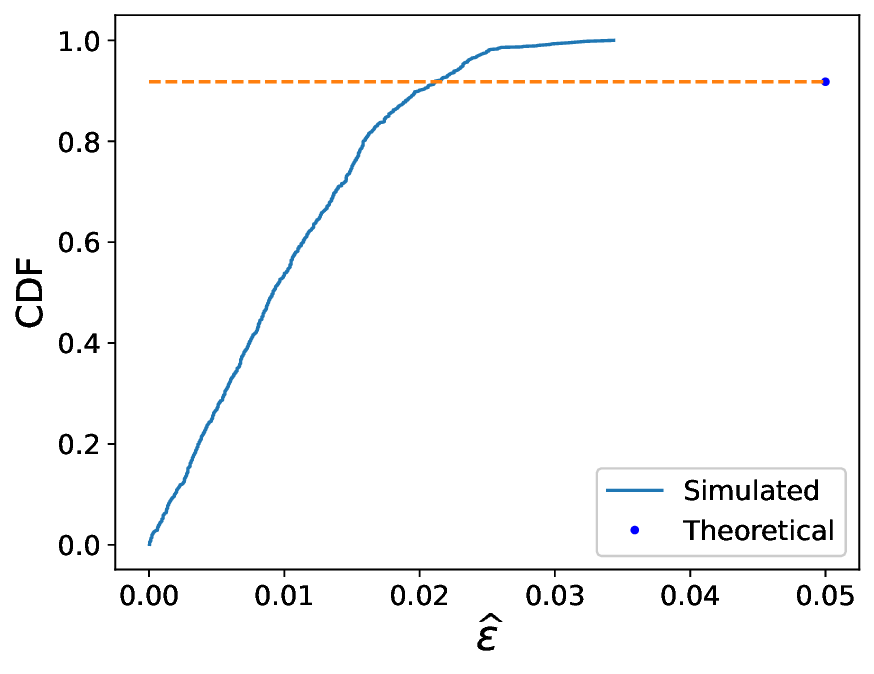}
    \caption{}\label{exp12}
  \end{subfigure}
  \begin{subfigure}{0.3\textwidth}
    \includegraphics[width=\textwidth]{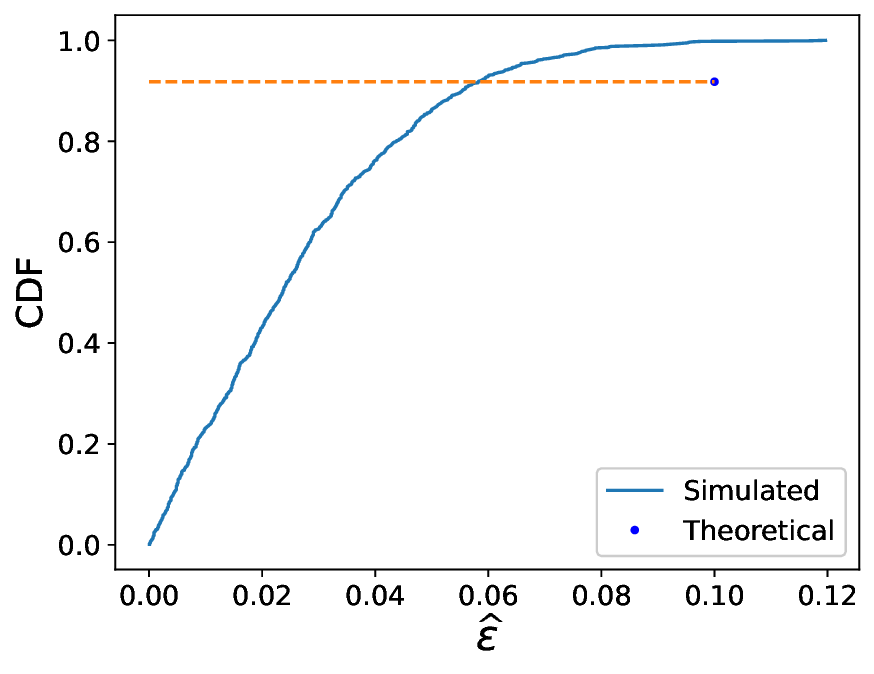}
    \caption{}\label{exp22}
  \end{subfigure}
  \caption{Empirical Cumulative Distribution Function of multiplicative error $\widehat{\epsilon}$ over 1000 simulations. (\ref{exp12}) is for the synthetic game, and (\ref{exp22}) is for the edge synergy game.}\label{exp_unknownbudget}
\end{figure} 

\section{Conclusion}
\label{sec:conclusion}

In this paper, we have designed game-theoretically secure distributed protocols
for coalitional games
which return allocations
 that approximate the Shapley value
with small constant multiplicative error.

By modeling the power of the adversary with a violation budget,
we have analyzed the upper bounds on the sampling complexity
for protocols that achieve high probability or expected maximin security.
Under the random permutation paradigm,
we also give a lower bound on the sampling complexity
for certain permutation generation protocols.  Our results
have been verified empirically with real-world data.




\ignore{

Here are interesting open questions related to this work.



\begin{itemize}
	\item Under our adversarial setting, does there exist a distributed permutation generation protocol
	such that one (or a constant number of) P-sample is sufficient to guarantee expected maximin security?
	
	\item Even in the absence of an adversary, what are
	the tight upper and lower bounds on the sampling complexity
	of the utility function to achieve $\epsilon$-multiplicative error
	(for supermodular games) with high probability?
\end{itemize}
}

\bibliographystyle{plain}
\bibliography{ref}

\end{document}